\documentclass[11pt,a4paper]{article}
\usepackage{iftex}
\ifPDFTeX\usepackage[T1]{fontenc}\usepackage[utf8]{inputenc}\fi
\usepackage{lmodern}
\usepackage{amsmath,amssymb,geometry,graphicx,booktabs,array,longtable,microtype}
\usepackage[dvipsnames]{xcolor}
\usepackage[hidelinks]{hyperref}
\usepackage[numbers,square,comma]{natbib}
\newcounter{theorem}[section]
\renewcommand{\thetheorem}{\thesection.\arabic{theorem}}
\newenvironment{theorem}[1][]{\refstepcounter{theorem}\par\medskip
  \noindent\textbf{Theorem \thetheorem. #1}\ }{\par\medskip}
\newenvironment{lemma}[1][]{\refstepcounter{theorem}\par\medskip
  \noindent\textbf{Lemma \thetheorem. #1}\ }{\par\medskip}
\newenvironment{proof}{\par\noindent\textbf{Proof.}\ }{\hfill$\square$\par}
\newcommand{\A}{\mathcal A}
\newcommand{\T}{\mathbb T}
\newcommand{\Pp}{\mathbb P}
\newcommand{\R}{\mathbb R}
\newcommand{\e}{\mathrm e}
\newcommand{\dd}{\,\mathrm d}

\hypersetup{
  pdftitle={Computer-assisted global regularity across nonlinear families of three-dimensional periodic Navier--Stokes flows},
  pdfauthor={Jose Luis Lima de Jesus Silva},
  pdfsubject={Global regularity across nonlinear periodic Navier--Stokes neighbourhoods}
}

\title{Computer-assisted global regularity across nonlinear families of\\
three-dimensional periodic Navier--Stokes flows}
\author{Jose Luis Lima de Jesus Silva\\[4pt]
\small Federal University of Bahia, Department of Geophysics\\
\small Salvador, BA 40170-115, Brazil\\[3pt]
\small Grupo de Estudos e Aplica\c{c}\~ao de Intelig\^encia\\
\small Artificial em Geof\'{i}sica (GAIA)\\
\small Federal University of Bahia, Salvador, BA 40170-115, Brazil\\[3pt]
\small\href{mailto:jseluis.silva@gmail.com}{jseluis.silva@gmail.com}}
\date{}

\begin{document}
\maketitle

\begin{abstract}
Numerical simulations reveal how vortices stretch and transfer energy, but
establishing smooth evolution requires bounds that remain valid beyond the
simulated resolution. Here I develop a computer-assisted framework that
establishes global regularity for continuous families of three-dimensional
periodic Navier--Stokes flows. Its central construction combines finite
reference trajectories with a common error bound that covers an interval of
centre fields and infinitely many smooth perturbation modes. The method
retains the complete nonlinear residual before spectral truncation and
controls the evolution until viscous decay guarantees regularity for all
subsequent times. Applications to cyclic-shear, Arnold--Beltrami--Childress
and three-component Taylor--Green fields yield explicit perturbation radii
and include initial conditions outside the direct Fourier--Wiener smallness
criterion. A parameter-uniform extension covers a connected family of
non-Beltrami Taylor--Green centres without repeating the proof for individual
parameter values. An ensemble of 4,096 configurations, supplemented by 1,600
refinement trajectories and public turbulence data, connects the mathematical
observables to spectral transfer and vortex geometry. Matched neural-operator
experiments show that physics-informed training improves physical prediction,
while also revealing that these gains do not necessarily improve the
discovery of proof-limiting initial conditions. Together, these results
provide a reusable method for establishing regularity across prescribed flow
families and a quantitative setting for evaluating how learned predictions
can assist rigorous computation.
\end{abstract}

\section{Introduction}

A fluid can develop finer structures while losing kinetic energy.
Nonlinear transfer redistributes energy across scales, while vortex
stretching intensifies local rotation~\cite{TaylorGreen,Betchov,Meneveau}.
Energy decay alone therefore does not guarantee
smoothness~\cite{CheskidovShvydkoy}. Rigorous error bounds address this gap
for prescribed initial conditions by connecting finite approximations to
smooth evolution~\cite{CCRT}.

The distinction between weak existence and smooth evolution determines what
such a computation must prove. Leray's construction of global finite-energy weak solutions placed the
three-dimensional incompressible Navier--Stokes problem on a rigorous
foundation \cite{Leray}. Hopf subsequently extended the weak-solution method
to bounded domains \cite{Hopf}. Weak existence does not provide global
smoothness or uniqueness. Partial regularity restricts the possible singular
set without excluding it \cite{CKN}, and the corresponding global smoothness
and uniqueness questions remain open. For a specified family of initial
conditions, an approximate trajectory must therefore supply more than an
energy bound to establish global smoothness.

Classical conditional theory identifies quantities whose control prevents a
loss of regularity. The Ladyzhenskaya--Prodi--Serrin conditions
\cite{Ladyzhenskaya,Prodi,Serrin} connect space--time integrability to
smoothness, while Escauriaza, Seregin and \v Sver\'ak established the critical
endpoint result \cite{ESS}. Mild-solution theory developed from the
Fujita--Kato and Kato frameworks \cite{FujitaKato,Kato}, culminating in
critical-space well-posedness such as the Koch--Tataru theorem
\cite{KochTataru}. Profile decompositions and dissipation-range criteria
further isolate concentration and spectral mechanisms compatible with
critical scaling \cite{GallagherKochPlanchon,CheskidovShvydkoy}. These results
give continuation criteria. They do not show that every large smooth datum
satisfies one.

Several global theorems instead exploit structure. Large solutions may remain
regular under anisotropic or perturbative hypotheses
\cite{PonceRackeSiderisTiti,LeiLin}. Gevrey smoothing quantifies the immediate
analyticity generated by viscosity \cite{FoiasTemam}, and normal-form analysis
organises long-time interactions \cite{FoiasSaut}. Vorticity-direction
coherence can deplete stretching \cite{ConstantinFefferman}, whereas the
Beale--Kato--Majda mechanism identifies the corresponding inviscid obstruction
\cite{BKM}. Most directly relevant here, closeness to a Laplacian eigenfunction
gives a scale-critical Navier--Stokes regularity criterion \cite{Miller}. The
optimal Laplacian residual used below belongs to that established line. The
objective here is to control it uniformly across the stated
infinite-dimensional neighbourhoods.

The physical literature explains why such control cannot be inferred from
energy decay alone. Betchov related strain and mean vortex stretching in
homogeneous flow \cite{Betchov}. Direct numerical studies established
preferential alignment between vorticity and strain eigenvectors
\cite{Ashurst,Meneveau}. Objective vortex diagnostics based on velocity-gradient
invariants and local rotation distinguish tubes from generic shear
\cite{ChongPerryCantwell,JeongHussain,Chakraborty}. The original Taylor--Green
study, early pseudospectral calculations, and later high-resolution direct
numerical simulations (DNS) established how nonlinear transfer generates
progressively smaller scales
\cite{TaylorGreen,OrszagPatterson,Kaneda,Ishihara}. Variational searches find
states with extreme instantaneous enstrophy production but also show that
instantaneous saturation need not persist dynamically
\cite{LuDoering,AyalaProtas}. Modern high-Reynolds-number computations reveal
strongly intermittent gradients organised around vortex tubes
\cite{Buaria}. These findings motivate the stretching, residual, alignment and
spectral-tail diagnostics evaluated here, but none replaces an
analytic inequality.

Rigorous computation provides a complementary route. It proves that an
approximate trajectory remains within a quantitatively controlled neighbourhood
of an exact strong solution. Chernyshenko, Constantin, Robinson and Titi derived an
a posteriori robustness test for three-dimensional Navier--Stokes
approximations \cite{CCRT}. Dashti and Robinson treated numerical
approximations in a related strong-solution framework \cite{DR}, while
Robinson and Sadowski examined verification over bounded sets and the
termination issue \cite{RS}. Morosi and Pizzocchero developed Sobolev control
inequalities for approximate Euler and Navier--Stokes solutions and their
smooth variants \cite{MorosiPizzocchero2012,MorosiPizzocchero2015}.

High-order approximate solutions already support quantitative global-existence
results for structured periodic data. Morosi, Pernici and Pizzocchero combined
Reynolds expansions, symbolic Fourier calculations, and Sobolev a posteriori
control to obtain explicit sufficient Reynolds-number thresholds for
Behr--Ne\v{c}as--Wu, Taylor--Green, and Kida--Murakami initial fields
\cite{MorosiPerniciPizzocchero2015}. The present construction shares this
approximate-solution framework. Its specific contribution is the explicit
parameter-uniform construction and rigorous comparison records that cover the
stated non-Beltrami coefficient interval and smooth Fourier--Wiener
perturbation balls independently of the Galerkin cutoff. The parameter-uniform
bounds extend this established methodology to the specific families and
radii proved below.

Large public simulations make the physical part of that process reproducible.
Database architectures introduced by Perlman and collaborators and the Johns
Hopkins Turbulence Database (JHTDB) of Li and collaborators expose velocity and
derivative fields from canonical DNS \cite{Perlman,JHTDB}. Such data support
independent tests of strain alignment, intermittency and signed production.
Because forced DNS and finite samples address a different evidentiary question
from an unforced continuum of initial conditions, the analysis separates public-data diagnostics,
floating-point Galerkin experiments, and exact-arithmetic verification into
distinct evidentiary layers.

Neural operators provide a complementary way to search that numerical
configuration space. A Fourier neural operator (FNO) learns a map between
functions through Fourier-parameterised integral layers \cite{FNO}. A
physics-informed neural operator (PINO) augments field data with losses derived
from the governing partial differential equation (PDE) \cite{PINO}.
Physics-informed neural networks (PINNs) formulate the approximation of
individual solutions as an optimisation problem constrained by
the governing differential equations~\cite{PINN}. In singularity
research, this approach allows a candidate self-similar profile
and its scaling parameters to be inferred together, replacing
direct computation near a potentially singular time with the
search for a regular profile in rescaled coordinates. The accuracy
of that profile is essential to subsequent residual estimates
and stability analysis. Wang and collaborators addressed this computational
challenge through high-precision
PINNs on unbounded domains, combining tailored spatial sampling,
hard constraints on symmetry and nondegeneracy, and self-scaled
quasi-Newton optimisation. Their applications to the Burgers and
Boussinesq equations examine how the representation of the profile,
its far-field behaviour, and the optimisation procedure determine
the accuracy attainable by neural approximations~\cite{WangEtAlHighPrecisionPINNs2025}.

Active selection uses such approximations to allocate additional solver
evaluations. Output-weighted active learning has paired neural operators with targeted
sampling of rare extremes \cite{PickeringExtreme}, while solver-in-the-loop
benchmarks show that active selection can reduce the data required by neural
partial-differential-equation solvers \cite{AL4PDE}.

For the periodic Navier--Stokes families considered in this work, the
computational objective is to resolve how velocity evolution
and regularity-relevant observables vary across initial
conditions. FNO and PINO learn
this initial-field-to-evolution map, allowing a single trained
operator to evaluate many candidate fields rather than fitting
a separate network to each solution. In the physics-informed
model, the training objective incorporates the discrete
Navier--Stokes residual, incompressibility, energy balance,
and spectral penalties alongside the field data. Here, the acquisition
objective targets a five-checkpoint residual-action proxy and high-frequency tail that enter the
numerical regularity diagnostics. The numerical experiment recomputes every selected field using the
spectral Galerkin solver, based on Fourier spectral
discretisation~\cite{OrszagPatterson}. The projection and
time-stepping choices are specified below, and the neural
model acts as a search mechanism rather than a mathematical
premise.

This work makes three connected contributions. First, it proves
global regularity on explicit infinite-dimensional neighbourhoods
of genuinely three-dimensional periodic flows, with quantitative
perturbation radii and terminal times. Second, it establishes a
centre-independent acceptance theorem that applies one finite-path comparison
architecture to cyclic-shear, equal-coefficient Arnold--Beltrami--Childress
(ABC), three-component Taylor--Green (TG3), and a connected interval of TG3
coefficient vectors. Third, it connects the proof observables to a dense
configuration-space study, high-resolution boundary refinement,
three-dimensional vortex geometry, public DNS, and physics-informed operator
search. The error-control
method combines finite signed reference paths, exact evaluation
of the complete nonlinear Fourier residual before Galerkin
projection, parameter-uniform Bernstein enclosures, and a common
piecewise-affine comparison verified using exact rational
arithmetic. Terminal Wiener continuation and cutoff-independent
estimates establish global smoothness for the stated families,
including smooth perturbations with infinitely many Fourier modes. Together, these ingredients provide a reusable family-level
regularity method.

To make these family-level guarantees computationally verifiable,
the framework separates the centre-independent analysis from the finite
Fourier calculations required for each application. The signed approximation, a posteriori comparison, cutoff
treatment, and continuum passage provide the analytic chain. Exact-arithmetic validation certifies the finite comparison
inequalities, while cutoff-independent estimates justify
continuation and passage to the full partial differential equation. The Results
connect the resulting neighbourhoods to standard regularity criteria,
configuration-space ensembles, three-dimensional field geometry, public DNS,
and operator-guided search. The appendices supply the underlying Fourier,
energy, comparison, continuation, compactness, neural-operator, and statistical
derivations.

\section{Mathematical methodology}
\label{sec:mathmethods}

Finite Fourier reference paths provide uniform bounds that establish global
regularity for the stated neighbourhoods. Section~\ref{sec:setting} defines
the problem and the cyclic-shear theorem, and Section~\ref{sec:acceptance}
states the centre-independent comparison principle.
Sections~\ref{sec:signedpath} and \ref{sec:comparison} construct and bound the
cyclic-shear path. Section~\ref{sec:continuation} completes continuation and
cutoff removal, and Sections~\ref{sec:abc}, \ref{sec:tg3}, and
\ref{sec:coefficientfamily} apply the principle to the other centre families.
Appendix~\ref{app:expandedtransitions} expands the transitions used here.
In particular, Appendix~\ref{app:pathexpansion} retains every ordered product
in the signed path, Appendix~\ref{app:momentexpansion} differentiates the
moving spectral centre, and Appendix~\ref{app:smoothnessexpansion} derives
the all-order regularity estimate and the finite-mode compactness limit.

\subsection{Setting and theorem}
\label{sec:setting}

The first application proves global smoothness near a cyclic-shear field.
The formulation below fixes the domain, projections and norms before stating
the neighbourhood and its residual-action bound.
Let $\T^3=(\mathbb R/2\pi\mathbb Z)^3$ with normalised spatial measure.
Let $u(t,x)\in\mathbb R^3$ denote velocity, $p(t,x)\in\mathbb R$ pressure
per unit constant density, and $\nu>0$ kinematic viscosity. The unforced
incompressible initial-value problem is
\begin{align}
 \partial_tu+(u\cdot\nabla)u+\nabla p&=\nu\Delta u,
 &\nabla\cdot u&=0,
 &u(0,x)&=u_0(x),\qquad x\in\T^3.
 \label{eq:physicalproblem}
\end{align}
The fields are periodic in each coordinate. The condition $\int_{\T^3}p=0$ fixes the
pressure gauge, and the initial data satisfy $\int_{\T^3}u_0=0$.
The theorems establish a unique smooth solution for each smooth datum in
the families specified below, not for unrestricted initial data.
Normalised integrals mean $(2\pi)^{-3}$ times ordinary volume
integrals. Appendix~\ref{app:problemexpansion} derives pressure elimination,
preservation of the mean, and the finite-dimensional equation from
\eqref{eq:physicalproblem}, and fixes the complex inner-product convention.
Unless stated otherwise, the viscosity is $\nu=1$. The operator $A=-\Delta$
acts componentwise, and $\mathbb P$ denotes the orthogonal projection onto
divergence-free vector fields. Let
$P_N$ retain the nonzero Fourier modes satisfying $|k|_\infty\leq N$. The
Galerkin equation is
\begin{align}
 &\partial_tu_N+Au_N+P_N\Pp[(u_N\cdot\nabla)u_N]=0,
 \qquad A=-\Delta,\qquad u_N(0)=P_Nu_0.
 \label{eq:galerkin}
\end{align}
Every $u_N$ exists for all time as a finite-dimensional viscous ordinary
differential equation (ODE) whose
quadratic nonlinearity preserves kinetic energy. The issue is uniform control
as $N\to\infty$.
The cancellation yielding its energy identity and the resulting global ODE
continuation are derived line by line in Appendix~\ref{app:foundations}.

For a periodic field $f$ and integer wave vector $k\in\mathbb Z^3$, let
$\widehat f(k)$ denote its Fourier coefficient, with imaginary unit $i$.
The Fourier convention is
\begin{align}
 &f(x)=\sum_{k\in\mathbb Z^3}\widehat f(k)\e^{ik\cdot x},
 \qquad
 \widehat f(k)=\frac1{(2\pi)^3}\int_{\T^3}f(x)\e^{-ik\cdot x}\dd x.
 \label{eq:detail002}
\end{align}
For a real solenoidal (divergence-free) field,
$\widehat f(-k)=\overline{\widehat f(k)}$ and
$k\cdot\widehat f(k)=0$. Parseval's identity therefore has no volume factor:
$\|f\|_2^2=\sum_k|\widehat f(k)|_2^2$. With $I$ the identity matrix and $(k\otimes k)_{rs}=k_rk_s$, the Leray
multiplier is
\begin{align}
 &\Pp_k=I-\frac{k\otimes k}{|k|^2},\qquad k\ne0,
 \label{eq:detail003}
\end{align}
and the bilinear Fourier coefficient used in every exact computation is
\begin{align}
 &\widehat{B(f,g)}(k)
 =i\Pp_k\sum_{p+q=k}(\widehat f(p)\cdot q)\widehat g(q).
 \label{eq:fourierB}
\end{align}
Appendix~\ref{app:foundations} derives \eqref{eq:fourierB} from the Fourier
series, explains the Leray multiplier, and proves the Wiener convolution
estimate used below.
For later compactness and continuation estimates, the Sobolev norm is
\begin{align}
 \|f\|_{H^s}^2&=\sum_k(1+|k|^2)^s|\widehat f(k)|_2^2,
 &\|f\|_{\dot H^s}^2&=\sum_{k\ne0}|k|^{2s}|\widehat f(k)|_2^2.
 \label{eq:sobolevdefinition}
\end{align}
The homogeneous and inhomogeneous norms are equivalent on mean-zero fields
at each fixed nonnegative order $s$ because $|k|\geq1$ on every active mode.

For $j\geq0$, define the Fourier--Wiener norm
\begin{align}
 &\|f\|_{\A^j}=\sum_{k\neq0}|k|^j|\widehat f(k)|_2.
 \label{eq:detail006}
\end{align}
Put
\begin{align}
 &U=(\sin y,\sin z,\sin x),
 \qquad
 \mathcal C=\{aU+h:0\leq a\leq7/10,
 \ \|h\|_{\A^2}\leq1/100\},
 \label{eq:detail007}
\end{align}
where $h$ is smooth, real, mean zero, and solenoidal.

The optimal Laplacian residual measures the distance to a single spectral
sphere by allowing its centre $\lambda$ to vary:
\begin{align}
 &R_N(t)=\inf_{\lambda\in\mathbb R}
 \|(A-\lambda)u_N(t)\|_2^2.
 \label{eq:detail008}
\end{align}
The convention is $R_N=0$ when $u_N=0$. In that case the minimizing value of
$\lambda$ is not unique.

\begin{lemma}[The centre field is genuinely three-dimensional]
There is no nonzero direction $\ell\in\R^3$ for which
$\ell\cdot\nabla U\equiv0$. The nonlinearity also satisfies $B(U,U)\ne0$, so
$U$ is not a
stationary Euler or Beltrami control disguised as a three-component field.
\end{lemma}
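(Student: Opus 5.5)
Both claims are short direct computations, and I would do them in order.

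\emph{No translation-invariant direction.} Write $U=(\sin y,\sin z,\sin x)$. Its gradient matrix $(\partial_jU_i)$ has only three nonzero entries: $\partial_yU_1=\cos y$, $\partial_zU_2=\cos z$, and $\partial_xU_3=\cos x$. For $\ell=(\ell_1,\ell_2,\ell_3)$ this gives
\[
\ell\cdot\nabla U=(\ell_2\cos y,\ \ell_3\cos z,\ \ell_1\cos x).
\]
If this vanishes identically, evaluate at $x=y=z=0$ to obtain $\ell_1=\ell_2=\ell_3=0$. Hence no nonzero $\ell$ exists. This step is immediate.

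\emph{Nonvanishing nonlinearity.} I would first compute $(U\cdot\nabla)U$ directly:
\[
(U\cdot\nabla)U=\bigl(\sin z\cos y,\ \sin x\cos z,\ \sin y\cos x\bigr).
\]
Then I would show that its Leray projection is nonzero. One route is to check that this field is not a gradient. Its first component has $\partial_z$-derivative $\cos y\cos z$, while its third component has $\partial_x$-derivative $-\sin y\sin x$. These differ, so the curl is nonzero and the field is not $\nabla\phi$. Consequently $\Pp[(U\cdot\nabla)U]\neq0$, which gives $B(U,U)\neq0$ up to the sign and convention in \eqref{eq:fourierB}.

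A cleaner alternative uses \eqref{eq:fourierB} on a single output mode. Take $k=(0,1,1)$, obtained from $\sin z\cos y$, i.e.\ from $p=(0,0,1)$ and $q=(0,1,0)$ with their conjugates. Expanding in exponentials, the first component of $(U\cdot\nabla)U$ carries a coefficient $c\,\e^{i(y+z)}$ with $c\neq0$. Since $k\cdot e_1=0$, the vector $e_1$ is orthogonal to $k$, so $\Pp_k$ leaves $c\,e_1$ unchanged and the projected coefficient is nonzero. I would also check that the other components contribute nothing at this $k$. They contain only $\e^{\pm i(x\pm z)}$ and $\e^{\pm i(x\pm y)}$ products, so this holds.

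\emph{Not Beltrami or stationary Euler.} As a remark, I would note that $\nabla\times U=-(\cos z,\cos x,\cos y)$, which is not parallel to $U$. So $U$ is not Beltrami, consistent with $B(U,U)\ne0$, since Beltrami fields have gradient nonlinearity. Only the statement $B(U,U)\ne0$ is actually required.

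The only point needing care is the projection: the nonlinearity must not be absorbed by the pressure. The curl computation, or alternatively the transversality of the chosen Fourier coefficient, settles it.
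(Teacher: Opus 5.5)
Your proposal is correct and differs from the paper only in the key observation used at each step. For the translation claim, the paper argues on the Fourier side: $\ell\cdot\nabla U\equiv0$ forces $\ell\cdot k=0$ for every active mode $k\in\{\pm e_1,\pm e_2,\pm e_3\}$, hence $\ell=0$. Your pointwise formula $\ell\cdot\nabla U=(\ell_2\cos y,\ell_3\cos z,\ell_1\cos x)$, evaluated at the origin, is equivalent and equally short.

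For the nonlinearity, the paper observes that $(U\cdot\nabla)U=(\sin z\cos y,\sin x\cos z,\sin y\cos x)$ is already divergence free, since each component is independent of its own coordinate, and it is mean zero. The Leray projection therefore acts as the identity, and $B(U,U)=(U\cdot\nabla)U\neq0$. Both of your routes correctly show $\Pp[(U\cdot\nabla)U]\neq0$: the curl test, $\partial_zf_1-\partial_xf_3=\cos y\cos z+\sin x\sin y\not\equiv0$, and the single-mode check at $k=(0,1,1)$, with coefficient $(-i/4,0,0)$ transverse to $k$.

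Your routes establish only nonvanishing, whereas the paper's divergence observation identifies $B(U,U)$ exactly. That exact formula for $V$ is reused in \eqref{eq:detail019}, in $AV=2V$, and in $\|V\|_{\A^0}=3$ for the signed path. Your curl criterion, in turn, would still apply to a centre whose advective term is not already solenoidal.
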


\begin{proof}
The Fourier support of $U$ contains all six modes
$\{\pm e_1,\pm e_2,\pm e_3\}$. Translation invariance in direction $\ell$
would imply $\ell\cdot k=0$ for every active $k$, and hence $\ell=0$.
Direct differentiation gives
\begin{align}
 &(U\cdot\nabla)U
 =(\sin z\cos y,\sin x\cos z,\sin y\cos x),
 \label{eq:detail009}
\end{align}
which is nonzero and, in fact, already divergence free. Thus the centre uses
all three spatial directions and has nontrivial quadratic evolution. The
class $\mathcal C$ also contains special or even zero members when $a=0$.
The claim is that it is a neighbourhood of a genuine three-dimensional centre,
not that every member has the same symmetry type.
\end{proof}

\begin{theorem}[Regularity neighbourhood]
\label{thm:main}
For every $u_0\in\mathcal C$,
\begin{align}
 &\sup_{N\geq1}\int_0^\infty R_N(t)^{2/3}\,dt<22000.
 \label{eq:action}
\end{align}
Consequently the corresponding periodic Navier--Stokes solution is global
and smooth.
\end{theorem}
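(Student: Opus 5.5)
The proof will build a finite signed reference path $v(t)$ for the centre family $aU$ and control the Galerkin error $w_N=u_N-v$ uniformly in $N$, in $a\in[0,7/10]$ and in the perturbation $h$. The bound \eqref{eq:action} then follows from $R_N\le\|(A-\lambda(t))u_N\|_2^2$ for a convenient choice of $\lambda(t)$. The conclusion follows from the scale-critical eigenfunction-closeness criterion of \cite{Miller}, together with cutoff-independent compactness.

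\textbf{Reference path.} Since $AU=U$, the natural start is $v(0)=aU$ with the moving spectral centre $\lambda\approx1$. The linear evolution $\e^{-t}aU$ is driven off the sphere $|k|=1$ by $B(U,U)$. By \eqref{eq:detail009}, $B(U,U)$ is supported on modes with $|k|^2=2$. I would therefore construct $v$ as a finite polynomial-in-$a$, exponential-in-$t$ expansion in the iterated products $B(\cdot,\cdot)$, keeping every ordered product up to a fixed order. The Fourier support stays finite, so the complete residual
\[
r=\partial_tv+Av+\Pp B(v,v)
\]
can be evaluated exactly \emph{before} applying $P_N$. This makes the estimate independent of $N$: the projection $P_N$ only removes modes, and the residual tail beyond $N$ is absorbed into a commutator term that is also bounded in $\A^j$. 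The path is taken on a finite window $[0,T]$, with $T$ chosen so that $\e^{-T}$ makes the terminal Wiener norm small.

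\textbf{Comparison.} For $w=w_N$, the equation is $\partial_tw+Aw+P_N\Pp[B(v,w)+B(w,v)+B(w,w)]=-P_Nr$. The Wiener convolution estimate $\|B(f,g)\|_{\A^j}\lesssim\|f\|_{\A^{j}}\|g\|_{\A^{j+1}}$ from Appendix~\ref{app:foundations}, combined with the parabolic gain of $A$, gives a scalar differential inequality of the form
\[
\phi'\le-\phi+\alpha(t)\phi+\beta\phi^2+\rho(t),\qquad \phi(0)\le\|h\|_{\A^2}\le 1/100,
\]
with $\phi=\|w\|_{\A^2}$ or an $\A^1$/$\A^2$ mixture. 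Here $\alpha$ and $\rho$ are bounded by Bernstein enclosures in $a\in[0,7/10]$. The proof then exhibits a piecewise-affine supersolution $\Phi(t)$ on $[0,T]$ and checks the strict comparison inequalities at the finitely many breakpoints in exact rational arithmetic. This yields $\phi\le\Phi$ for all $N$, $a$ and $h$ simultaneously.

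\textbf{Terminal continuation and action.} At time $T$, $\|u_N(T)\|_{\A^1}$ lies below the Fourier--Wiener smallness threshold ($<1$ for $\nu=1$). Wiener continuation then gives exponential decay $\|u_N(t)\|_{\A^2}\le C\e^{-(t-T)}$ for all $t\ge T$, uniformly in $N$. For the action integral, on $[0,T]$ I would take $\lambda=\lambda_v(t)$, the moving centre from Appendix~\ref{app:momentexpansion}, so that
\[
R_N\le 2\|(A-\lambda_v)v\|_2^2+2\|(A-\lambda_v)w\|_2^2.
\]
Both terms are bounded by the enclosures and $\Phi$, using $\|\cdot\|_2\le\|\cdot\|_{\A^0}$ and $|\lambda_v|$ bounded. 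On $[T,\infty)$ I take $\lambda=0$, so that $R_N\le\|u_N\|_{\A^2}^2$ decays exponentially. Summing the rational upper bounds gives a constant below $22000$. The smoothness claim follows from uniform $H^s$ bounds for all $s$ (Appendix~\ref{app:smoothnessexpansion}), from Aubin--Lions compactness as $N\to\infty$, and from weak--strong uniqueness; alternatively it follows directly from the criterion of \cite{Miller} applied with the finite action.

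\textbf{Main obstacle.} The hard step is the comparison on $[0,T]$ at the large end $a=7/10$. There the centre lies well outside direct smallness, so $\alpha(t)$ is not small and the $\beta\phi^2$ term can close only if the reference path is accurate enough that $\rho$ is tiny. I would first raise the expansion order of $v$ and refine the breakpoints and Bernstein subdivisions of $[0,7/10]$ until the rational checks pass. If they do not, I would shrink the time window by restarting the path from an intermediate enclosure.
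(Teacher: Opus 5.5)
Your architecture matches the paper's: a finite signed path with its complete unprojected residual, a cutoff-uniform scalar comparison certified by an exact piecewise-affine supersolution, a terminal Wiener barrier, then compactness and uniqueness. Your bridge variant, comparing in $\A^2$ and bounding $R_N\le2\|(A-\lambda_v)v\|_2^2+2\|(A-\lambda_v)w\|_2^2$ with a moving centre, is a legitimate and possibly sharper alternative to the paper's $\A^0$ comparison.

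The gap is the step after $T$, which is where almost all of the constant $22000$ comes from. The Wiener barrier controls only $W=\|u_N\|_{\A^0}$; the threshold $1$ is an $\A^0$ threshold. From $D^+W\le-(1-W)\|u_N\|_{\A^2}$ you get $W\le c$ and logistic decay of $W$, but nothing about $\|u_N\|_{\A^2}$. If you try to propagate $\A^2$ directly, the convolution estimate gives $D^+\|u_N\|_{\A^2}+\|u_N\|_{\A^4}\le3\|u_N\|_{\A^1}\|u_N\|_{\A^2}+W\|u_N\|_{\A^3}\le4W\|u_N\|_{\A^3}$. This closes as a barrier only once $W<1/4$. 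For $1/4\le W<1$ you only get $D^+\|u_N\|_{\A^2}\le4W^2\|u_N\|_{\A^2}$, a Gronwall factor of order $\e^{14}$ when $W(T)$ is near $1$. So ``$\|u_N\|_{\A^2}\le C\e^{-(t-T)}$ uniformly in $N$'' comes with no usable $C$, and ``summing the rational bounds gives a constant below $22000$'' has nothing behind it. You could patch this by continuing the comparison until $\|u_N(T)\|_{\A^0}<1/4$, but that is a different verification, and its constant would still have to be computed.

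The missing idea is the enstrophy inequality, which needs only $\A^0$ information. Pairing the Galerkin equation with $Au_N$ and using $\|u_N\|_\infty\le\|u_N\|_{\A^0}$ gives $X_N'+Z_N\le\|u_N\|_{\A^0}^2X_N$.
\begin{itemize}
\item On $[0,T_*]$, $X_N(0)<1$ and $\int_0^{T_*}\|u_N\|_{\A^0}^2<(17/8)^2(27/16)<8$ give $X_N(T_*)<\e^8$ and $\int_0^{T_*}Z_N<\e^8$. Then $R_N\le Z_N$ and H\"older bound the bridge action by $418$.
\item After $T_*$, the barrier $c=99/100$ turns the same inequality into $X_N'+dZ_N\le0$ with $d=1-c^2$. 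The weight $\e^{d(t-T_*)/2}$ plus H\"older gives a tail below $2^{2/3}X_N(T_*)^{2/3}/d<418/d$.
\end{itemize}
Together these give $418+418/d<22000$. This is also why the paper can compare in $\A^0$ with a cubic inequality and never needs pointwise $H^2$ control of the error.

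Separately, your $N$-uniformity claim fails at $N=1$. The path occupies $|k|_\infty\le2$, so $P_1v\ne v$ and the error equation you wrote does not hold there. For $N\ge2$ no commutator is needed, since $P_Nv=v$ and $P_N$ is simply contractive on $\A^0$. The paper treats $N=1$ directly: the variance bound with $\Lambda_1=3$ gives $R_1\le E_1$, and $E_1<\e^{-2t}$, so the action is below $3/4$.
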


The perturbation may have infinitely many Fourier modes, so Theorem
\ref{thm:main} covers an infinite-dimensional neighbourhood of the prescribed
centre rather than only a finite set of Fourier coefficients.

The proof has four dependencies. First, the signed path in
Equation~\eqref{eq:path} supplies an explicit approximate solution and a
complete residual. Second, Equations~\eqref{eq:dini0}--\eqref{eq:dini}
convert that residual into a scalar comparison uniform in the Galerkin cutoff.
Third, the exact slab induction reaches the strict terminal Wiener bound
\eqref{eq:terminal}. Fourth, the exceptional-cutoff, continuation, and
compactness arguments convert the finite-time comparison into the all-time
action bound and a smooth PDE solution. Each
dependency is proved below and
expanded in the corresponding appendix.

The residual measures spectral spread through three quadratic moments.
Define $E_N=\|u_N\|_2^2$, $X_N=\langle Au_N,u_N\rangle$, and
$Z_N=\|Au_N\|_2^2$, so that $E_N$ is twice the kinetic energy and $X_N$
is the squared vorticity norm. If $E_N>0$, minimization over $\lambda$ gives
\begin{align}
 &\mu_N=\frac{X_N}{E_N},\qquad
 R_N=Z_N-\frac{X_N^2}{E_N}.
 \label{eq:Rmoments}
\end{align}
If $\rho_k=|\widehat u_N(k)|^2/E_N$, then
\begin{align}
 &R_N=E_N\operatorname{Var}_{\rho}(|k|^2).
 \label{eq:spectralvariance}
\end{align}
The weights $\rho_k$ and the quotient $\mu_N=X_N/E_N$ are defined only when
$E_N>0$, whereas $R_N$ extends continuously to the zero field by the
convention above. Thus $R_N$ measures the energetic spread of the field across Laplacian
eigenvalues. It vanishes exactly when the active energy lies on one spectral
sphere. This interpretation is used in the exceptional-cutoff estimate and
in the physical diagnostics of Section~\ref{sec:physics}.
The minimization in \eqref{eq:Rmoments}, its probabilistic rewriting in
\eqref{eq:spectralvariance}, and the finite-spectrum variance bound used for
$N=1$ are proved in Appendix~\ref{app:observable}.

\subsection{Centre-independent finite-path acceptance}
\label{sec:acceptance}

The three verified centres below are instances of one analytic acceptance
principle. The principle separates the universal comparison argument from
the centre-specific Fourier calculations. Let $\Theta$ denote a parameter set and let
$u_{0,\theta}$ be smooth, real, mean-zero, solenoidal data for
$\theta\in\Theta$.

\begin{theorem}[Finite-reference-path acceptance]
\label{thm:acceptance}
Fix $\nu>0$ and $T>0$. Suppose a real, mean-zero, solenoidal finite-Fourier
path $v_\theta\in C^1([0,T])$, supported in $|k|_\infty\leq N_*$, has full residual
\begin{align}
 &r_\theta=\partial_tv_\theta+\nu Av_\theta+B(v_\theta,v_\theta)
 \label{eq:detail013}
\end{align}
and uniform nonnegative, bounded, piecewise-continuous bounds
\begin{align}
 &\|v_\theta\|_{\A^0}\leq A_0,\qquad
 \|v_\theta\|_{\A^1}\leq A_1,\qquad
 \|r_\theta\|_{\A^0}\leq d.
 \label{eq:detail014}
\end{align}
Assume, for every $N\geq N_*$, cutoff compatibility
$P_Nv_\theta=v_\theta$, projected residual $P_Nr_\theta$, and initial
mismatch at most $\varepsilon$ in $\A^0$. If a nonnegative absolutely
continuous function $y$ satisfies $y(0)\geq\varepsilon$ and
\begin{align}
 &y'\geq d+\left(A_1-\frac\nu2+\frac{A_0^2}{2\nu}\right)y
       +\frac{A_0}{\nu}y^2+\frac1{2\nu}y^3
 \quad\text{a.e. on }[0,T],
 \label{eq:detail015}
\end{align}
and if $A_0(T)+y(T)\leq\eta<\nu$, then
\begin{align}
 &\|u_N(t)-v_\theta(t)\|_{\A^0}\leq y(t),\qquad 0\leq t\leq T.
 \label{eq:detail016}
\end{align}
If, in addition, the initial $H^1$ norms are uniformly bounded,
$\int_0^T(A_0+y)^2\dd t<\infty$, and every cutoff $N<N_*$ is covered by a
separate global energy argument, then every $u_{0,\theta}$ generates a unique
global smooth periodic solution.
\end{theorem}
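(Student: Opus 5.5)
Write $w=u_N-v_\theta$ for $N\geq N_*$. By cutoff compatibility $P_Nv_\theta=v_\theta$, so subtracting the defining equation of $r_\theta$, projected by $P_N$, from the Galerkin equation \eqref{eq:galerkin} (with viscosity $\nu$) gives
\begin{align*}
 \partial_tw+\nu Aw+P_N\bigl[B(w,w)+B(v_\theta,w)+B(w,v_\theta)\bigr]=-P_Nr_\theta .
\end{align*}
The plan is to estimate $\|w\|_{\A^0}$ mode by mode in Fourier space, obtaining a differential inequality dominated by the right side of \eqref{eq:detail015}, and then close with a continuity (barrier) argument against $y$.

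\textbf{Differential inequality.} For each $k$, $\tfrac{d}{dt}|\widehat w(k)|\leq-\nu|k|^2|\widehat w(k)|+|\widehat{F}(k)|$ holds a.e.; to avoid the singularity at $\widehat w(k)=0$ we can work with $\sqrt{|\widehat w|^2+\delta^2}$ and let $\delta\to0$. Summing over the finitely many modes of $P_N$ yields an upper Dini derivative estimate for $\|w\|_{\A^0}$. We then bound the terms using the Wiener convolution estimate from Appendix~\ref{app:foundations}, $\|B(f,g)\|_{\A^0}\leq\|f\|_{\A^0}\|g\|_{\A^1}$, with $|\Pp_k|\leq1$:
\begin{itemize}
\item the residual contributes at most $\|r_\theta\|_{\A^0}\leq d$;
\item $B(w,v_\theta)$ contributes at most $A_1\|w\|_{\A^0}$;
\item $B(v_\theta,w)$ and $B(w,w)$ involve $\|w\|_{\A^1}$, which must be absorbed by the dissipation $-\nu\sum|k|^2|\widehat w(k)|$.
\end{itemize}
For the absorption, use $|k|\leq\tfrac12(\alpha|k|^2+\alpha^{-1})$ and Young's inequality. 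With $a=\|w\|_{\A^0}$ this gives
\begin{align*}
 A_0\|w\|_{\A^1}\leq\tfrac{\nu}{4}\textstyle\sum|k|^2|\widehat w|+\tfrac{A_0^2}{\nu}a,
 \qquad
 a\|w\|_{\A^1}\leq\tfrac{\nu}{4}\textstyle\sum|k|^2|\widehat w|+\tfrac{1}{\nu}a^3 .
\end{align*}
The leftover dissipation $-\tfrac{\nu}{2}\sum|k|^2|\widehat w|\leq-\tfrac\nu2 a$ (since $|k|\geq1$) produces the $-\nu/2$ term. The cross term $A_0a^2/\nu$ arises from splitting $(A_0+a)\|w\|_{\A^1}$ jointly, i.e.\ $(A_0+a)^2/(2\nu)$. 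I would carry out this bookkeeping so that the constants land exactly on $A_1-\nu/2+A_0^2/(2\nu)$, $A_0/\nu$ and $1/(2\nu)$. Here $a\leq y\leq\eta-A_0<\nu$ is used to keep the dissipative fraction positive. The result is $D^+a\leq G(t,a)$, with $G$ the right side of \eqref{eq:detail015}.

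\textbf{Comparison.} The function $G$ is increasing in $a\geq0$ and locally Lipschitz, and $a(0)\leq\varepsilon\leq y(0)$. For $\epsilon'>0$, let $t^*$ be the first time with $a>y+\epsilon'$. At $t^*$ we have $D^+a\leq G(a)$, whereas $y'\geq G(y)$; since $y$ is only absolutely continuous, compare integrated forms, using a Gronwall step with the Lipschitz constant of $G$ on the bounded range. This gives a contradiction. Letting $\epsilon'\to0$ yields \eqref{eq:detail016}, uniformly in $N\geq N_*$ and $\theta$. The hypothesis $A_0(T)+y(T)\leq\eta<\nu$ is what keeps the absorption valid. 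Verifying that the piecewise-continuous bounds $A_0,A_1,d$ cause no trouble is routine.

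\textbf{Continuation and regularity.} At time $T$ we have $\|u_N(T)\|_{\A^0}\leq\eta<\nu$. The Wiener smallness criterion then gives the global decaying bound $\tfrac{d}{dt}\|u_N\|_{\A^0}\leq(-\nu+\|u_N\|_{\A^0})\|u_N\|_{\A^0}$ for $\A^1$-type norms. I would run this as the same mode-wise argument with $v=0$, obtaining exponential decay on $[T,\infty)$ uniformly in $N$. On $[0,T]$, the bound $\|u_N\|_{\A^0}\leq A_0+y$, combined with $\int_0^T(A_0+y)^2<\infty$, controls the $H^1$ energy inequality $\tfrac{d}{dt}\|\nabla u_N\|^2+\nu\|Au_N\|^2\lesssim\|u_N\|_{\A^0}^2\|\nabla u_N\|^2$ via Gronwall. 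Together with the uniform initial $H^1$ bounds, this gives uniform $L^\infty H^1\cap L^2H^2$ bounds for $N\geq N_*$; cutoffs $N<N_*$ are covered by the assumed energy argument. Aubin--Lions compactness then yields a strong solution on $[0,\infty)$. Strong solutions are unique and are smooth for $t>0$ by parabolic bootstrapping, and smoothness up to $t=0$ follows from the smoothness of $u_{0,\theta}$.

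\textbf{Main obstacle.} I expect the main obstacle to be the precise absorption constants in the first step, which must match \eqref{eq:detail015} exactly. If my splitting gives slightly different constants, I would rebalance the Young parameters rather than change the stated inequality.
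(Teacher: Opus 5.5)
Your proposal is correct and follows the paper's proof essentially step for step: error equation, Wiener convolution bound, absorption of the $\A^1$ terms into the dissipation, Lipschitz--Gronwall comparison, terminal Wiener barrier, bridge Gronwall for $\|\nabla u_N\|_2^2$, Aubin--Lions compactness and uniqueness. The joint modewise Young bound $(A_0+a)|k|\le\frac\nu2|k|^2+\frac{(A_0+a)^2}{2\nu}$ gives exactly the stated constants, matching what the paper obtains from $Y_1\le\sqrt{YY_2}$ and completing the square; your separate $\nu/4$ splits would lose $\frac{a}{2\nu}(A_0-a)^2$. Drop two side remarks. First, the absorption is unconditional and does not use $A_0(T)+y(T)\le\eta<\nu$, which holds only at $t=T$ and serves only to start the barrier after $T$; indeed $a\le\eta-A_0$ on $[0,T]$ is false in general, since for the cyclic-shear centre $A_0(0)+y(0)=2.11>\nu=1$. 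Second, $G$ need not be increasing in $a$, because its linear coefficient $A_1-\frac\nu2+\frac{A_0^2}{2\nu}$ can be negative; this is harmless, since your comparison only uses the Lipschitz bound.
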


\begin{proof}
For $w_N=u_N-v_\theta$, cutoff compatibility gives the exact error equation
\begin{align}
 &\partial_tw_N+\nu Aw_N+P_N\{B(v_\theta,w_N)+B(w_N,v_\theta)
 +B(w_N,w_N)\}=-P_Nr_\theta.
 \label{eq:detail017}
\end{align}
The Wiener convolution estimate of Appendix~\ref{app:foundations} and the
optimization in Appendix~\ref{app:comparison} give precisely the displayed
scalar inequality. Appendix~\ref{app:diniexpansion} treats complex coefficient
norms, including their zeros, before this scalar comparison. Comparison yields
the error bound. The bridge estimate in
Appendix~\ref{app:continuation} supplies uniform $H^1$ and integrated $H^2$
control. Terminal smallness $\|u_N(T)\|_{\A^0}\leq\eta<\nu$ activates the
Wiener barrier, and Appendix~\ref{app:pdepassage} gives compactness,
smoothness, and uniqueness. Lower cutoffs are finite-dimensional global
energy solutions and do not affect the continuum limit.
\end{proof}

An action estimate is an additional instance-specific conclusion. It requires
explicit bridge, tail, and exceptional-cutoff bounds and does not follow from
path algebra alone. Appendix~\ref{app:continuation} gives the generic bridge
constant and the cyclic-shear instance supplies the all-cutoff action bound in
Theorem~\ref{thm:main}.

\subsection{The signed degree-three comparison path}
\label{sec:signedpath}

Three amplitude orders produce a reference path whose complete residual can
be bounded explicitly. The construction first identifies its spatial fields,
then evaluates the heat convolutions and collects the remaining nonlinear
terms. Define the bilinear operator and its first spatial outputs by
\begin{align}
 &B(f,g)=\Pp[(f\cdot\nabla)g],
 \qquad V=B(U,U),
 \qquad C=B(U,V)+B(V,U),
 \qquad H=C+\tfrac12U.
 \label{eq:detail018}
\end{align}
Exact Fourier algebra gives
\begin{align}
 &AU=U,\qquad AV=2V,\qquad
 V=(\sin z\cos y,\sin x\cos z,\sin y\cos x).
 \label{eq:detail019}
\end{align}
The second identity is
\begin{align}
 &(V\cdot\nabla)U=\nabla(-\cos x\cos y\cos z),
 \qquad B(V,U)=0.
 \label{eq:detail020}
\end{align}
The Leray projection therefore removes the gradient before the two cubic
outputs are combined, rather than cancelling two nonzero projected terms.

Substituting Duhamel's formula into itself through amplitude degree three
gives the signed path. The linear term is
$ae^{-t}U$. Since the quadratic output $V$ has eigenvalue two, its resonant
time integral is
\begin{align}
 &-a^2\int_0^t e^{-2(t-s)}e^{-2s}\dd s\,V
 =-a^2te^{-2t}V.
 \label{eq:detail021}
\end{align}
The next forcing is $a^3te^{-3t}C$. The field $C$ splits into its
eigenvalue-one and eigenvalue-five pieces as $C=H-U/2$. Consequently the two
scalar convolutions are
\begin{align}
 K_1(t)&=\int_0^te^{-(t-s)}s e^{-3s}\dd s,
 &K_5(t)&=\int_0^te^{-5(t-s)}s e^{-3s}\dd s,
 \label{eq:detail022}
\end{align}
which evaluate to the formulas below. This derivation keeps the confluent
case together with its numerator. No spurious heat pole is introduced.

Define the confluent heat kernels
\begin{align}
 &K_1(t)=\frac{e^{-t}-(1+2t)e^{-3t}}4,
 \qquad
 K_5(t)=\frac{(2t-1)e^{-3t}+e^{-5t}}4.
 \label{eq:detail023}
\end{align}
They satisfy $K_\ell'+\ell K_\ell=te^{-3t}$, vanish at zero, and are
nonnegative. For
\begin{align}
 &\alpha=ae^{-t},\quad \beta=a^2te^{-2t},\quad
 \gamma=a^3K_1(t),\quad \delta=a^3K_5(t),
 \label{eq:detail024}
\end{align}
set
\begin{align}
 &v_a(t)=(\alpha-\gamma/2)U-\beta V+\delta H.
 \label{eq:path}
\end{align}

Let
\begin{align}
 &D=B(V,V),\quad E=B(U,H)+B(H,U),\quad
 F=B(V,H)+B(H,V),\quad G=B(H,H).
 \label{eq:detail026}
\end{align}
Direct substitution into the unprojected equation gives the complete
residual
\begin{align}
 \mathcal R_3={}&(-\alpha\gamma+\gamma^2/4)V
 +(\alpha\delta-\gamma\delta/2)E+\beta^2D \notag\\
 &+(\beta\gamma/2)C-\beta\delta F+\delta^2G.
 \label{eq:residual}
\end{align}
No nonlinear output of \eqref{eq:residual} is discarded before the norm is
estimated.
The mode-by-mode Fourier reconstruction and every radical majorization used
in this step are detailed in Appendix~\ref{app:algebra}.

To see the bookkeeping explicitly, write $c=\alpha-\gamma/2$. The linear
defects of $cU$, $-\beta V$, and $\delta H$ cancel the prescribed degree-one,
degree-two, and degree-three Duhamel forcing because
\begin{align}
 &\alpha'+\alpha=0,
 \quad \beta'+2\beta=\alpha^2,
 \quad \gamma'+\gamma=a^3te^{-3t},
 \quad \delta'+5\delta=a^3te^{-3t}.
 \label{eq:detail028}
\end{align}
Expanding $B(cU-\beta V+\delta H,cU-\beta V+\delta H)$, using
$B(V,U)=0$, and collecting the uncancelled degrees gives
\eqref{eq:residual}. The coefficients can also be checked from
\begin{align}
 &c^2-\alpha^2=-\alpha\gamma+\gamma^2/4,
 \quad \alpha\beta-c\beta=\beta\gamma/2,
 \quad c\delta=\alpha\delta-\gamma\delta/2.
 \label{eq:detail029}
\end{align}

The exact spatial norms entering the comparison are
\begin{align}
 &\|V\|_{\A^0}=\|C\|_{\A^0}=3,\qquad
 \|D\|_{\A^0}=\sqrt3,\qquad \|H\|_{\A^0}=\frac32,
 \label{eq:detail030}
\end{align}
\begin{align}
 &\|E\|_{\A^0}=\frac{3+\sqrt{21}}2<4,\qquad
 \|G\|_{\A^0}=\frac{2\sqrt{21}}7<\frac43,
 \label{eq:detail031}
\end{align}
\begin{align}
 &\|F\|_{\A^0}=
 \frac{9\sqrt{22}}{44}+\frac{9\sqrt5}{20}+\frac{\sqrt{29}}4
 <\frac72.
 \label{eq:detail032}
\end{align}
These norms give the following rational majorants:
\begin{align}
 A_0&=3\alpha+3\beta+\tfrac32(\gamma+\delta),\label{eq:A0}\\
 A_1&=3\alpha+\tfrac92\beta+\tfrac32\gamma+
       \tfrac{27}{8}\delta,\label{eq:A1}\\
 r&=3(\alpha\gamma+\gamma^2/4)
 +4(\alpha\delta+\gamma\delta/2)+\tfrac74\beta^2\notag\\
 &\hspace{8mm}+\tfrac32\beta\gamma+
 \tfrac72\beta\delta+\tfrac43\delta^2.
 \label{eq:r}
\end{align}
These positive majorants are nondecreasing in $a$. An exact comparison evaluated at
$a=7/10$ therefore covers the full amplitude interval, without assuming that
the exact solutions themselves are monotone in amplitude.

\subsection{Uniform a posteriori comparison}
\label{sec:comparison}

The residual bound controls the error through a scalar differential
inequality independent of the cutoff. The following derivation reduces the
Wiener error to that inequality, and Section~\ref{sec:slabs} specifies the
finite tests that verify an affine supersolution on every time slab.

For $N\geq2$, the support of $v_a$ lies inside the cutoff, so
$P_Nv_a=v_a$ and its Galerkin residual is $P_N\mathcal R_3$. Projection does
not increase the Wiener norm. Let $e_N=u_N-v_a$ and
$Y_j=\|e_N\|_{\A^j}$, $Y=Y_0$. Coefficientwise differentiation and the
Wiener product estimate give the upper-Dini inequality
\begin{align}
 &D^+Y+Y_2\leq A_0Y_1+A_1Y+YY_1+r.
 \label{eq:dini0}
\end{align}
Using $Y_1\leq\sqrt{YY_2}$, $Y_2\geq Y$, and Young's inequality yields
\begin{align}
 &D^+Y\leq
 r+(A_1-\tfrac12+A_0^2/2)Y+A_0Y^2+\tfrac12Y^3.
 \label{eq:dini}
\end{align}

From \eqref{eq:fourierB}, the Leray
contraction and discrete convolution give
\begin{align}
 &\|B(f,g)\|_{\A^0}\leq\|f\|_{\A^0}\|g\|_{\A^1}.
 \label{eq:detail036}
\end{align}
Applying this to
\begin{align}
 &e_t+Ae+P_N\{B(v_a,e)+B(e,v_a)+B(e,e)\}=-P_N\mathcal R_3
 \label{eq:detail037}
\end{align}
gives \eqref{eq:dini0}. Counting-measure Cauchy--Schwarz gives
\begin{align}
 &Y_1\leq
 \left(\sum_k|\widehat e(k)|\right)^{1/2}
 \left(\sum_k|k|^2|\widehat e(k)|\right)^{1/2}
 =\sqrt{YY_2}.
 \label{eq:detail038}
\end{align}
For $Y>0$, put $z=Y_2/Y\geq1$, $w=\sqrt z\geq1$, and
$b=A_0+Y$. Then \eqref{eq:dini0} implies
\begin{align}
 &D^+Y\leq r+A_1Y+Y(bw-w^2).
 \label{eq:detail039}
\end{align}
The elementary constrained maximum satisfies
\begin{align}
 &\sup_{w\geq1}(bw-w^2)\leq\frac{b^2-1}{2}.
 \label{eq:detail040}
\end{align}
Indeed, for $b\leq2$ the maximum is $b-1$ at $w=1$, and
$(b^2-1)/2-(b-1)=(b-1)^2/2$. For $b\geq2$ the maximum is $b^2/4$ at
$w=b/2$, which is at most $(b^2-1)/2$. Expanding
$(b^2-1)Y/2$ gives exactly \eqref{eq:dini}. The case $Y=0$ follows by
the upper-Dini limit. No future norm of the unknown solution is used.
The initial uncertainty obeys
\begin{align}
 &Y(0)=\|P_Nh\|_{\A^0}\leq\|h\|_{\A^2}\leq1/100
 \label{eq:detail041}
\end{align}
uniformly in $N$. Thus Fourier tails of $h$ are included analytically.

\begin{lemma}[Piecewise-affine exact comparison]
Let $0=t_0<t_1<\ldots<t_m=T_*$ and let $y$ be continuous and affine on each
$[t_i,t_{i+1}]$, with $y(0)\geq1/100$ and $y\geq0$. Suppose exact upper
bounds $A_{0,i},A_{1,i},r_i$ dominate \eqref{eq:A0}--\eqref{eq:r} on the
whole slab, and
\begin{align}
 &\frac{y(t_{i+1})-y(t_i)}{t_{i+1}-t_i}
 \geq \max_{q\in\{t_i,t_{i+1}\}}
 \left[r_i+(A_{1,i}-\tfrac12+A_{0,i}^2/2)y(q)
 +A_{0,i}y(q)^2+\tfrac12y(q)^3\right].
 \label{eq:detail042}
\end{align}
Then $Y(t)\leq y(t)$ on $[0,T_*]$.
\end{lemma}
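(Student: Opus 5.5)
The plan is to show that the piecewise-affine $y$ is a supersolution of the scalar inequality \eqref{eq:dini} on each slab, and then to close with a first-crossing comparison argument against the upper-Dini inequality satisfied by $Y$. Write
\[
\Phi_i(s)=r_i+(A_{1,i}-\tfrac12+A_{0,i}^2/2)s+A_{0,i}s^2+\tfrac12s^3 .
\]
On slab $i$ the right-hand side of \eqref{eq:dini} is dominated by $\Phi_i(Y)$. This uses the hypothesis that $A_{0,i},A_{1,i},r_i$ dominate \eqref{eq:A0}--\eqref{eq:r} there. It also uses the fact that all coefficients multiplying $Y,Y^2,Y^3$ are nondecreasing in $A_0,A_1$ when $Y\geq0$. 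The coefficient $A_1-\tfrac12+A_0^2/2$ may be negative, but it is still monotone in $A_0,A_1$ for nonnegative $A_0$. So $D^+Y\leq\Phi_i(Y)$ on slab $i$.

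\textbf{Step 1: convexity.} For $s\geq0$, $\Phi_i$ is a cubic with $\Phi_i''(s)=2A_{0,i}+3s\geq0$, so it is convex on $[0,\infty)$. Since $y$ is affine and nonnegative on $[t_i,t_{i+1}]$, the map $t\mapsto\Phi_i(y(t))$ is convex there. Its maximum is therefore attained at an endpoint. Hypothesis \eqref{eq:detail042} then gives
\[
y'(t)\geq\Phi_i(y(t))\quad\text{for all interior } t,
\]
and $y$ is a strict-or-equal supersolution on every slab.

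\textbf{Step 2: comparison.} I would use the standard first-crossing argument with a small perturbation to handle equality. For $\epsilon>0$, consider $y_\epsilon=y+\epsilon e^{Lt}$, with $L$ a local Lipschitz constant of the $\Phi_i$ on a bounded range, say $[0,\max y+1]$. Then
\[
y_\epsilon'\geq\Phi_i(y)+L\epsilon e^{Lt}>\Phi_i(y_\epsilon)-L\epsilon e^{Lt}+L\epsilon e^{Lt},
\]
and taking $L$ slightly larger than the Lipschitz constant makes the inequality strict. At $t=0$ we have $Y(0)\leq1/100\leq y(0)<y_\epsilon(0)$ by \eqref{eq:detail041}. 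Let $\tau$ be the first time with $Y(\tau)=y_\epsilon(\tau)$, which exists by continuity of $Y$. At $\tau$, $D^+Y(\tau)\leq\Phi_i(Y(\tau))=\Phi_i(y_\epsilon(\tau))<y_\epsilon'(\tau^+)$. Here the right derivative of $y$ is used at slab junctions, with the slab to the right of $\tau$ selected. This contradicts $Y<y_\epsilon$ just before $\tau$ together with equality at $\tau$. More precisely, $Y-y_\epsilon$ has $D^+\!\leq$ a negative number while it increased to zero, which is incompatible with the definition of the first crossing. I would phrase this via the left difference quotients, or run the argument forward from the last time $Y<y_\epsilon$. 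Letting $\epsilon\to0$ gives $Y\leq y$.

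\textbf{Main obstacle.} The delicate point is the regularity of $Y=\|e_N\|_{\A^0}$. It is continuous, since $e_N$ is a finite-dimensional $C^1$ path, and its upper-Dini inequality \eqref{eq:dini} holds also at zeros of coefficients, as noted for Appendix~\ref{app:diniexpansion}. The standard lemma then applies: a continuous $g$ with $D^+g<0$ wherever $g=0$ cannot cross zero upward. I would prove it directly, using the slab index to the right at each break point so that a single $\Phi_i$ governs a right neighbourhood of $\tau$.
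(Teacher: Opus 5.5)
Your proposal is correct, and Step 1 (convexity of the slab cubic on $[0,\infty)$, so the endpoint test \eqref{eq:detail042} makes $y$ a supersolution on the whole slab) is exactly the paper's. The comparison step takes a different technical route. The main-text proof only cites the standard first-contact argument, and the version written out in Appendix~\ref{app:comparison} differs from yours. There the paper works slab by slab with the positive part $z=(Y-y)_+$. Because the finite Galerkin coefficients are $C^1$, $Y$ is locally Lipschitz and $z$ is absolutely continuous. It then shows $z'\le Lz$ almost everywhere, using that the derivative of an absolutely continuous function vanishes a.e.\ on a level set. From $(e^{-L(t-t_i)}z)'\le0$ and $z(t_i)=0$ it gets $z\equiv0$, and exact endpoint agreement drives the induction.

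You instead perturb to a strict supersolution $y+\epsilon e^{Lt}$ and run one global Dini contact argument, taking the right-hand slab at junctions. Take $L$ strictly larger than a common Lipschitz constant $L_0$ of the $\Phi_i$ on $[0,\max y+1]$, so that $y_\epsilon'\ge\Phi_i(y_\epsilon)+(L-L_0)\epsilon e^{Lt}$. Choose $L$ first, then $\epsilon<e^{-LT_*}$. Your route needs only continuity of $Y$ and the pointwise upper-Dini inequality \eqref{eq:dini} as stated, so it is more elementary and would survive without absolute continuity of $Y$. Its cost is the $\epsilon$-limit and the uniform Lipschitz constant. The paper's route handles tangential contact directly, without perturbation.

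One point in your write-up must be fixed. The contradiction cannot be taken at the \emph{first} contact time $\tau$. There $D^+(Y-y_\epsilon)(\tau)<0$ only says the gap reopens to the right, which is consistent with $Y<y_\epsilon$ before $\tau$; think of $-|t-\tau|$. A left-difference-quotient version would need a bound on a left derivative of $Y$, which \eqref{eq:dini} does not supply.

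Use the alternative you mention last. Suppose $Y(s)>y_\epsilon(s)$ for some $s$, and put $t^*=\sup\{t\le s: Y(t)\le y_\epsilon(t)\}$. Continuity and $Y(0)<y_\epsilon(0)$ give $Y(t^*)=y_\epsilon(t^*)$ and $Y>y_\epsilon$ on $(t^*,s]$. Hence $D^+(Y-y_\epsilon)(t^*)\ge0$, contradicting the strict inequality at contact. This is exactly the proof of the ``standard lemma'' you quote. It gives $Y\le y_\epsilon$, and hence $Y\le y$ as $\epsilon\to0$.
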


\begin{proof}
The scalar polynomial in $y$ has second derivative
$2A_{0,i}+3y\geq0$, so its maximum along the affine segment occurs at an
endpoint. Equation \eqref{eq:dini} and the standard first-contact comparison
argument for upper-Dini derivatives give $Y\leq y$ on the first slab.
Continuity and induction give the result on all slabs.
The positive-part argument that handles tangential contact is written out
in Appendix~\ref{app:comparison}.
\end{proof}

The exact comparison has 854 slabs, terminal time
\begin{align}
 &T_*=427/256<27/16,
 \label{eq:detail043}
\end{align}
and verifies using exact fraction arithmetic
\begin{align}
 &\sup_{0\leq t\leq T_*}\|u_N(t)\|_{\A^0}<17/8,
 \qquad
 \|u_N(T_*)\|_{\A^0}<99/100
 \label{eq:terminal}
\end{align}
for every $N\geq2$ and every $u_0\in\mathcal C$.
Appendix~\ref{app:slabalgorithm} specifies the rounding, exponential bounds,
trial slopes, and step-halving rule that generate these comparison values.
It also gives the corresponding parameters for the other three applications.

Alternating Taylor bounds enclose the exponentials independently of the
comparison construction. Exact rational operations perform all subsequent
comparisons. This separation provides an independent check of the exponential
enclosures.

\subsubsection{Exact slab construction and acceptance rule}
\label{sec:slabs}

For a slab $I_i=[t_i,t_{i+1}]$, the construction uses eight exact
strings:
\begin{align}
 &t_i,\quad t_{i+1},\quad A_{0,i},\quad A_{1,i},\quad r_i,\quad
 y_i,\quad y_{i+1},\quad \sigma_i.
 \label{eq:detail045}
\end{align}
The last value is the minimum endpoint slack
\begin{align}
 &\sigma_i=\min_{q\in\{y_i,y_{i+1}\}}
 \left\{
 \frac{y_{i+1}-y_i}{t_{i+1}-t_i}
 -r_i-\left(A_{1,i}-\frac12+\frac{A_{0,i}^2}{2}\right)q
 -A_{0,i}q^2-\frac12q^3
 \right\}.
 \label{eq:slack}
\end{align}
Acceptance requires $\sigma_i\geq0$, exact compatibility of adjacent time
and error endpoints, $y_0=1/100$, and the terminal gates in
\eqref{eq:terminal}. A whole-slab upper bound for $t^pe^{-qt}$ is evaluated
at both endpoints and at the unique critical point $p/q$ when it lies in the
slab. The calculation retains the signed numerator of each $K_\ell$ until it
forms the enclosing interval.

The primary and independent implementations deliberately use different
exponential enclosures. The primary implementation bounds $e^x$ by a positive
Taylor sum with a geometric tail and then reciprocates. The independent
implementation encloses
$e^{-x}$ between alternating Taylor sums of degrees 128 and 129. After this
step, every operation in \eqref{eq:slack} is exact integer arithmetic on
fractions.

\subsection{Exceptional cutoff and terminal continuation}
\label{sec:continuation}

The finite-time comparison closes the all-time action estimate once the
exceptional cutoff and the terminal tail are controlled. The estimates below
treat those two contributions, Section~\ref{sec:limit} passes to the PDE,
and Section~\ref{sec:scaling} restores viscosity and domain size.

The approximant \eqref{eq:path} occupies the cube $|k|_\infty\leq2$.
Projecting it to $N=1$ would not commute with the quadratic residual, so the
proof treats that cutoff directly. The higher-cutoff comparison therefore
remains unchanged.

For a finite spectrum with $1\leq|k|^2\leq\Lambda_N=3N^2$, the probability
weights $\rho_k=|\widehat u_k|^2/E_N$ give
\begin{align}
 &R_N=E_N\operatorname{Var}_{\rho}(|k|^2)
 \leq E_N\frac{(\Lambda_N-1)^2}{4}.
 \label{eq:detail047}
\end{align}
For $N=1$, $R_1\leq E_1$. The energy identity and the
modewise inequality $X_1\geq E_1$ give
$E_1'=-2X_1\leq-2E_1$. Since the stated data class satisfies $E_1(0)<1$,
Gronwall's inequality yields $E_1(t)<e^{-2t}$. Hence
\begin{align}
 &\int_0^\infty R_1(t)^{2/3}\,dt<\frac34.
 \label{eq:lowcutoff}
\end{align}

For $N\geq2$, put
$X_N=\|\nabla u_N\|_2^2$ and $Z_N=\|Au_N\|_2^2$. The Wiener estimate gives
\begin{align}
 &X_N'+Z_N\leq\|u_N\|_{\A^0}^2X_N,
 \qquad X_N\leq Z_N.
 \label{eq:enstrophy}
\end{align}
Appendix~\ref{app:continuation} derives \eqref{eq:enstrophy} directly from
the Galerkin equation and expands the bridge and terminal weighted-Hölder
calculations without suppressing intermediate factors.
The initial enstrophy is below one. From \eqref{eq:terminal},
\begin{align}
 &X_N(T_*)<e^8<3000,\qquad
 \int_0^{T_*}Z_N(t)\,dt<3000.
 \label{eq:detail050}
\end{align}
Since $R_N\leq Z_N$, H\"older's inequality bounds the bridge action by $418$.

The constants in this step are fully explicit. Indeed,
\begin{align}
 &\|\nabla u_0\|_2
 \leq a\|\nabla U\|_2+\|h\|_{\dot H^1}
 \leq\frac7{10}\sqrt{\frac32}+\frac1{100}<1.
 \label{eq:detail051}
\end{align}
On the bridge, $\|u_N\|_{\A^0}<17/8$ and
$T_*<27/16$, so
\begin{align}
 &\int_0^{T_*}\|u_N\|_{\A^0}^2\dd t<8.
 \label{eq:detail052}
\end{align}
Multiplying \eqref{eq:enstrophy} by the corresponding integrating factor
gives both $X_N(T_*)<e^8$ and
$\int_0^{T_*}Z_N<e^8$. Since $e^8<3000$ and
$3000^{2/3}<209$,
\begin{align}
 &\int_0^{T_*}R_N^{2/3}\dd t
 \leq T_*^{1/3}\left(\int_0^{T_*}Z_N\dd t\right)^{2/3}
 <2\cdot209=418.
 \label{eq:detail053}
\end{align}

After $T_*$, the strict Wiener barrier $\|u_N\|_{\A^0}\leq c=99/100$ is
preserved. Equation \eqref{eq:enstrophy} improves to
\begin{align}
 &X_N'+dZ_N\leq0,\qquad d=1-c^2=199/10000.
 \label{eq:detail054}
\end{align}
Multiplication by $e^{d(t-T_*)/2}$ and weighted H\"older inequality give a tail action
below $418/d$. Exact fraction arithmetic verifies
\begin{align}
 &418+\frac{418}{d}<22000.
 \label{eq:detail055}
\end{align}
Together with \eqref{eq:lowcutoff}, this proves \eqref{eq:action}.

For reference, the weighted calculation is
\begin{align}
 &\int_{T_*}^{\infty}e^{d(t-T_*)/2}Z_N(t)\dd t
 \leq\frac{2X_N(T_*)}{d},
 \label{eq:detail056}
\end{align}
and H\"older's inequality with exponents $3/2$ and $3$ yields
\begin{align}
 &\int_{T_*}^{\infty}Z_N^{2/3}\dd t
 \leq\frac{2^{2/3}X_N(T_*)^{2/3}}d<\frac{418}{d}.
 \label{eq:detail057}
\end{align}

\subsubsection{Passage from Galerkin trajectories to the PDE}
\label{sec:limit}

The cutoff-independent bounds yield a smooth solution of the PDE through
compactness and continuation. On every finite interval they give
\begin{align}
 &\sup_N\|u_N\|_{L^\infty_tH^1_x}<\infty,
 \qquad
 \sup_N\|u_N\|_{L^2_tH^2_x}<\infty.
 \label{eq:detail058}
\end{align}
Sobolev embedding and the product estimate
$\|(u_N\cdot\nabla)u_N\|_2
 \leq C\|u_N\|_{H^2}\|u_N\|_{H^1}$
bound $\partial_tu_N$ in $L^2_tL^2_x$. Aubin--Lions compactness gives,
after passing to a subsequence,
\begin{align}
 &u_N\to u\quad\hbox{strongly in }L^2(0,T,H^1),
 \label{eq:detail059}
\end{align}
with weak convergence in the two bounded spaces. This is sufficient to pass
the quadratic term and obtain a strong periodic solution with initial datum
$u_0$. Lower semicontinuity transfers the $H^1$ and integrated $H^2$ bounds.
At $T_*$, coefficientwise convergence and Fatou transfer
$\|u(T_*)\|_{\A^0}\leq99/100$. The Wiener barrier propagates this strict
smallness for all later times. The all-order Fourier estimate in
Appendix~\ref{app:smoothnessexpansion} propagates every spatial derivative of
each fixed smooth datum and gives a global smooth solution. Weak--strong uniqueness removes
subsequence dependence.
Appendix~\ref{app:pdepassage} gives the exponent bookkeeping for the nonlinear
term, the precise compactness spaces, the passage to the limit, and the
regularity continuation argument.

\subsubsection{Scaling in viscosity and torus size}
\label{sec:scaling}

The normalisation $\nu=1$ loses no dimensional information. On the same
$2\pi$ torus, if $v$ is a unit-viscosity solution in the proved class, then
\begin{align}
 &u(t,x)=\nu v(\nu t,x)
 \label{eq:detail060}
\end{align}
solves the equation at viscosity $\nu$. Hence the corresponding initial class is
$u_0=\nu(aU+h)$ with the same dimensionless bounds, and
\begin{align}
 &\int_0^\infty R_u(t)^{2/3}\dd t
 =\nu^{1/3}\int_0^\infty R_v(\tau)^{2/3}\dd\tau.
 \label{eq:detail061}
\end{align}
On a torus of side $2\pi L$, the scaling
$u(t,x)=(\nu/L)v(\nu t/L^2,x/L)$ gives the corresponding action factor
$\nu^{1/3}/L^2$. These are rescalings of the same proved dimensionless
family, not coverage of arbitrary viscosity-independent data.
Both changes of variables, including the scaling of $R$, are calculated in
Appendix~\ref{app:pdepassage}.

\begin{proof}
For Theorem~\ref{thm:main}, the exceptional cutoff $N=1$ satisfies
\eqref{eq:lowcutoff}. For every $N\geq2$, the exact comparison lemma and its
854 accepted slabs give \eqref{eq:terminal}. The bridge estimate then bounds
the action on $[0,T_*]$ by $418$, while the terminal Wiener barrier and the
weighted estimate bound the remaining action by $418/d$, where
$d=199/10000$. Thus
\begin{align}
 &\int_0^\infty R_N(t)^{2/3}\dd t
 <418+\frac{418}{d}<22000
 \label{eq:detail062}
\end{align}
uniformly for $N\geq2$. Together with \eqref{eq:lowcutoff}, this proves
\eqref{eq:action} for every cutoff. Appendix~\ref{app:pdepassage} removes the
cutoff, propagates the terminal Wiener bound, and proves smoothness and
uniqueness of the resulting periodic solution. These steps establish both
conclusions of the theorem.
\end{proof}

\subsection{Second centre: the equal-coefficient ABC family}
\label{sec:abc}

The ABC application tests the comparison principle on a centre with zero
projected nonlinearity. Its heat path is exact, so the scalar comparison
controls only the perturbation. Define the equal-coefficient ABC field
\begin{align}
 &U_{\rm ABC}=(\sin z+\cos y,\ \sin x+\cos z,\ \sin y+\cos x).
 \label{eq:detail063}
\end{align}

\begin{theorem}[Equal-ABC neighbourhood]
\label{thm:abc}
For every viscosity $\nu>0$, every $0\leq a\leq17\nu/25$, and every smooth,
real, mean-zero, solenoidal field $h$ satisfying
\begin{align}
 &\|h\|_{\A^2}\leq\frac{\nu}{100},
 \label{eq:detail064}
\end{align}
the periodic datum $u_0=aU_{\rm ABC}+h$ generates a unique global smooth
solution. At unit viscosity, every cubical Galerkin cutoff $N\geq1$ satisfies
\begin{align}
 &T_{\rm ABC}=\frac{1195}{512},\qquad
 \|u_N(T_{\rm ABC})\|_{\A^0}
 \leq
 \frac{598396355886649288745723}{604462909807314587353088}
 <\frac{99}{100}.
 \label{eq:detail065}
\end{align}
For viscosity $\nu$, the corresponding terminal time is
$T_{\rm ABC}/\nu$.
\end{theorem}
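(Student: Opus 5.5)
First I would reduce to unit viscosity with the scaling of Section~\ref{sec:scaling}. If $v$ solves the unit-viscosity problem with datum $a'U_{\rm ABC}+h'$, where $a'\leq 17/25$ and $\|h'\|_{\A^2}\leq 1/100$, then $u(t,x)=\nu v(\nu t,x)$ solves the viscosity-$\nu$ problem with datum $\nu a'U_{\rm ABC}+\nu h'$. This gives exactly the stated class, and the terminal time becomes $T_{\rm ABC}/\nu$. So it suffices to treat $\nu=1$, $0\leq a\leq 17/25$, $\|h\|_{\A^2}\leq1/100$, and then invoke Theorem~\ref{thm:acceptance} with $\Theta=[0,17/25]$.

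\textbf{Reference path.} The field $U_{\rm ABC}$ is Beltrami, with $\nabla\times U_{\rm ABC}=U_{\rm ABC}$ and $AU_{\rm ABC}=U_{\rm ABC}$. Hence $(U_{\rm ABC}\cdot\nabla)U_{\rm ABC}=\nabla(|U_{\rm ABC}|^2/2)$ and $B(U_{\rm ABC},U_{\rm ABC})=0$. I would take the exact heat path $v_a(t)=ae^{-t}U_{\rm ABC}$, supported on the six unit modes, so $N_*=1$. Its full residual is identically zero, giving $d=0$. The Fourier coefficients have modulus $1/2$ on twelve vector entries; summing $|\widehat U(k)|_2$ over the six modes gives $\|U_{\rm ABC}\|_{\A^0}=\|U_{\rm ABC}\|_{\A^1}=6\cdot(\sqrt2/2)=3\sqrt2$, which I would check exactly. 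Thus $A_0=A_1=3\sqrt2\,ae^{-t}$. I would replace $\sqrt2$ by a rational upper bound such as $99/70$. These majorants are increasing in $a$, so one comparison at $a=17/25$ covers the interval, as in the cyclic-shear case.

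\textbf{Cutoffs.} Because $P_Nv_a=v_a$ for every $N\geq1$, no exceptional cutoff arises, which explains the claim for all $N\geq1$. The initial mismatch is $\|P_Nh\|_{\A^0}\leq\|h\|_{\A^2}\leq 1/100$. The scalar inequality \eqref{eq:dini} with $r=0$ then reads
\[
y'\geq(A_1-\tfrac12+A_0^2/2)y+A_0y^2+\tfrac12y^3 .
\]
I would build a piecewise-affine supersolution exactly as in the piecewise-affine comparison lemma. On each slab I would use exact rational enclosures of $e^{-t}$ (alternating Taylor bounds), starting from $y_0=1/100$ and marching to $T_{\rm ABC}=1195/512$. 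Acceptance requires nonnegative endpoint slack \eqref{eq:slack} on every slab. At the end, $\|u_N(T)\|_{\A^0}\leq A_0(T)+y(T)$ must equal the displayed rational, below $99/100$.

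\textbf{Main obstacle.} The main difficulty is that this last step is a computer-assisted certificate, not a derivation. Early on, $A_1+A_0^2/2\approx 3\cdot 0.96+2.3$ far exceeds $1/2$, so $y$ grows substantially before $e^{-t}$ decay wins. The slab widths must therefore be tuned by the step-halving rule of Appendix~\ref{app:slabalgorithm} so that $y$ does not blow up through the cubic term. Indeed, $17/25$ is presumably near the largest amplitude for which this closes.

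\textbf{Conclusion.} Once the terminal gate holds, Theorem~\ref{thm:acceptance} gives the bound \eqref{eq:detail016}. The initial $H^1$ norms are uniformly bounded. Also $\int_0^T(A_0+y)^2<\infty$ because $y$ is bounded. With $N_*=1$ there are no lower cutoffs. The Wiener barrier and the compactness and uniqueness arguments of Section~\ref{sec:limit} then yield the unique global smooth solution.
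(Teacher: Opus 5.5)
Your proposal is correct and follows the paper's proof essentially step for step. The shared steps are the exact Beltrami heat path $ae^{-t}U_{\rm ABC}$ with zero residual, $A_0=A_1=(297/70)ae^{-t}$ from $\sqrt2<99/70$, monotonicity in $a$, coverage of every $N\geq1$ with no exceptional cutoff, the exact slab comparison to $T_{\rm ABC}=1195/512$, the terminal Wiener barrier with compactness, and the scaling $u(t,x)=\nu w(\nu t,x)$; the only slip is in your heuristic aside, where $A_0^2/2\approx4.16$ at $a=17/25$ rather than $2.3$, and this does not affect the argument.
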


\begin{proof}
Appendix~\ref{app:abcalgebra} verifies
$\nabla\times U_{\rm ABC}=U_{\rm ABC}$,
$AU_{\rm ABC}=U_{\rm ABC}$,
$B(U_{\rm ABC},U_{\rm ABC})=0$, and
$\|U_{\rm ABC}\|_{\A^0}=\|U_{\rm ABC}\|_{\A^1}=3\sqrt2$.
Consequently $v(t)=ae^{-t}U_{\rm ABC}$ is an exact unit-viscosity path whose
projected residual vanishes for every $N\geq1$. With
$A(t)=(297/70)ae^{-t}$, the error comparison is the specialization of
\eqref{eq:dini} with $A_0=A_1=A$ and $r=0$:
\begin{align}
 &D^+Y\leq
 \left(A-\frac12+\frac{A^2}{2}\right)Y
 +AY^2+\frac12Y^3.
 \label{eq:detail066}
\end{align}
Appendix~\ref{app:abccomparison} derives this specialization and the convex
endpoint test. The 2390 exact slabs start from $Y(0)=1/100$ and give the
displayed terminal bound, together with the bridge estimate
\begin{align}
 &\sup_{0\leq t\leq T_{\rm ABC}}\|u_N(t)\|_{\A^0}
 \leq
 \frac{437511399315879405468719}{151115727451828646838272}.
 \label{eq:detail067}
\end{align}
The terminal Wiener barrier and the compactness argument of
Appendix~\ref{app:pdepassage} then yield a unique global smooth solution.
Finally, $u(t,x)=\nu w(\nu t,x)$ maps the unit-viscosity bounds
$a/\nu\leq17/25$ and $\|h\|_{\A^2}/\nu\leq1/100$ to the stated
viscosity-$\nu$ bounds and scales the terminal time by $1/\nu$.
\end{proof}

Theorem~\ref{thm:abc} establishes a second centre within the same comparison
architecture. Its Beltrami geometry and vanishing nonlinear centre residual
provide a structurally distinct test of the centre-independent theorem. The
non-Beltrami family-level extension follows below.

\subsection{Third centre: the nonlinear TG3 family}
\label{sec:tg3}

The TG3 application retains a nonzero quadratic residual even for a heat
reference path. Its single-shell support makes that residual explicit,
allowing the same comparison principle to cover all cutoffs. Define
\begin{align}
 &U_{\rm TG3}=(\sin x\cos y\cos z,\ \sin y\cos z\cos x,
              -2\sin z\cos x\cos y).
 \label{eq:detail068}
\end{align}

\begin{theorem}[TG3 neighbourhood]
\label{thm:tg3}
For every $\nu>0$, every $0\leq a\leq41\nu/40$, and every smooth, real,
mean-zero, solenoidal $h$ with $\|h\|_{\A^2}\leq\nu/40$, the datum
$u_0=aU_{\rm TG3}+h$ generates a unique global smooth periodic solution.
The bounds are uniform over all cubical Galerkin cutoffs $N\geq1$. At unit
viscosity,
\begin{align}
 &\left\|u_N\!\left(\frac{873}{1024}\right)\right\|_{\A^0}
 \leq
 \frac{1196766436682595646613963}{1208925819614629174706176}
 <\frac{99}{100}<1.
 \label{eq:detail069}
\end{align}
For viscosity $\nu$, the terminal time is $873/(1024\nu)$.
\end{theorem}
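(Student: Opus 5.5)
The argument follows the template of Theorem~\ref{thm:abc}, now with a nonzero centre residual fed into Theorem~\ref{thm:acceptance}. First the structure of $U_{\rm TG3}$ is computed exactly. Its Fourier support is the eight modes $(\pm1,\pm1,\pm1)$ on the single shell $|k|^2=3$, so $AU_{\rm TG3}=3U_{\rm TG3}$. Each component contains four complex exponentials per coefficient, and one checks $k\cdot\widehat U(k)=0$ because the coefficient pattern is $(1,1,-2)$. From this we record exact values of $\|U_{\rm TG3}\|_{\A^0}$ and $\|U_{\rm TG3}\|_{\A^1}=\sqrt3\,\|U_{\rm TG3}\|_{\A^0}$. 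Next we compute $W=B(U_{\rm TG3},U_{\rm TG3})$ via \eqref{eq:fourierB}. The sums $p+q$ of active modes lie in $\{0,\pm2e_j,(\pm2,\pm2,0)\text{-type},\dots\}$. Computing $(U\cdot\nabla)U$ directly and removing the gradient part with the Leray multiplier \eqref{eq:detail003} should leave a finite list of nonzero modes, most likely on the shells $|k|^2=4$ and $8$. This gives an exact radical value $\|W\|_{\A^0}=:w_0$, majorised by a rational.

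For the reference path we take the heat path $v(t)=ae^{-3t}U_{\rm TG3}$, as the section introduction suggests. It is supported in $|k|_\infty\leq1$, so $P_Nv=v$ for every $N\geq1$, and no exceptional cutoff arises. Its full residual is $r=a^2e^{-6t}W$. Therefore Theorem~\ref{thm:acceptance} applies with
\begin{align*}
 A_0(t)=a\|U_{\rm TG3}\|_{\A^0}e^{-3t},\qquad
 A_1(t)=\sqrt3\,A_0(t),\qquad d(t)=a^2w_0e^{-6t},
\end{align*}
with $\nu=1$ and the initial mismatch $\|P_Nh\|_{\A^0}\leq\|h\|_{\A^2}\leq1/40$, exactly as in \eqref{eq:detail041}. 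All three majorants are nondecreasing in $a$, so it suffices to verify at $a=41/40$.

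The core step, and the expected obstacle, is the exact piecewise-affine comparison on $[0,873/1024]$. Here $y(0)=1/40$ and the slab test is \eqref{eq:detail042} with the above $A_{0,i},A_{1,i},r_i$. These are bounded on each slab by their values at the left endpoint, since they decrease in $t$, and the exponentials are enclosed by the alternating Taylor sums of Section~\ref{sec:slabs}. The difficulty is that $a=41/40$ makes the linear coefficient $A_1-\tfrac12+A_0^2/2$ large initially, while $d$ is $O(1)$. The supersolution therefore grows substantially before the $e^{-3t}$ decay wins. The terminal gate requires $A_0(T)+y(T)<99/100$, and the stated bound shows only a narrow margin. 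I would first run the adaptive step-halving rule of Appendix~\ref{app:slabalgorithm} with fine slabs near $t=0$, and then certify every slack $\sigma_i\geq0$ in exact fractions. If that fails, the fallback is to fit the trial slopes to the floating-point supersolution of the ODE. The terminal bound \eqref{eq:detail069} is then $\|u_N(T)\|_{\A^0}\leq A_0(T)+y(T)$, verified exactly, together with a bridge bound $\sup_{[0,T]}(A_0+y)$.

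To finish, the bridge estimate \eqref{eq:enstrophy} gives uniform $H^1$ and integrated $H^2$ bounds on $[0,T]$. The initial $H^1$ norm is bounded explicitly from $a\sqrt{3}\|U_{\rm TG3}\|_2+\|h\|_{\dot H^1}$. The terminal Wiener barrier then controls all later times, and Appendix~\ref{app:pdepassage} supplies compactness, smoothness and uniqueness, all uniform over $N\geq1$. Finally, the rescaling $u(t,x)=\nu w(\nu t,x)$ of Section~\ref{sec:scaling} converts the unit-viscosity bounds $a\leq41/40$ and $\|h\|_{\A^2}\leq1/40$ into the stated $\nu$-dependent ones, with terminal time $873/(1024\nu)$.
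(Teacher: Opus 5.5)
Your proposal is correct and follows essentially the same route as the paper. Both feed the heat path $ae^{-3t}U_{\rm TG3}$, with its full unprojected residual $a^2e^{-6t}B(U_{\rm TG3},U_{\rm TG3})$, into Theorem~\ref{thm:acceptance}, run an exact 873-slab comparison from $y(0)=1/40$ at $a=41/40$, and conclude with the terminal Wiener barrier, Galerkin compactness, and viscosity scaling. One small correction, which does not affect the argument: the projected quadratic output lives only on the eight modes $(\pm2,0,\pm2)$ and $(0,\pm2,\pm2)$ with $|k|^2=8$, and $\|B(U_{\rm TG3},U_{\rm TG3})\|_{\A^0}=3\sqrt2/4$, because the $\pm2e_j$ and $(\pm2,\pm2,0)$ parts of $(U_{\rm TG3}\cdot\nabla)U_{\rm TG3}$ are gradients removed by the Leray projection (also, each component carries eight exponentials, not four).
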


\begin{proof}
Appendix~\ref{app:tg3algebra} verifies that $U_{\rm TG3}$ is solenoidal, occupies
eight modes on $|k|^2=3$, and satisfies
\begin{align}
 &AU_{\rm TG3}=3U_{\rm TG3},\qquad
 \|U_{\rm TG3}\|_{\A^0}=\sqrt6,\qquad
 \|U_{\rm TG3}\|_{\A^1}=3\sqrt2.
 \label{eq:detail070}
\end{align}
Unlike the ABC centre, its projected quadratic output
$V_{\rm TG3}=B(U_{\rm TG3},U_{\rm TG3})$ is nonzero, with
$\|V_{\rm TG3}\|_{\A^0}=3\sqrt2/4$. The heat path
$v(t)=ae^{-3t}U_{\rm TG3}$ therefore has the full, nonzero residual
$r(t)=a^2e^{-6t}V_{\rm TG3}$. This output is bounded before projection and is
not discarded at $N=1$.

Appendix~\ref{app:tg3comparison} substitutes the exact spatial constants into
Theorem~\ref{thm:acceptance}. Starting from $y(0)=1/40$, 873 exact slabs on
$[0,873/1024]$ establish the displayed terminal inequality and the bridge
bound
\begin{align}
 &\sup_{0\leq t\leq873/1024}\|u_N(t)\|_{\A^0}
 \leq
 \frac{1533837106283467124791861}{604462909807314587353088}.
 \label{eq:detail071}
\end{align}
Theorem~\ref{thm:acceptance} then yields the unit-viscosity conclusion for
every $N\geq1$, and the scaling in Appendix~\ref{app:pdepassage} yields the
stated viscosity-$\nu$ family.
\end{proof}

The TG3 centre is non-Beltrami and has a retained nonzero centre residual.
Together, Theorems~\ref{thm:main}, \ref{thm:abc}, and \ref{thm:tg3} are three
verified instances of Theorem~\ref{thm:acceptance}. The next subsection
strengthens the TG3 instance to a uniform coefficient interval while retaining
its fixed Fourier support.

\subsection{A uniform nonlinear coefficient family}
\label{sec:coefficientfamily}

One common comparison also covers an interval of non-Beltrami centre shapes.
The construction varies two component amplitudes while preserving
solenoidality, then bounds the full residual uniformly over that interval.
For $\theta\in[-1/10,1/10]$, define the fixed-support family
\begin{align}
 &U_\theta=\bigl((1+\theta)\sin x\cos y\cos z,
 (1-\theta)\sin y\cos z\cos x,
 -2\sin z\cos x\cos y\bigr).
 \label{eq:tg3family}
\end{align}
Thus $U_0=U_{\rm TG3}$, while nonzero $\theta$ changes the relative strength
of the first two velocity components without destroying solenoidality.

\begin{theorem}[Uniform TG3 coefficient-family neighbourhood]
\label{thm:tg3box}
For every $\nu>0$, $|\theta|\leq1/10$, $0\leq a\leq\nu$, and every smooth,
real, mean-zero, solenoidal $h$ satisfying
\begin{align}
 &\|h\|_{\A^2}\leq\frac{\nu}{40},
 \label{eq:detail073}
\end{align}
the datum $u_0=aU_\theta+h$ generates a unique global smooth periodic
solution. The conclusion is uniform in $\theta$, $a$, $h$, and every cubical
Galerkin cutoff $N\geq1$. At unit viscosity, the common terminal time is
$T_\Box=31/32$, and
\begin{align}
 &\sup_{\substack{|\theta|\leq1/10,\ 0\leq a\leq1\\
                   \|h\|_{\A^2}\leq1/40}}
 \sup_{N\geq1}\|u_N(T_\Box)\|_{\A^0}
 \leq
 \frac{1196626932805349354526769}{1208925819614629174706176}
 <\frac{99}{100}.
 \label{eq:tg3boxterminal}
\end{align}
For general viscosity, the terminal time is $31/(32\nu)$.
\end{theorem}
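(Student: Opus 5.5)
The plan is to apply Theorem~\ref{thm:acceptance} once, with the parameter set $\Theta=[-1/10,1/10]\times[0,1]\times\{h:\|h\|_{\A^2}\le 1/40\}$ at unit viscosity, and to recover general $\nu$ at the end through the scaling of Section~\ref{sec:scaling}. As in Theorem~\ref{thm:tg3}, the reference path is the heat path. The first task is therefore the spatial algebra of $U_\theta$.

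\textbf{Step 1: spatial algebra of $U_\theta$.} Every component of $U_\theta$ is a product of three first-order trigonometric factors. So $U_\theta$ occupies the eight modes $k=(\pm1,\pm1,\pm1)$ on the shell $|k|^2=3$, and $AU_\theta=3U_\theta$. Solenoidality follows from $(1+\theta)+(1-\theta)-2=0$.

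The path is $v_\theta(t)=ae^{-3t}U_\theta$. It is supported in $|k|_\infty\le1$, so cutoff compatibility $P_Nv_\theta=v_\theta$ holds for every $N\ge1$. Its full residual is $r_\theta=a^2e^{-6t}V_\theta$ with $V_\theta=B(U_\theta,U_\theta)$. This residual is computed before projection, supported in $|k|_\infty\le2$, and is only projected afterwards.

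I will compute the Fourier coefficients of $V_\theta$ explicitly as polynomials in $\theta$. The products of two shell-3 modes land on the wave vectors $(\pm2,0,0)$, $(\pm2,\pm2,0)$, $(0,0,0)$ and similar, after which the Leray multiplier is applied. This gives $\|U_\theta\|_{\A^0}=\sqrt3\,\|(1+\theta,1-\theta,-2)\|_2/2$ and $\|U_\theta\|_{\A^1}=\sqrt3\,\|U_\theta\|_{\A^0}$. The quantity $\|V_\theta\|_{\A^0}$ is a sum of square roots of nonnegative quadratics in $\theta$.

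\textbf{Step 2: uniform majorants.} I will take the constants $A_0=\bar a\,e^{-3t}$, $A_1=\sqrt3\,\bar a\,e^{-3t}$ and $d=\bar v\,e^{-6t}$. Here $\bar a$ and $\bar v$ are rational upper bounds for $\sup_{|\theta|\le1/10}\|U_\theta\|_{\A^0}$ and $\sup_{|\theta|\le1/10}\|V_\theta\|_{\A^0}$, evaluated at $a=1$.

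Monotonicity in $a$ follows exactly as after \eqref{eq:r}. The $\theta$-uniformity is obtained by writing each squared modal amplitude as a polynomial on $[-1/10,1/10]$, expanding it in the Bernstein basis, and taking the largest Bernstein coefficient. That coefficient is an exact rational upper bound. Any rational majorant of the square root then serves.

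The initial mismatch is $\|P_Nh\|_{\A^0}\le\|h\|_{\A^2}\le1/40$, uniformly in $N$ and in the Fourier tail, as in \eqref{eq:detail041}.

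\textbf{Step 3: slab comparison and continuation.} I will run the piecewise-affine exact comparison lemma with $y(0)=1/40$ on slabs covering $[0,31/32]$. On each slab I use whole-slab exponential enclosures, check the endpoint slack $\sigma_i\ge0$ in exact rational arithmetic, and verify the terminal gate $\bar a e^{-3T_\Box}+y(T_\Box)$ below the stated fraction, which is less than $99/100$.

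The remaining hypotheses of Theorem~\ref{thm:acceptance} are then routine. The initial $H^1$ norms are uniformly bounded. The integral $\int_0^T(A_0+y)^2\,dt$ is finite. No cutoff is exceptional, since the residual is retained unprojected at $N=1$. The acceptance theorem, the Wiener barrier and Appendix~\ref{app:pdepassage} then give a unique global smooth solution. Finally, the rescaling $u=\nu w(\nu t,x)$ maps $a/\nu\le1$ and $\|h\|_{\A^2}/\nu\le1/40$ to the stated family and the terminal time to $31/(32\nu)$.

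\textbf{Expected main obstacle.} The hardest part is Step 2 combined with Step 3. The margin is thin: the amplitude is raised from $41/40$ in Theorem~\ref{thm:tg3} to $1$, now with a $\theta$-interval, and the terminal bound barely clears $99/100$. Crude majorization over $\theta$ could break the comparison.

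I would first try taking the supremum of each modal amplitude separately and summing. If the slab test then fails, I would split $[-1/10,1/10]$ into subintervals, each with its own Bernstein bound and slab run, and cover the family by a finite union. I would also refine the slabs by step-halving near the early-time peak of the cubic term.
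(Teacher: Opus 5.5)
Your architecture matches the paper's. You use the heat path $ae^{-3t}U_\theta$ supported in cube one. You bound the complete residual $a^2e^{-6t}B(U_\theta,U_\theta)$ before projection, so $N=1$ is not exceptional. You take $\theta$-uniform Bernstein majorants at $a=1$ and run one common exact slab comparison from $y(0)=1/40$ to $31/32$; the paper needs 992 slabs and never subdivides the $\theta$-interval. The Wiener barrier, compactness and viscosity scaling then finish the argument.

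There is, however, a concrete error in Step~1 that would void the certificate if carried into Step~2. The eight corner coefficients are $\widehat U_\theta(s)=\bigl(s_x(1+\theta),s_y(1-\theta),-2s_z\bigr)/(8i)$, each of magnitude $\sqrt{6+2\theta^2}/8$. Hence $\|U_\theta\|_{\A^0}=\|(1+\theta,1-\theta,-2)\|_2=\sqrt{6+2\theta^2}$ and $\|U_\theta\|_{\A^1}=\sqrt{18+6\theta^2}$, so your formula is too small by the factor $\sqrt3/2$. A constant $\bar a\approx2.12$ taken from it does not dominate the true path norm, which reaches about $2.454$ at $a=1$, $|\theta|=1/10$, $t=0$. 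The domination hypothesis of the piecewise-affine comparison lemma then fails. The terminal gate fails too, since $A_0(T_\Box)+y(T_\Box)$ is meant to bound $\|v_\theta(T_\Box)\|_{\A^0}+\|u_N(T_\Box)-v_\theta(T_\Box)\|_{\A^0}$. With the correct norm, the Bernstein coefficients $301/50,\,299/50,\,301/50$ of $6+2\theta^2$ give the paper's constants $123/50$ and $17/4$. Your $A_1=\sqrt3\,\bar a$ with $\bar a=123/50$ is irrational and slightly larger than $17/4$, so it needs its own rational majorant and gives away a little slack.

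For the residual, your route genuinely differs from the paper's. The paper writes $B(U_\theta,U_\theta)=V_0+\theta V_1+\theta^2V_2$ and uses the exact norms $3\sqrt2/4$, $\sqrt2$, $\sqrt2/4$ with the triangle inequality. It then bounds $3/4+q+q^2/4\le341/400$ for $q=|\theta|\le1/10$ by Bernstein coefficients, giving $d\le\frac{99}{70}\frac{341}{400}e^{-6t}$. Taking the supremum of each modal amplitude separately is equally valid and slightly sharper, about $0.8475\sqrt2$ rather than $0.8525\sqrt2$.

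You should correct the support, though: only the twelve modes with exactly two coordinates equal to $\pm2$ survive, all on $|k|^2=8$.
\begin{itemize}
\item Outputs with three nonzero coordinates arise only from $p=q$, where $\widehat U_\theta(p)\cdot p=0$.
\item Outputs with one nonzero coordinate are parallel to $k$ and are removed by the Leray multiplier.
\item The zero mode vanishes.
\end{itemize}
The surviving amplitudes are $\sqrt2\,|3\mp2\theta-\theta^2|/32$ and $4\sqrt2\,|\theta|/32$, so the squared amplitudes you Bernstein-expand are quartics, not quadratics.

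Because your majorants differ from the paper's, your slab run certifies its own terminal value. It yields the displayed inequality only if that value lands at or below the stated numerator. That numerator is the output of the paper's specific majorants and slab parameters: $b=80$, Taylor degree $128$, initial $h=1/1024$, and $\kappa_i\in\{129/128,127/128\}$.

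Finally, the amplitude range is lowered from $41/40$ to $1$, not raised; that reduction is what pays for the $\theta$-interval.
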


\begin{proof}
Appendix~\ref{app:boxalgebra} derives the exact identities
\begin{align}
 &\|U_\theta\|_{\A^0}^2=6+2\theta^2,
 \qquad \|U_\theta\|_{\A^1}^2=18+6\theta^2,
 \label{eq:detail075}
\end{align}
and decomposes the complete quadratic output as
\begin{align}
 &B(U_\theta,U_\theta)=V_0+\theta V_1+\theta^2V_2.
 \label{eq:detail076}
\end{align}
The three grouped fields occupy a common 12-mode union on $|k|^2=8$ and
satisfy
\begin{align}
 &\|V_0\|_{\A^0}=\frac{3\sqrt2}{4},\qquad
 \|V_1\|_{\A^0}=\sqrt2,\qquad
 \|V_2\|_{\A^0}=\frac{\sqrt2}{4}.
 \label{eq:detail077}
\end{align}
Bernstein convex-hull bounds on the full interval $|\theta|\leq1/10$ give
\begin{align}
 &\|U_\theta\|_{\A^0}<\frac{123}{50},\qquad
 \|U_\theta\|_{\A^1}<\frac{17}{4},
 \label{eq:detail078}
\end{align}
and
\begin{align}
 &\|B(U_\theta,U_\theta)\|_{\A^0}
 \leq\frac{99}{70}\frac{341}{400}.
 \label{eq:detail079}
\end{align}
These are interval bounds, not evaluations of sampled centres. For the heat
path $v(t)=ae^{-3t}U_\theta$, the last inequality bounds the complete residual
$a^2e^{-6t}B(U_\theta,U_\theta)$ before projection. Every path mode lies in
cube one, and projection can only decrease the residual Wiener norm, so all
$N\geq1$ are covered without an exceptional cutoff. Substitution into
Theorem~\ref{thm:acceptance}, followed by the 992 common exact slabs derived
in Appendix~\ref{app:boxcomparison}, gives \eqref{eq:tg3boxterminal} and the
uniform bridge bound. Terminal Wiener continuation, Galerkin compactness, and
viscosity scaling then give the stated conclusion.
\end{proof}

Theorem~\ref{thm:tg3box} is uniform over both a positive-dimensional centre
parameter and an infinite-dimensional perturbation ball. It therefore extends
the same comparison argument continuously across non-Beltrami centre shapes
while retaining a fixed Fourier support.
Appendix~\ref{app:bernsteindetail} derives the Bernstein basis and the exact
coefficient conversion used in this theorem. Appendix~\ref{app:slabalgorithm}
then converts these uniform bounds into the common slab comparison.

\section{Computational methodology}
\label{sec:computationalmethods}

Exact arithmetic resolves the finite comparison inequalities that connect the
analytic estimates to each explicit centre family. The implementation first
constructs the comparison bounds and tests their defining conditions with
deliberately altered inputs. Separate calculations treat the cyclic-shear,
equal-ABC, TG3, and coefficient-family cases.
Section~\ref{sec:exactmethods} describes the exact checks, and
Section~\ref{sec:pinomethod} specifies the neural experiment. The executable
construction is derived in Appendix~\ref{app:slabalgorithm}. Appendix~\ref{app:numerics}
constructs the Fourier initial fields, and Appendix~\ref{app:timestepping}
derives the numerical update and balance diagnostics. Appendix~\ref{app:pino}
specifies the neural experiment, with each discrete loss derived in
Appendix~\ref{app:neuralobjective}.
Appendix~\ref{app:discreteexpansion} expands discrete Fourier normalisation,
aliasing and the Runge--Kutta stages. Appendix~\ref{app:neuralexpansion}
specifies network contractions and loss factors, and
Appendix~\ref{app:statisticsexpansion} derives the quadrature and statistical
operations used to summarise the trajectories.

\subsection{Independent exact-arithmetic verification}
\label{sec:exactmethods}

The primary implementation constructs exact slab bounds and an affine
supersolution, while an independently written implementation reconstructs
$U,V,C,H,E,F,G$ through a separate sparse Fourier convolution,
checks solenoidality and reality mode by mode, recomputes all spatial norms,
uses the independent exponential enclosure described above, and evaluates
all 854 instances of \eqref{eq:slack}.

Four negative controls independently alter the residual majorant, amplitude
interval, terminal upper bound and cutoff range. Specifically, they consist of
\begin{enumerate}
\item setting one residual majorant to zero,
\item enlarging the amplitude interval from $[0,7/10]$ to $[0,4/5]$,
\item understating the terminal upper bound as $1/2$,
\item falsely claiming cutoff coverage beginning at $N=1$.
\end{enumerate}
The four deliberately altered inputs fail the corresponding checks for
residual domination, continuum parameter coverage, terminal smallness and
cutoff compatibility.

Independent evaluation verifies all 854 slab inequalities over the stated
amplitude interval and perturbation radius, together with the
exceptional-cutoff and all-cutoff action bounds. A further 13 scope checks
and eight integration checks test consistency between the analytic
hypotheses, the finite comparison and the continuum conclusion.

For Theorem~\ref{thm:abc}, the independent implementation reconstructs the
ABC Fourier modes, curl and Laplacian identities, zero projected
nonlinearity, and exact Wiener norms. It independently checks all 2390 slab
inequalities and the terminal and bridge bounds. Enlarging the amplitude
interval, assigning a negative residual, understating the terminal value, or
declaring an incompatible cutoff condition causes rejection.

For Theorem~\ref{thm:tg3}, the independent implementation reconstructs all
eight TG3 modes, verifies solenoidality and the eigenvalue-three shell,
retains the nonzero projected quadratic output, and recomputes its exact
Wiener norm. It checks all 873 slab inequalities, the terminal and bridge
bounds, and all-cutoff compatibility. Enlarging either the amplitude or
perturbation radius, deleting the nonlinear residual, understating the
terminal value, or changing the cutoff scope causes rejection.

For Theorem~\ref{thm:tg3box}, a separate interval implementation reconstructs
$U_\theta$, the three residual coefficients $V_0,V_1,V_2$, both exact
norm-squared polynomials, their Bernstein representations, and the complete
12-mode residual union. It verifies all 992 common slabs over the entire
parameter interval. Six negative controls enlarge the coefficient interval
or perturbation radius, delete the residual, understate its Bernstein
maximum, understate the terminal value, or alter the cutoff scope, each of
which violates the corresponding comparison condition.

\subsection{Physics-informed neural-operator search}
\label{sec:pinomethod}

The neural-operator experiment compares a supervised FNO with a PINO of identical architecture.
Both maps take the initial velocity and a two-component family indicator as
input and return three velocity components at five times
$t\in\{0,0.025,0.05,0.1,0.2\}$. Three spectral layers retain three modes per
coordinate and propagate eight latent channels on a $12^3$ grid, with group
normalisation \cite{GroupNorm} and Gaussian error linear units
\cite{GELU}. Training used
127 fields, model selection used 45 validation fields, and an
out-of-distribution (OOD) set contained 212 fields selected by parameter-space
conditions that do not enter training. Appendix~\ref{app:pino} gives the split,
architecture, loss definitions, and extended diagnostics.

Let $\mathcal G_\vartheta(u_0,f)_j$ denote the velocity predicted by parameters
$\vartheta$ for initial field $u_0$, family indicator $f$, and checkpoint
$t_j$. The FNO minimises the field loss and an initial-condition anchor. The
PINO uses the same terms and adds the weighted objective
\begin{align}
 \mathcal L_{\rm PINO}={}&\mathcal L_{\rm field}+2\mathcal L_{\rm initial}
 +0.002\mathcal L_{\rm NS}+0.05\mathcal L_{\rm div}
 +0.01\mathcal L_{\rm energy}\notag\\
 &+0.05\mathcal L_{\rm dealias}+0.05\mathcal L_{\rm spectral}
 +0.05\mathcal L_{\rm tail}.
 \label{eq:pinoloss}
\end{align}
Here $\mathcal L_{\rm NS}$ measures the cutoff-two projected midpoint
Navier--Stokes residual on the $12^3$ training grid, $\mathcal L_{\rm div}$ measures incompressibility,
$\mathcal L_{\rm energy}$ measures the discrete energy balance, and the final
three terms penalise native-grid cutoff sensitivity, component-averaged
Fourier-weighted field error, and predicted outer-band spectral mass.
Appendix~\ref{app:neuralobjective} derives their averaging factors,
projection conventions and midpoint discretization from
Equation~\eqref{eq:pinoloss}. Adam optimisation used at most 18 epochs,
batches of four, and learning rate $0.0015$~\cite{Adam}. Training-only means
and standard deviations normalised each velocity component.

Gradient search optimised a composite predicted residual-action and spectral-
tail score, followed by ranking of the endpoint pool by the predicted
five-checkpoint action proxy alone. The random and Sobol pools underwent
the same PINO-proxy ranking before
solver evaluation. They are therefore PINO-screened proposal baselines rather
than unguided solver baselines. FNO, PINO, random, and Sobol acquisition each
received 512 candidate evaluations and six solver rechecks in each of the
cyclic-shear and TG3 families. A normalised separation of $0.12$ prevents duplicate selections.
The spectral solver recomputed all 48 selected fields at
$(16^3,0.005)$, $(20^3,0.0025)$, and $(24^3,0.00125)$, where each pair gives
the grid and time step. Training, ranking, and these rechecks used the
five-checkpoint trapezoidal action proxy evaluated at the network output
times. A separate dense quadrature recorded the action integrand at every
classical fourth-order Runge--Kutta (RK4) step for all 48 selected candidates. Appendix~\ref{app:parameters} gives the input maps and seeds. Appendix~\ref{app:networkdetail} specifies the network operations. Equations~\eqref{eq:pinofivepointproxy} and \eqref{eq:densequadraturedetail} distinguish the action quadratures. Neural predictions, floating-point checks, and the
exact regularity theorem remain separate levels of evidence.

\section{Results and discussion}
\label{sec:results}

The proved neighbourhoods occupy a precise position relative to established
regularity criteria. Their analytic observables also define the geometric and
spectral quantities measured in the ensemble calculations and public
turbulence data. Section~\ref{sec:normposition} compares the verified families with explicit
small-data thresholds. Section~\ref{sec:physics} connects their observables
to numerical ensembles, local field geometry and public turbulence samples.

\subsection{Position relative to standard small-data norms}
\label{sec:normposition}

The present theorem defines an explicit computer-assisted regularity class and
can be located quantitatively relative to the critical-space traditions of
Kato \cite{Kato} and Koch--Tataru \cite{KochTataru}. Since
all modes of $U$ have $|k|=1$,
\begin{align}
 &\|aU\|_{\A^0}=\|aU\|_{\A^2}=3a,
 \qquad
 \|aU\|_{\dot H^{1/2}}=a\sqrt{\frac32},
 \qquad
 \|aU\|_\infty=a\sqrt3.
 \label{eq:detail081}
\end{align}
The perturbation hypothesis implies
\begin{align}
 &\|h\|_{\A^0},\ \|h\|_{\dot H^{1/2}},\ \|h\|_\infty
 \leq\|h\|_{\A^2}\leq\frac1{100}.
 \label{eq:detail082}
\end{align}
At the largest amplitude this gives
\begin{align}
 \|u_0\|_{\A^0}&\leq2.11,\notag\\
 \|u_0\|_{\dot H^{1/2}}
 &\leq0.7\sqrt{3/2}+0.01<0.868,\notag\\
 \|u_0\|_{L^3}&\leq\|u_0\|_\infty
 \leq0.7\sqrt3+0.01<1.223.
 \label{eq:normposition}
\end{align}
The six Fourier coefficients of $U$ and each norm comparison in
\eqref{eq:normposition} are expanded in
Appendix~\ref{app:physicalidentities}.

At $a=0.7$, the cyclic-shear centre has $\|aU\|_{\A^0}=2.1>1$,
so the proved neighbourhood extends beyond the displayed unit Wiener
smallness threshold. This comparison concerns that sufficient condition,
not every small-data criterion \cite{LeiLin}. For the critical
$\dot H^{1/2}$ and $L^3$ frameworks \cite{FujitaKato,Kato},
Equation~\eqref{eq:normposition} supplies explicit norm bounds, while the
present theorem supplies family-specific constants and terminal times.
Smooth periodic mean-zero data also belong to the relevant inverse-derivative
space of bounded mean oscillation ($BMO^{-1}$) and
critical Besov spaces, but membership alone does not establish the smallness
required by a critical-space theorem \cite{KochTataru}. The finite-path
criterion provides a separate quantitative verification mechanism.

The centre has no invariant spatial direction, so its three-dimensional
geometry complements directionally structured large-data constructions.
Within the established a posteriori regularity framework
\cite{CCRT,DR,RS,MorosiPizzocchero2012,MorosiPizzocchero2015}, the
contribution is the explicit verification of a continuum of initial data
with computable perturbation radii, terminal times, and bounds uniform in
the Galerkin cutoff.

The equal-ABC result provides a second, structurally distinct centre at unit
viscosity. At its upper endpoint, the reverse triangle inequality and
$\|h\|_{\A^0}\leq\|h\|_{\A^2}$ give
\begin{align}
 &\|aU_{\rm ABC}+h\|_{\A^0}
 \geq 3\sqrt2\,\frac{17}{25}-\frac1{100}>1.
 \label{eq:detail084}
\end{align}
Thus the entire admitted perturbation ball at that endpoint lies outside the
displayed direct condition $\|u_0\|_{\A^0}<1$. The exact ABC norm is derived in
Appendix~\ref{app:algebra}. The non-Beltrami coefficient family then shows that
the shared architecture is not tied to vanishing centre residuals.
Table~\ref{tab:certifiedcentres} summarises the parameter ranges, perturbation
radii, and cutoff coverage of the four verified instances.

\begin{table}[tbp]
\centering
\caption{Verified centre families and admissible parameter ranges at
viscosity $\nu>0$. The perturbation column gives an upper bound for
$\|h\|_{\A^2}/\nu$. The four rows correspond, respectively, to
Theorems~\ref{thm:main}, \ref{thm:abc}, \ref{thm:tg3}, and
\ref{thm:tg3box}. Each theorem states the assumptions on $h$.}
\label{tab:certifiedcentres}
\small
\begin{tabular}{@{}>{\raggedright\arraybackslash}p{0.19\linewidth}
                    >{\raggedright\arraybackslash}p{0.26\linewidth}
                    >{\centering\arraybackslash}p{0.11\linewidth}
                    >{\centering\arraybackslash}p{0.13\linewidth}
                    >{\raggedright\arraybackslash}p{0.16\linewidth}@{}}
\toprule
Centre & Reference path and residual & $a/\nu$ & Radius bound & Cutoffs\\
\midrule
Cyclic shear & Degree-three path, nonzero residual & $[0,7/10]$ & $1/100$ & $N\geq2$, with $N=1$ treated separately\\
Equal ABC & Beltrami heat path, zero residual & $[0,17/25]$ & $1/100$ & All $N\geq1$\\
TG3 & Heat path, nonzero residual retained & $[0,41/40]$ & $1/40$ & All $N\geq1$\\
TG3 coefficient family, $|\theta|\leq1/10$ & Heat path, quadratic residual bounded uniformly in $\theta$ & $[0,1]$ & $1/40$ & All $N\geq1$\\
\bottomrule
\end{tabular}
\end{table}

At the TG3 upper endpoint, the same reverse-triangle argument gives
\begin{align}
 &\left\|\frac{41}{40}U_{\rm TG3}+h\right\|_{\A^0}
 \geq\frac{41}{40}\sqrt6-\frac1{40}>1.
 \label{eq:detail085}
\end{align}
Theorems~\ref{thm:main}, \ref{thm:abc}, and \ref{thm:tg3} provide three
independently verified instances of the common acceptance theorem, including two
nonzero-residual centres. Theorem~\ref{thm:tg3box} further supplies one
positive-dimensional interval of centre shapes with a uniform radius. This
connected multi-centre result is the family-level contribution of the method.

\subsection{Numerical experiment design and physical results}
\label{sec:physics}

Local stretching and global spectral spread describe different aspects of
the same evolution. Sections~\ref{sec:initialgeometry} and
\ref{sec:multigeometry} compare analytic centre fields, while
Sections~\ref{sec:ensemble} and \ref{sec:refinement} locate the numerical
resolution boundary. Section~\ref{sec:pinoresults} evaluates operator-guided
search, and Section~\ref{sec:dynamicgeometry} interprets the selected evolving
fields. Sections~\ref{sec:eigentransients}, \ref{sec:fixedstress}, and
\ref{sec:publicdns} test scaling, discretisation and cross-flow geometry,
respectively. Appendix~\ref{app:tensorexpansion} derives the local tensor
identities, and Appendix~\ref{app:momentexpansion} derives the distinct
global spectral balance.

\subsubsection{Exact geometry of the reference centre}
\label{sec:initialgeometry}

Analytic evaluation of the centre on a $64^3$ grid resolves its initial
geometry without time integration, so the grid affects the rendering rather
than the underlying field. The derivatives below determine the displayed
velocity, vorticity, strain and stretching directly from the centre formula.

For $u=aU$, let $\omega=\nabla\times u$ denote vorticity and
$S=(\nabla u+\nabla u^\top)/2$ the strain tensor, with $|S|=|S|_F$ its
Frobenius norm. Direct differentiation gives
\begin{align}
 \omega&=\nabla\times u=-a(\cos z,\cos x,\cos y),\label{eq:omegaU}\\
 S_{12}&=\frac a2\cos y,\quad
 S_{13}=\frac a2\cos x,\quad
 S_{23}=\frac a2\cos z,\quad S_{ii}=0,\label{eq:strainU}\\
 |S|^2&=\frac{a^2}{2}(\cos^2x+\cos^2y+\cos^2z)
       =\frac12|\omega|^2,\label{eq:strainomega}\\
 \omega\cdot S\omega&=3a^3\cos x\cos y\cos z,\label{eq:stretchU}\\
 (u\cdot\nabla)u&=a^2(\sin z\cos y,\sin x\cos z,\sin y\cos x).
 \label{eq:nonlinearU}
\end{align}
Each derivative, contraction, and trigonometric identity in this display is
expanded in Appendix~\ref{app:physicalidentities}. The grid is used only for
rendering the already analytic fields.
Equations \eqref{eq:omegaU}--\eqref{eq:nonlinearU} are the formulas used to
write the Visualization Toolkit (VTK) arrays and to produce
Figure~\ref{fig:centerpanels}.

\begin{figure}[t]
\centering
\includegraphics[width=0.94\linewidth,height=0.43\textheight,keepaspectratio]{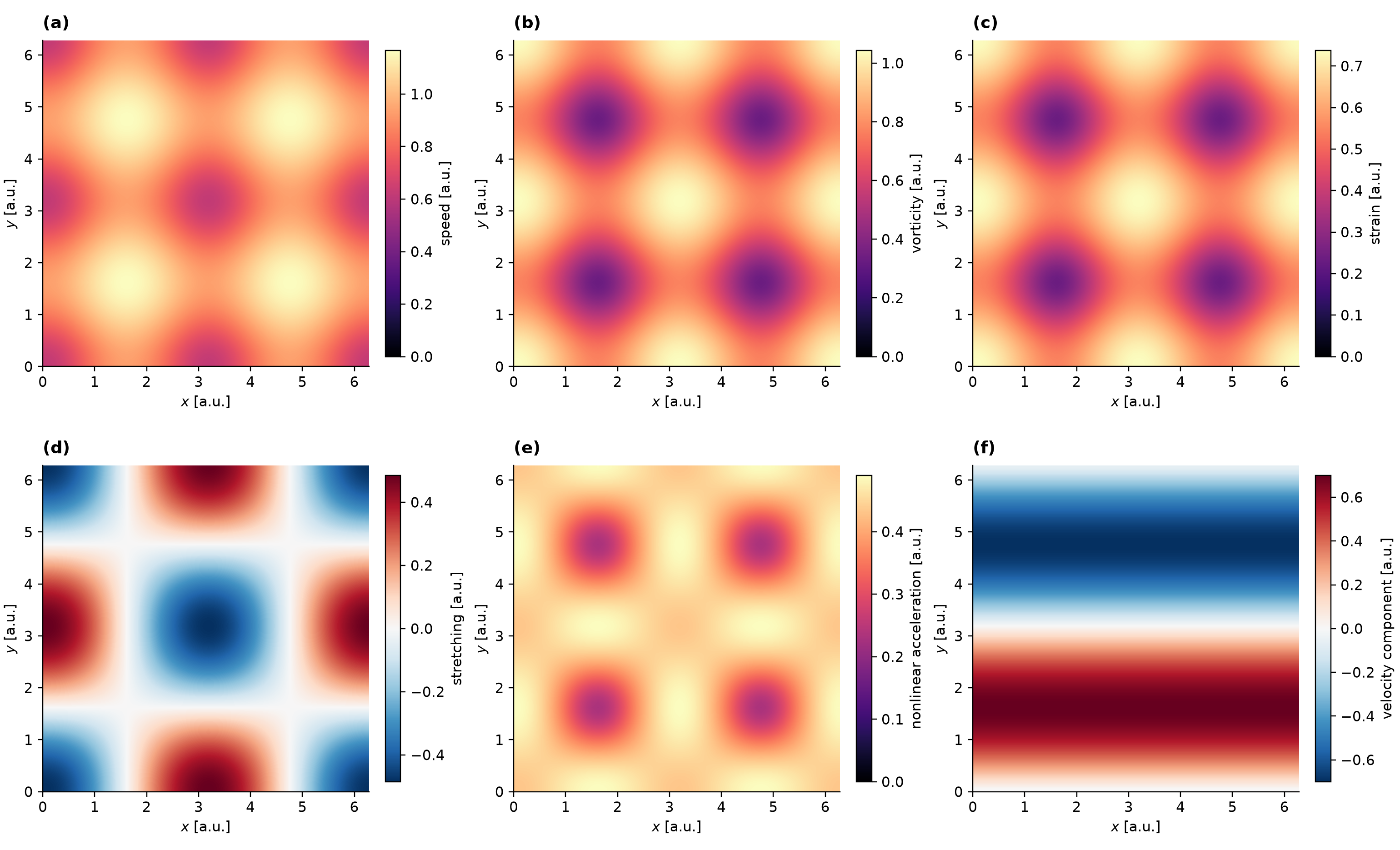}
\caption{Reference-centre slice at $z=21(2\pi/64)$: (a) $|u|$, (b)
$|\omega|$, (c) $|S|$, (d) $\omega\cdot S\omega$, (e)
$|(u\cdot\nabla)u|$, (f) $u_1$. Color limits are absolute and shared only
where the displayed quantity is the same.}
\label{fig:centerpanels}
\end{figure}

Figure~\ref{fig:centerpanels}(a) shows that speed is spatially modulated even
though every component begins on the same Laplacian shell. The matching
geometry in panels (b) and (c) is not a visual coincidence: it is exactly the
factor $|S|=|\omega|/\sqrt2$ in \eqref{eq:strainomega}. Panel (d) uses a
diverging scale centred at zero. Red regions have
$\omega\cdot S\omega>0$, blue regions have
$\omega\cdot S\omega<0$, and the pale interfaces are near-zero stretching.
Equation~\eqref{eq:stretchU} explains the alternating cells and their zero
spatial mean. Panel (e) shows the quadratic-acceleration magnitude. This
centre's acceleration is already divergence free, as established in the first
lemma, so its nonzero magnitude implies a nonzero projected nonlinearity.
Panel (f) is a
component-level validation of the amplitude and phase convention used to
construct the Fourier seed.

\begin{figure}[t]
\centering
\includegraphics[width=0.6\linewidth,height=0.38\textheight,keepaspectratio]{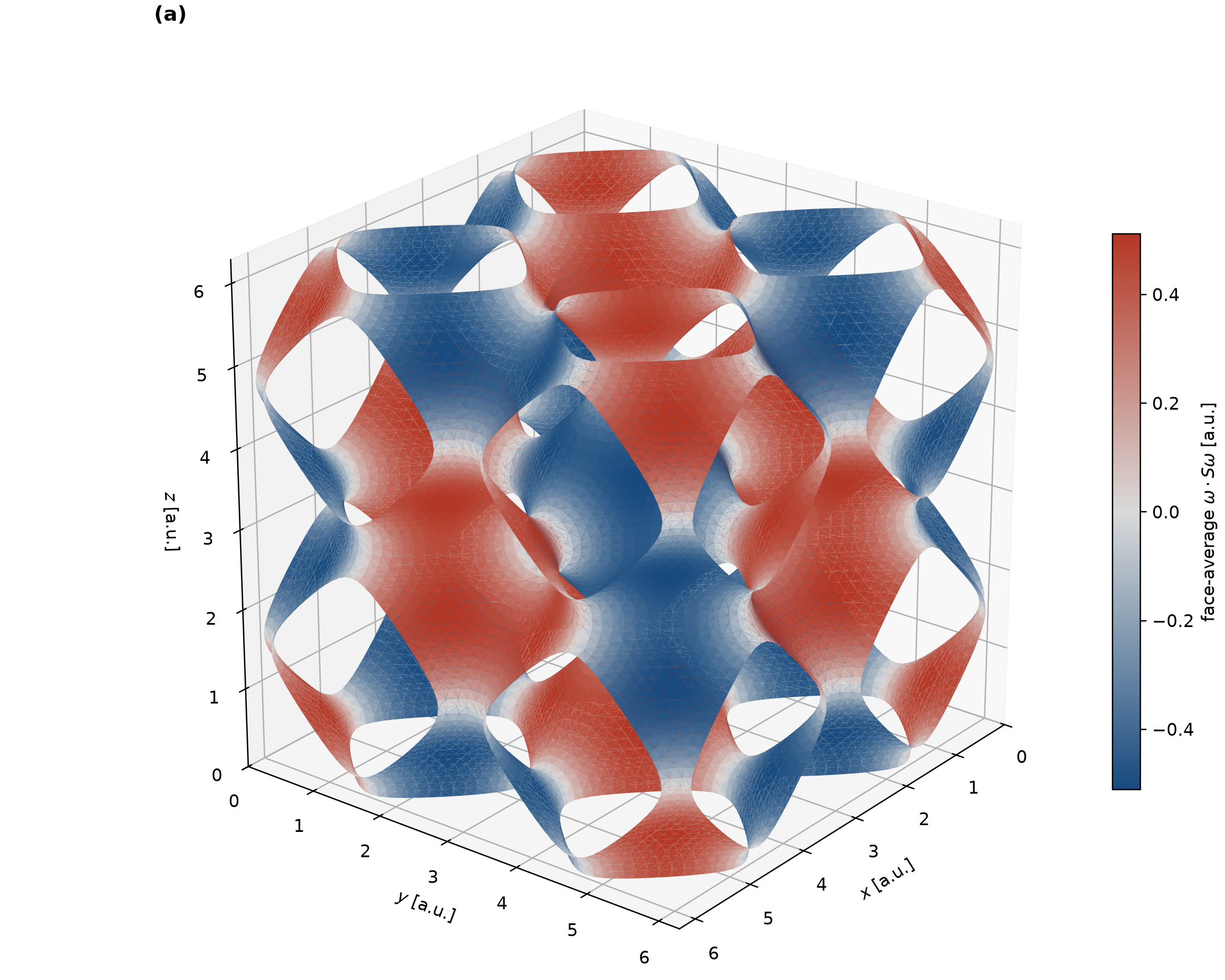}
\caption{(a) $|\omega|=0.9614$ VTK isosurface of the analytic $64^3$
centre field, coloured by face-averaged $\omega\cdot S\omega$, 53,864
triangles.}
\label{fig:center3d}
\end{figure}

Figure~\ref{fig:center3d} adds topology to the planar slices. Red and blue
patches identify the two stretching signs on one fixed-vorticity surface,
so colour encodes stretching rather than vorticity magnitude. Their interlocking arrangement
shows why a scalar magnitude or a positivity hypothesis would discard relevant
phase information. The plot evaluates the exact field at $t=0$ and exposes
its initial stretching geometry. The field begins on one
Laplacian shell with $R(0)=0$, while the nonzero quadratic acceleration in
Figure~\ref{fig:centerpanels}(e) creates eigenvalue-two content. The proof
therefore controls the full signed Fourier residual rather than assuming a
global stretching sign.

\subsubsection{Matched geometry of the three proved centres}
\label{sec:multigeometry}

The geometric comparison evaluates the three theorem centres at their unit-viscosity upper amplitudes
on the same $64^3$ grid and renders them with identical cameras,
periodic streamline integration, deterministic seeding, and dimensionless
colour ranges. The matched settings expose geometric differences without a
viewpoint or normalisation change. Because all panels represent initial fields, the comparison describes
geometry rather than temporal evolution. Vorticity-line colour uses
$\sigma_\omega=(\omega\cdot S\omega)/(|\omega|^2\|S\|_F)$, with zero
assigned when the denominator vanishes. Directional alignment is
$|\widehat\omega\cdot e_{\max}|$, where $\widehat\omega=\omega/|\omega|$
and $e_{\max}$ is a unit eigenvector of the largest strain eigenvalue.
Appendix~\ref{app:physicalidentities} gives the bounds and the convention
for repeated eigenvalues.

\begin{figure}[t]
\centering
\includegraphics[width=\linewidth,height=0.54\textheight,keepaspectratio]
{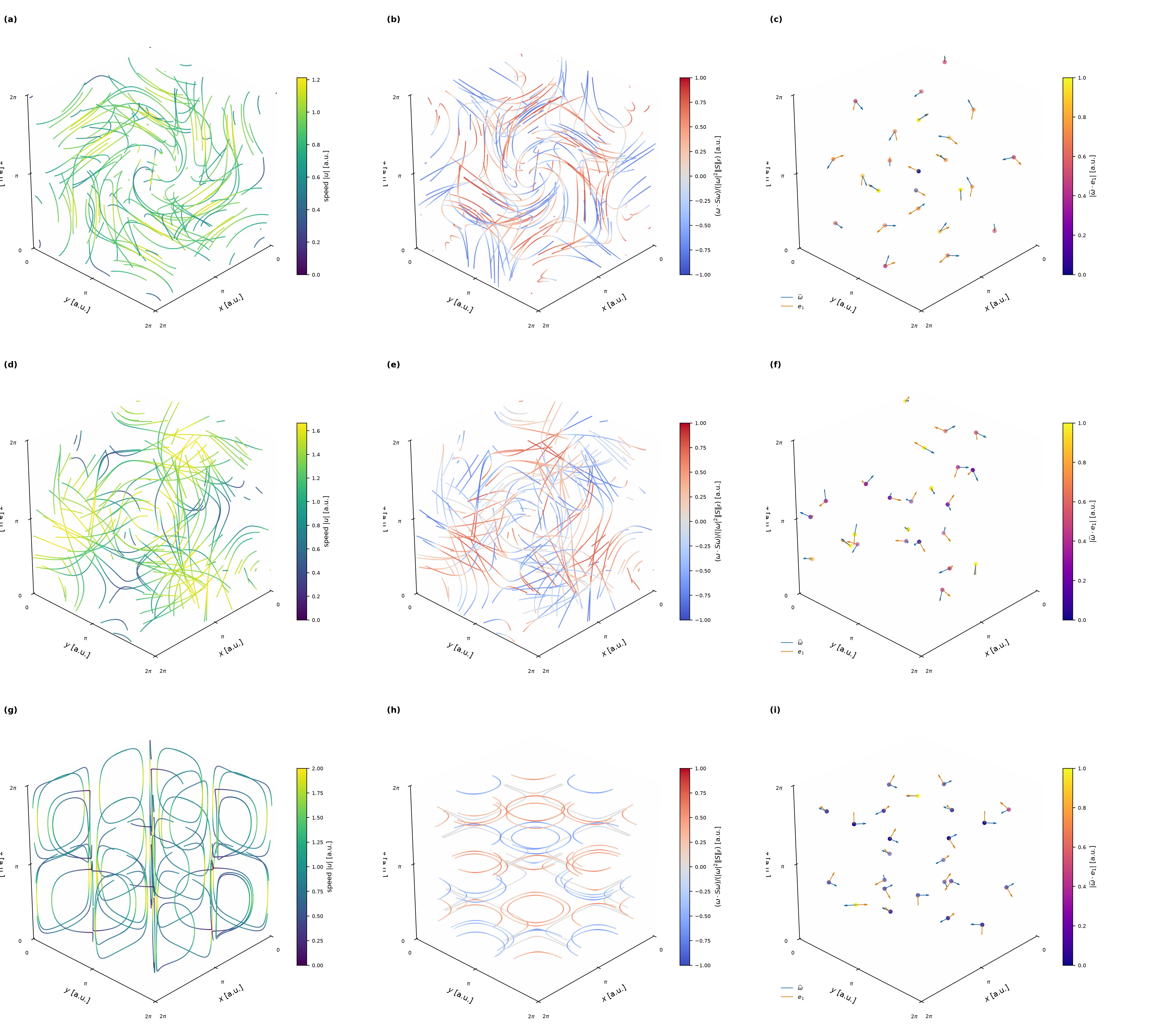}
\caption{Matched analytic geometry at $t=0$ for the three proved centres.
(a) Cyclic-shear velocity streamtubes coloured by speed. (b) Cyclic-shear
vorticity lines coloured by normalised signed stretching $\sigma_\omega$.
(c) Cyclic-shear vorticity and principal-extensional-strain directions coloured
by absolute alignment. (d) Equal-ABC velocity streamtubes coloured by speed.
(e) Equal-ABC vorticity lines coloured by $\sigma_\omega$. (f) Equal-ABC paired
directions coloured by absolute alignment. (g) TG3 velocity streamtubes coloured
by speed. (h) TG3 vorticity lines coloured by $\sigma_\omega$. (i) TG3 paired
directions coloured by absolute alignment. All panels use common cameras,
grids, integration rules, seed selection, and dimensionless colour ranges.}
\label{fig:multicentergeometry}
\end{figure}

Figure~\ref{fig:multicentergeometry}(a) shows the cyclic-shear centre's curved
velocity transport through all three coordinate directions. Panel (b) maps
normalised stretching along its vorticity lines. The alternating blue and red
segments identify compressive and extensional portions of the same coherent
curves. Panel (c) shows a broad alignment range between vorticity and the
principal extensional strain direction. Panel (d) displays the equal-ABC
velocity lines. Panel (e) reproduces the same line geometry for vorticity
because $\nabla\times U_{\rm ABC}=U_{\rm ABC}$, while its two-sign colour field
shows that zero projected nonlinearity does not require pointwise zero strain.
Panel (f) displays the corresponding alignment field. Panel (g) resolves the
more symmetric cellular velocity transport of TG3. Panel (h) shows two-sign
stretching on its vorticity lines. Panel (i) shows the TG3 alignment field.
The nonzero Fourier output proved in Appendix~\ref{app:algebra} distinguishes
this geometry from the Beltrami ABC case. All nine panels evaluate analytic
initial fields. The sampled curves represent the closed-form fields used directly in the
theorems.

\subsubsection{Paired-resolution configuration-space ensemble}
\label{sec:ensemble}

A preregistered finite-window ensemble contains 4,096 quasi-random
configurations and two discretizations per configuration, hence 8,192
Fourier--Galerkin trajectories. Independently of the exact comparison used in
Theorem~\ref{thm:main}, this ensemble quantifies finite-resolution variation
within the declared parameterisation.

The ensemble addresses variability rather than selecting a representative
trajectory. A scrambled Sobol low-discrepancy design supplies eight
coordinates. Two set the
centre amplitude $a$ and the exact perturbation radius
$\rho=\|h\|_{\A^2}$. The others control spectral tilt, phase coherence, phase
origin, two phase-dispersion coordinates, and polarization. For a fixed list
of 16 wave vectors $k_j$ with $1<|k_j|\leq\sqrt6$, choose an orthonormal pair
$e_{j,1},e_{j,2}\perp k_j$ and set
\begin{align}
 \widehat h(k_j)&=c_j\e^{i\phi_j}
 (\cos\psi_j\,e_{j,1}+\sin\psi_j\,e_{j,2}),\qquad
 \widehat h(-k_j)=\overline{\widehat h(k_j)},\label{eq:ensembleh}\\
 c_j&=C_\rho |k_j|^\beta,\qquad
 \sum_j2|k_j|^2|c_j|=\rho.\label{eq:ensemblenorm}
\end{align}
Appendix~\ref{app:numerics} verifies the reality, solenoidality, and exact
$\A^2$ normalisation of this perturbation parameterisation.
Thus every sample is real and divergence free, and normalisation
\eqref{eq:ensemblenorm} enforces its recorded $\A^2$ radius rather than an
expected radius. The phase-coherence coordinate continuously interpolates
between a common phase and the dispersed deterministic phases used by the
Sobol design. It is a parameter of this 16-mode family, not a universal
measure of Fourier coherence.

The first stratum sampled 2,048 points in
\begin{align}
 &0\leq a\leq0.7,\qquad0\leq\rho\leq0.01,
 \label{eq:detail088}
\end{align}
and the second sampled 2,048 points in the stress box
\begin{align}
 &0.5\leq a\leq1.3,\qquad0.005\leq\rho\leq0.05.
 \label{eq:detail089}
\end{align}
Their overlap produced 2,105 theorem-box configurations and 1,991 exterior
configurations. The solver evolved each datum at $(n,K,\Delta t)=(16,3,0.01)$ and
$(20,4,0.005)$ to $T=0.5$. The diagnostics use the moments $E,X,Z$ defined in
Section~\ref{sec:setting}, with the cutoff subscript suppressed.
Here $D_R=\|A^{1/2}(A-\mu)u\|_2^2$ is centred dissipation and $F$ is the
signed source in $R'/2+\nu D_R=F$, derived in
Appendix~\ref{app:observable}. The computation records
\begin{align}
 &J_{0.5}=\int_0^{0.5}R(t)^{2/3}\dd t,\qquad
 C_F=\frac{|\int_0^{0.5}F(t)\dd t|}
 {\int_0^{0.5}|F(t)|\dd t},
 \label{eq:ensemblemetrics}
\end{align}
as well as $E,X,R,D_R$, the outer-cutoff-layer $Z$ fraction, divergence, and
nonlinear energy orthogonality. In the following comparison, subscripts
$c$ and $f$ denote coarse and fine calculations at $m$ common output times.
The remaining quantities measure the spectral-tail fraction, normalised
residual-balance defect, divergence and nonlinear energy orthogonality,
respectively, as defined in Appendix~\ref{app:timestepping}.
A trajectory is numerically trusted only when all six fixed gates hold:
\begin{align}
 &\begin{aligned}
 \frac{|J_c-J_f|}{J_f}&\leq10^{-3},&
 \frac{\left(m^{-1}\sum_{j=1}^{m}|R_c(t_j)-R_f(t_j)|^2\right)^{1/2}}
 {\max_j|R_f(t_j)|}&\leq10^{-5},&
 \mathrm{Tail}_Z&\leq10^{-8},\\
 \mathrm{Bal}_R&\leq5\times10^{-4},&
 \mathrm{Div}&\leq10^{-10},&
 \mathrm{Orth}&\leq10^{-10}.
\end{aligned}
\label{eq:trustgates}
\end{align}
The balance tolerance reflects trapezoidal integration of the diagnostic
identity, whereas the PDE step uses RK4. The paired calculation supplies a discretization-convergence
diagnostic rather than an interval enclosure.
Appendix~\ref{app:statistics} defines the cohort predicates, empirical
quantiles, Pearson correlations, gate counts, and order-statistic selections
used in this subsection.

\begin{figure}[t]
\centering
\includegraphics[width=\linewidth,height=0.46\textheight,keepaspectratio]{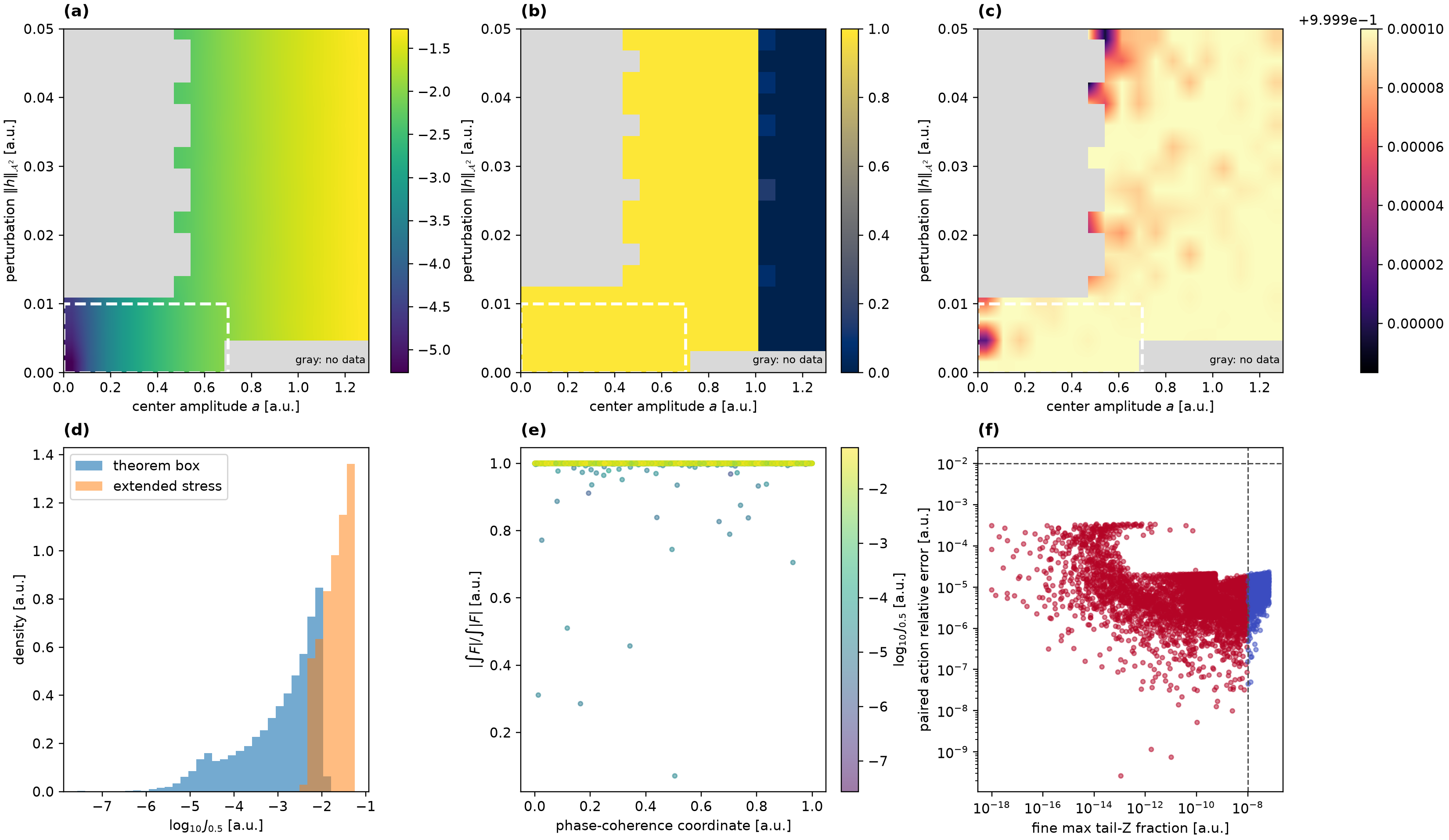}
\caption{Configuration-space ensemble: (a) binned mean $\log_{10}J_{0.5}$,
(b) trusted fraction under \eqref{eq:trustgates}, (c) median $C_F$,
(d) action densities, (e) phase-coordinate/cancellation scatter,
(f) tail and paired-action errors. Dashed rectangles mark the theorem box.
Gray heatmap cells contain no sampled datum.}
\label{fig:ensembleconfig}
\end{figure}

Figure~\ref{fig:ensembleconfig}(a) shows a monotone large-scale transition in
the action: across this design, $\log_{10}J_{0.5}$ has Pearson correlation
$0.916$ with $a$ and $0.538$ with $\rho$, but correlations with spectral tilt
and the phase coordinate are below $0.002$. These descriptive correlations
do not establish causation. They show that the centre amplitude dominates
the comparatively small 16-mode perturbation in this finite window. Panel
(b) separates proof membership from numerical resolution. All 2,105 theorem-box
samples pass every gate, whereas 1,257 of 1,991 exterior samples pass. The loss of
trust near $a\simeq1$ is entirely the outer-layer $Z$ gate, not a numerical
blowup or a violation of an analytic theorem. Gray regions were not sampled
and carry no inferred value.

Figure~\ref{fig:ensembleconfig}(c) shows that the median $C_F$ is nearly one
through most populated bins. Only 35 of 4,096 samples have $C_F<0.99$, and the
minimum is $0.0708$. At viscosity one and over $T=0.5$, the signed residual
source is therefore predominantly one-signed in this perturbation family.
This does not contradict the pointwise red/blue stretching in
Figure~\ref{fig:centerpanels}(d), because $F$ is a different, spatially
integrated residual observable. Panel (d) quantifies the separation in action:
the theorem-box $5\%/50\%/95\%$ quantiles are
$1.66\times10^{-5}$, $1.89\times10^{-3}$, and $9.76\times10^{-3}$. The
exterior quantiles are $5.70\times10^{-3}$, $2.22\times10^{-2}$, and
$5.24\times10^{-2}$. Panel (e) shows no systematic phase-coordinate trend.
The chosen phase coordinate therefore does not identify a cancellation
direction in this design. Panel (f) identifies the actual trust boundary: every
paired-action error is below $3.39\times10^{-4}$, while 734 exterior trajectories cross
the $10^{-8}$ spectral-tail threshold.

\begin{figure}[t]
\centering
\includegraphics[width=\linewidth,height=0.45\textheight,keepaspectratio]{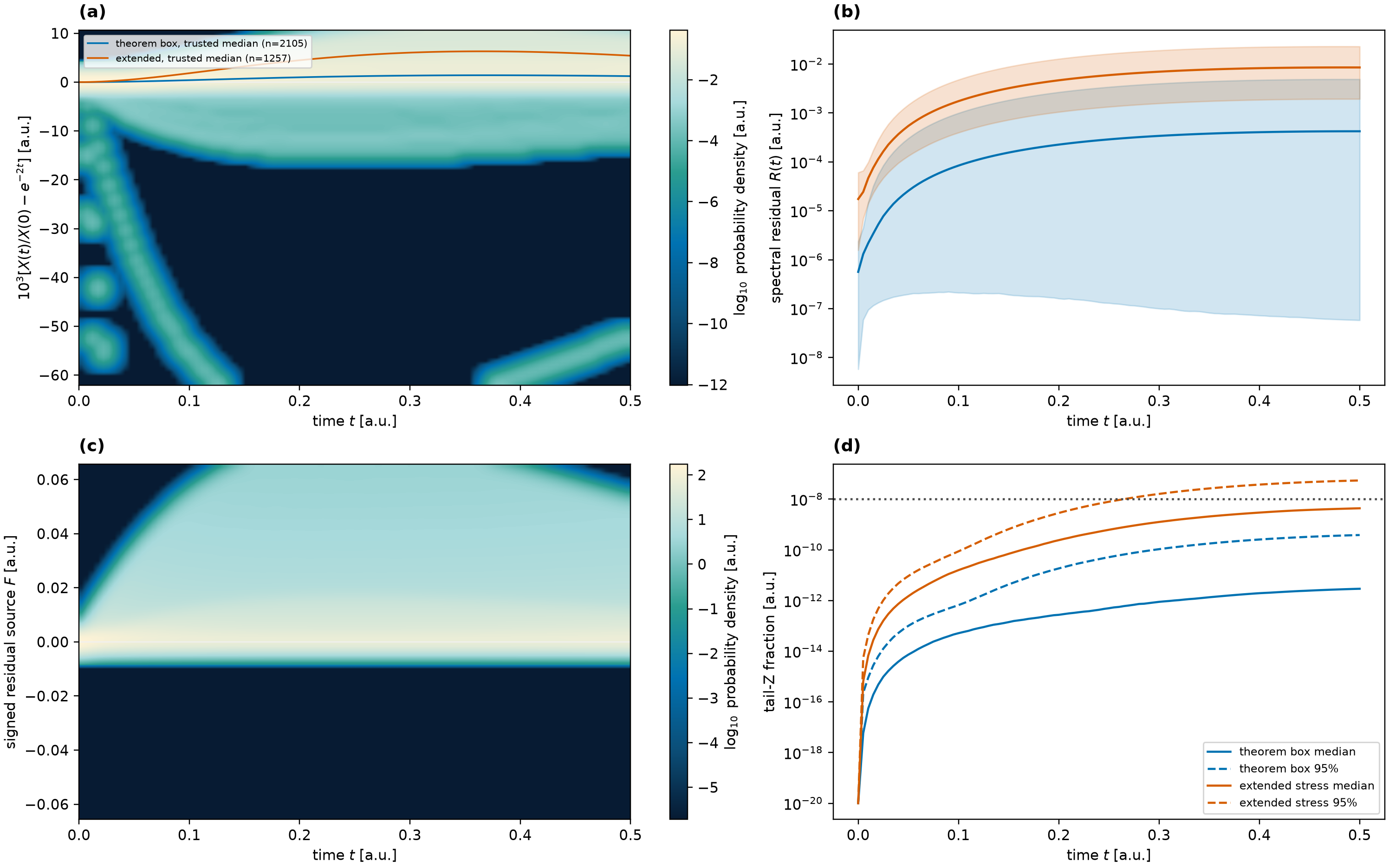}
\caption{Ensemble time transitions: (a) time-resolved density of
$10^3[X(t)/X(0)-e^{-2t}]$ over all trusted trajectories, with cohort medians,
(b) median and 5--95\% bands of $R$, (c) time-resolved density of the signed
source $F$, estimated with a mass-renormalized Gaussian kernel, (d) median and
95th-percentile outer-layer $Z$ fractions. Blue denotes theorem-box data and
vermillion denotes the exterior stress cohort throughout.}
\label{fig:ensembletransitions}
\end{figure}

Figure~\ref{fig:ensembletransitions}(a) resolves the narrow distribution
around the single-shell viscous reference $e^{-2t}$. The density uses every
trusted trajectory. The two thin curves are cohort medians, not the underlying
sample. Positive values measure nonlinear or multishell retardation relative
to that reference, while negative values indicate faster normalised decay.
Panel (b) restores the absolute distinction: exterior configurations produce a substantially larger
residual distribution, whereas the theorem-box band remains lower and wider
in relative terms because samples near $a=0$ are perturbation dominated.
Panel (c) exposes the distribution suppressed by a single mean curve. The
Gaussian kernel acts only on the displayed empirical density. Each time slice
is renormalized to unit probability and no trajectory value is altered. Its
high-density ridge lies just above $F=0$, with a broad positive tail and a
much thinner negative population. Panel (d) explains the trust transition:
the theorem-box 95th percentile stays below $10^{-9}$, while the exterior
95th percentile crosses the $10^{-8}$ gate near the end of the window. The
crossing diagnoses the active resolution boundary and motivates the
higher-cutoff exterior campaign.

\subsubsection{High-resolution spectral-boundary refinement}
\label{sec:refinement}

The resolution warning in Figure~\ref{fig:ensembletransitions}(d) motivates a
second study based on 512 configurations selected before the refined runs.
The selection contains 228 theorem-box configurations and 284 exterior
configurations. It is deliberately enriched near the original cutoff-four
tail boundary and includes fixed extreme-case controls. It therefore resolves
that boundary efficiently but does not represent a probability sample. The solver
embedded each datum without modification at cubical cutoffs $K=5,6,7$ and repeated
a 64-case subset at half the cutoff-seven time step. This design
comprises 1,600 trajectories. At cutoff $K$, the moving outer-shell fraction
is
\begin{align}
 &q_K=\max_{0\leq t\leq0.5}
 \frac{\sum_{\|k\|_\infty=K}|k|^4|\widehat u_k(t)|^2}
      {\sum_{\|k\|_\infty\leq K}|k|^4|\widehat u_k(t)|^2}.
 \label{eq:highrestail}
\end{align}
The frozen numerical classification requires cutoff-six/cutoff-seven action
agreement, residual-trajectory agreement, $q_7\leq10^{-8}$, and independent
balance, divergence, and nonlinear energy-orthogonality gates. These
finite-discretization diagnostics complement the exact bounds used by the
theorem.

\begin{figure}[t]
\centering
\includegraphics[width=\linewidth,height=0.47\textheight,keepaspectratio]{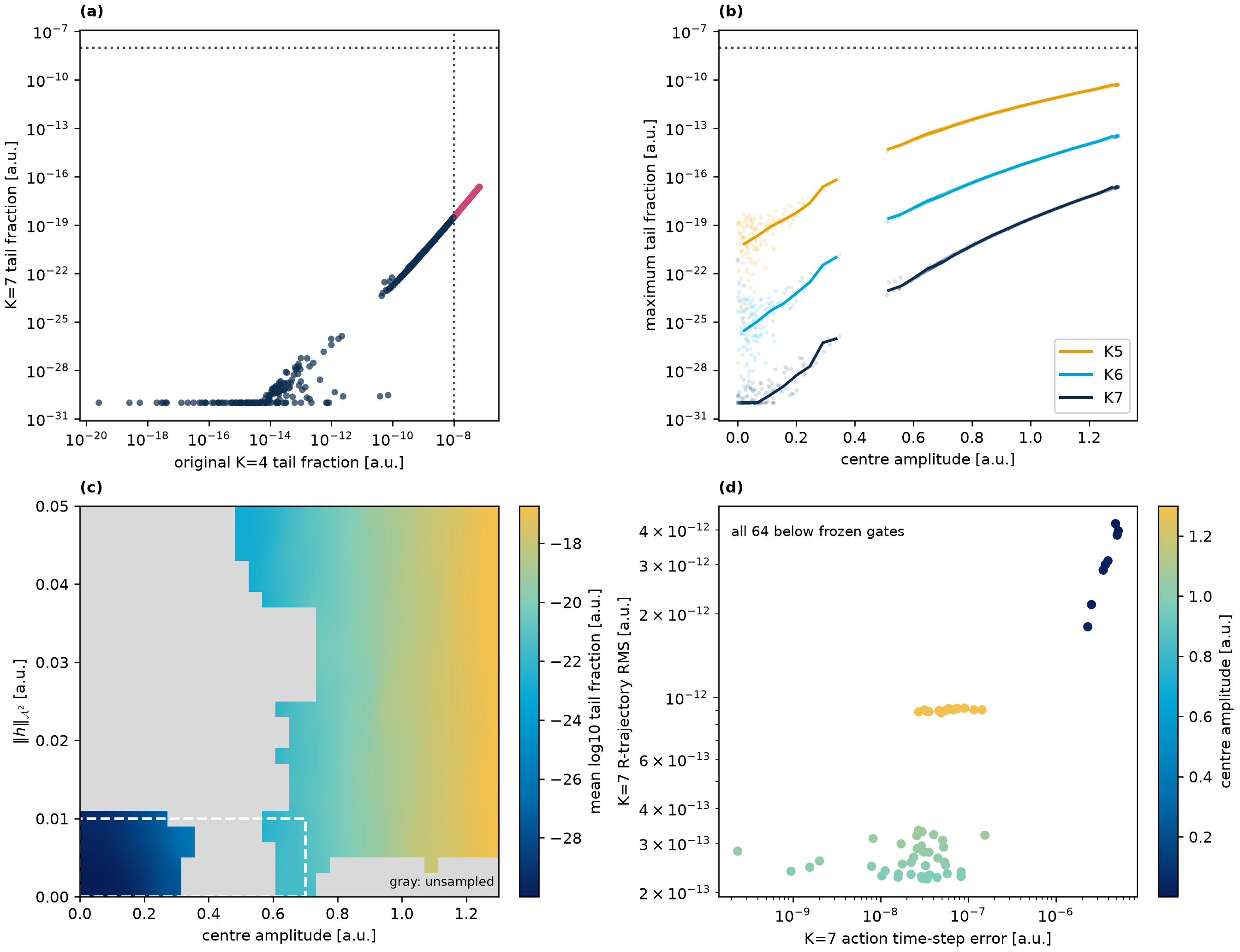}
\caption{High-resolution audit of 512 configurations selected before the
refined calculations. (a) Cutoff-four against cutoff-seven outer-shell
fractions, with navy and magenta denoting the original pass and fail classes.
(b) Individual outer-shell fractions and binned medians at cutoffs five, six,
and seven. (c) Display-smoothed mean cutoff-seven log-tail fraction in the
amplitude--perturbation plane, where the dashed rectangle marks the proved
neighbourhood and gray cells contain no selected configuration. (d) Half-step errors for the
64-case temporal-refinement subset. Dotted lines mark frozen gates.}
\label{fig:highresphase}
\end{figure}

Figure~\ref{fig:highresphase}(a) shows that all 124 configurations that failed
the original cutoff-four tail gate fall below the same threshold at cutoff
seven. Panel (b) demonstrates a systematic contraction rather than an
isolated two-grid change. Median outer-shell fractions decrease from
$9.29\times10^{-14}$ at cutoff five to $7.80\times10^{-18}$ at cutoff six
and $6.09\times10^{-22}$ at cutoff seven. Their respective maxima are
$5.12\times10^{-11}$, $3.37\times10^{-14}$, and $2.39\times10^{-17}$.
Because the measured shell moves with the cutoff, this trend establishes
finite-family spectral localization rather than a cutoff-independent bound.
The maximum relative action differences are $8.44\times10^{-5}$,
$2.11\times10^{-5}$, and $1.81\times10^{-14}$ for the three successive cutoff
pairs. Panel (c) locates the largest resolved tails at
high centre amplitude. Its irregular
gray boundary records unsampled parameter bins, not a physical transition.
The trust reclassification contains four counts: no configuration fails at
cutoff seven, all 124 original failures become passes, and all 388 original
passes remain passes. Panel (d) shows that all 64 half-step comparisons
satisfy both frozen temporal gates. Together these results separate the
spatial reclassification from its temporal check.

\begin{figure}[!t]
\centering
\includegraphics[width=\linewidth,height=0.36\textheight,keepaspectratio]{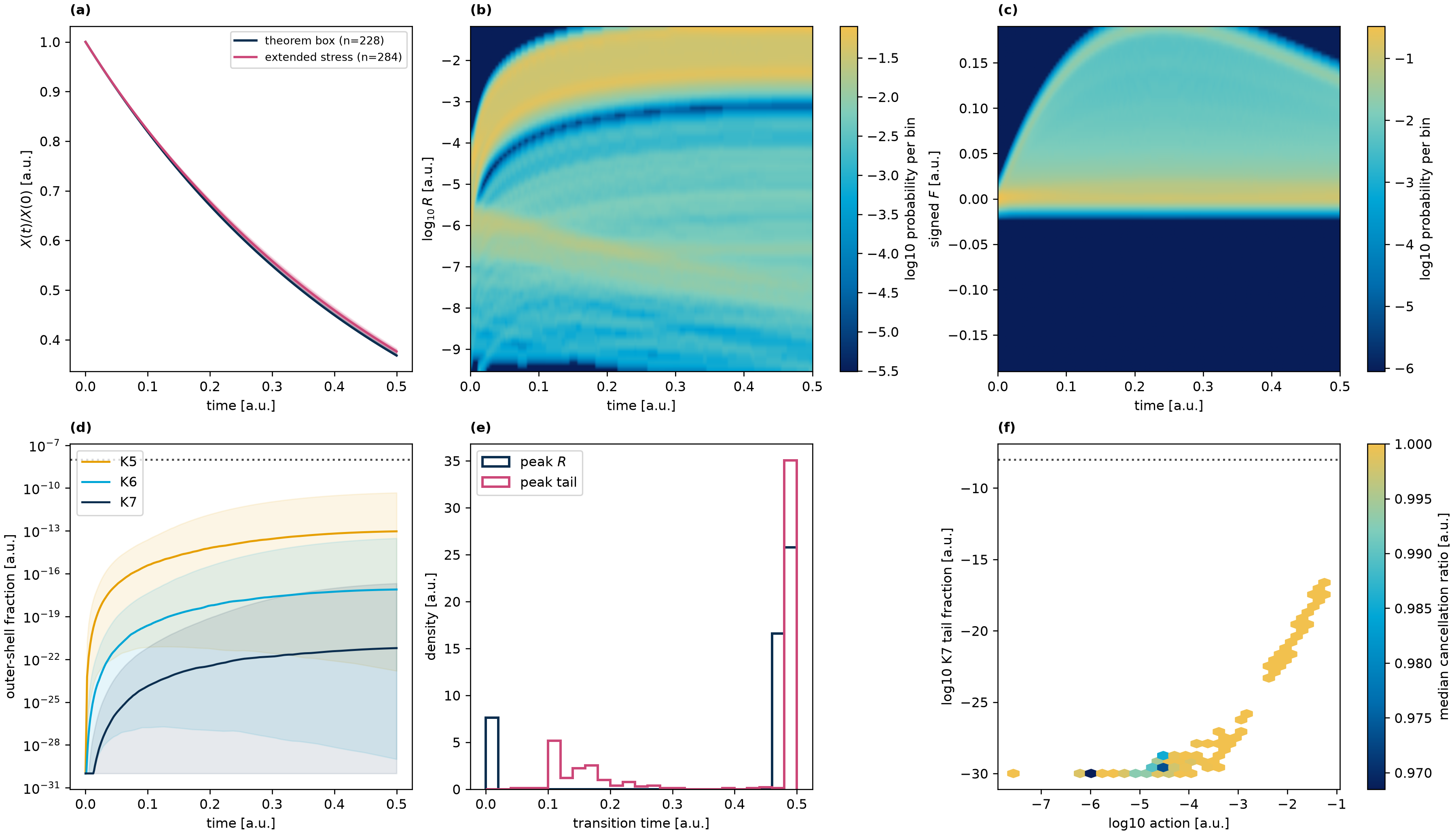}
\caption{Time-resolved distributions for the 512 cutoff-seven trajectories.
(a) Median, interquartile, and 5--95\% bands of normalised enstrophy for the
theorem-box and exterior cohorts. (b) Kernel-smoothed bin probabilities of $\log_{10}R$.
(c) Kernel-smoothed bin probabilities of the signed residual source $F$. (d) Median and
5--95\% bands of the moving outer-shell fraction at cutoffs five, six, and
seven. (e) Distributions of the times at which $R$ and the outer-shell
fraction attain their maxima. (f) Hexagonal aggregation of action and
cutoff-seven tail fraction, coloured by median signed cancellation. Density
normalisation conditions on the displayed range and assigns unit total probability to each smoothed time column.}
\label{fig:highrestransitions}
\end{figure}

Figure~\ref{fig:highrestransitions}(a) shows viscous enstrophy decay throughout
both cohorts over the observed window. Panel (b) resolves several persistent residual
bands that a representative curve would hide. Panel (c) separates a dense
near-zero source population from a positive branch that first grows and then
weakens. Panel (d) displays both the ensemble spread and its systematic
contraction with increasing cutoff. Panel (e) shows that the residual maximum
usually occurs at entry or near the terminal time, while the tail maximum is
commonly terminal. Panel (f) identifies centre amplitude as the main sampled
coordinate along which action and tail occupation grow together. Appreciable
signed cancellation is concentrated in the low-action portion of the design.
No cutoff-seven point approaches the $10^{-8}$ tail gate. The maximum
residual-balance, divergence, and nonlinear energy-orthogonality defects are
$1.64\times10^{-5}$, $6.51\times10^{-19}$, and $5.55\times10^{-17}$,
respectively.

\subsubsection{Physics-informed operator search across initial-data families}
\label{sec:pinoresults}

The operator experiment tests whether the physical constraints used by the
analysis can improve learned velocity evolution and focus expensive spectral
recomputations. Figure~\ref{fig:pinomain} summarises the independently
reconstructed OOD predictions and all three-level candidate rechecks.

\begin{figure}[t]
\centering
\includegraphics[width=\linewidth,height=0.47\textheight,keepaspectratio]{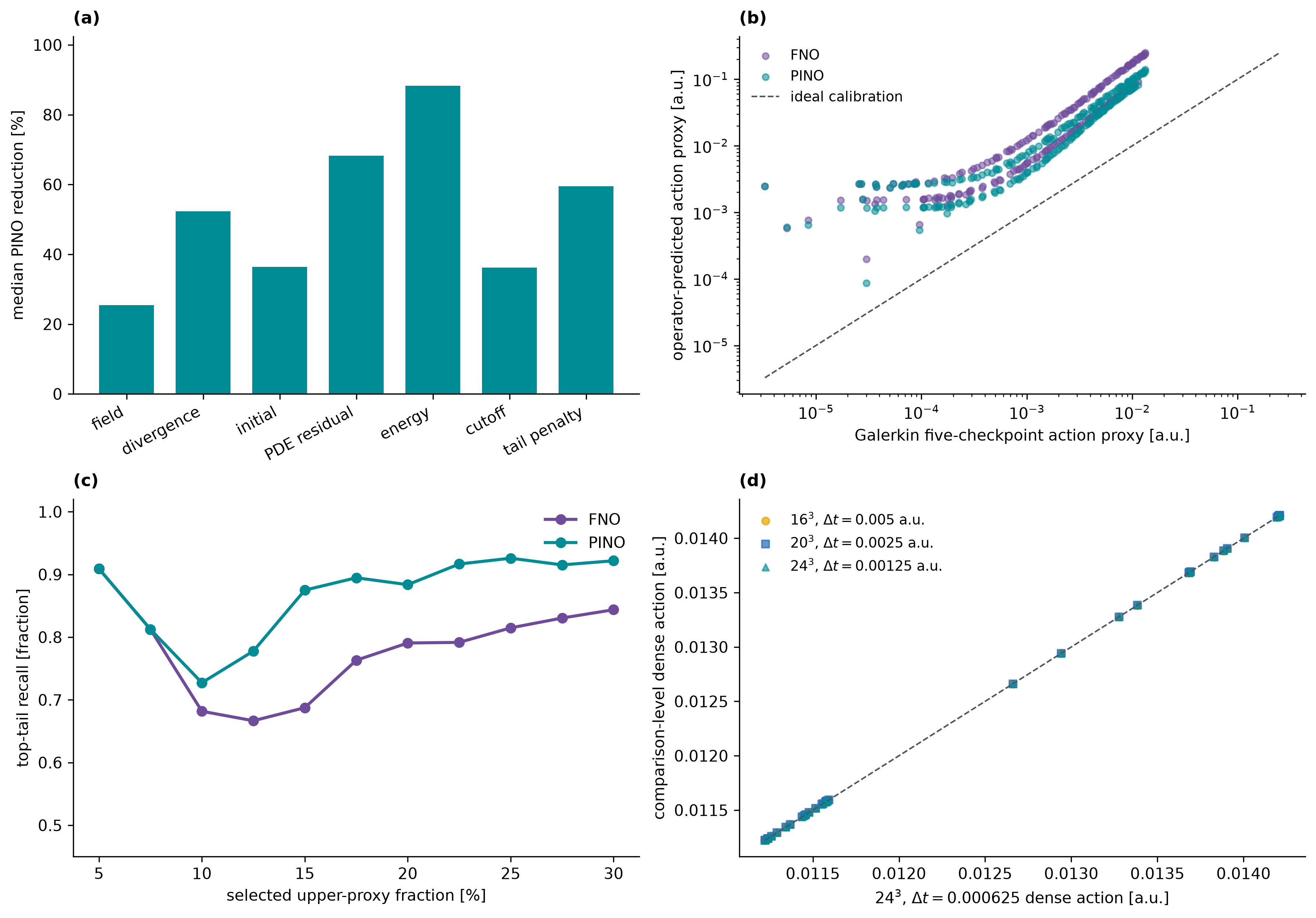}
\caption{Physics-informed neural-operator field prediction and targeted search.
(a) Percentage reduction in the OOD median recorded quantity for PINO relative to the
architecture-matched FNO for field prediction, divergence, initial condition,
projected Navier--Stokes residual, energy balance, unresolved nonlinear output,
and the predicted outer-band tail penalty. (b) Predicted five-checkpoint residual-action
proxy against the Galerkin proxy for 212 OOD fields. (c) Action-proxy upper-tail recall as the
selected fraction varies from 5\% to 30\%. (d) Actions for 48 selected fields
from dense, every-step quadrature at three grid--time-step pairs against the
$24^3$, $\Delta t=0.000625$ calculation. All actions, time steps, and nondimensional errors use
arbitrary units (a.u.).}
\label{fig:pinomain}
\end{figure}

Figure~\ref{fig:pinomain}(a) shows that physics-informed training lowers the
median of all seven recorded OOD quantities for this fitted seed. Median field error decreases by $25.5\%$, divergence
error by $52.3\%$, initial-condition error by $36.4\%$, projected
Navier--Stokes residual by $68.2\%$, energy-balance error by $88.3\%$,
native-grid cutoff-sensitivity defect by $36.2\%$, and predicted outer-band
tail penalty by $59.5\%$. Paired
resampling intervals for the FNO-minus-PINO metric differences remain
positive for every recorded quantity. These intervals resample the fixed OOD
cases and do not measure variability across training seeds. The reductions
therefore describe these fitted models rather than a universal effect of the
physics losses.

Panel (b) shows that both operators preserve the ordering of the action proxy more
accurately than its absolute scale. PINO reaches an OOD Spearman correlation of
$0.973$, compared with $0.964$ for FNO, and lowers the median predicted-to-
Galerkin proxy ratio from $12.5$ to $8.0$. Panel (c) adds a threshold-swept
action-proxy retrieval experiment. The models tie at the upper $5\%$ and
$7.5\%$ thresholds. PINO exceeds FNO at every displayed selection fraction
from $10\%$ to $30\%$, reaching recall $0.922$ rather than $0.844$ at $30\%$.
This exploratory sweep favours PINO over part of the displayed range. The
independently specified composite action--tail top-decile endpoint instead
decreases from $0.955$ for FNO to $0.864$ for PINO, with a paired 95\%
interval $[-0.227,0.045]$ for the gain. Appendix
\ref{app:pino} and Figure~\ref{fig:pinoappendix} give that complementary
endpoint and the complete diagnostic set.

Panel (d) closes the search loop with direct spectral recomputation rather than
neural prediction. The dense action records $R^{2/3}$ at every RK4 step for all
48 selected fields. Relative to the $24^3$, $\Delta t=0.000625$ calculation,
the maximum discrepancy is $1.50\times10^{-4}$ at $(16^3,0.005)$,
$2.09\times10^{-5}$ at $(20^3,0.0025)$, and $4.17\times10^{-6}$ for the
$24^3$ half-step comparison. By contrast, the original five-checkpoint proxy
differs from dense quadrature by $1.12\%$--$4.14\%$, with median $2.65\%$.
Under matched candidate counts, the best neural proposal exceeds the stronger
PINO-screened random-or-Sobol proposal by $0.68\%$ in the cyclic-shear family
and $1.46\%$ in the TG3 family when evaluated with the finest dense action.
None of the 48 fields enlarges an exactly proved neighbourhood. The exact
theorem uses only the independently validated coefficient cells established
above.

\subsubsection{Objectively selected dynamic vortex geometry}
\label{sec:dynamicgeometry}

Three ensemble members selected by fixed order statistics were rerun at
$48^3$, cutoff $K=8$, and $\Delta t=0.0025$. The selection rule excludes
manual choice based on appearance. The reruns provide the fields used in the
dynamic three-dimensional panels.

To prevent visual selection bias, the cases are the largest-action trusted
theorem member, the largest-action trusted exterior member, and the largest
tail among rejected exterior members. For each state, the displayed surface
satisfies
\begin{align}
 &\frac{|\omega(x,t)|}{\|\omega(t)\|_\infty}=0.78,
 \qquad
 c(x,t)=\frac{\omega\cdot S\omega}
 {\|\omega\cdot S\omega\|_\infty},
 \label{eq:vortexsurface}
\end{align}
where $c\in[-1,1]$ supplies the colour. The $Q$-criterion and helicity stored
in each Visualization Toolkit image-data (VTI) volume are
\begin{align}
 &Q=\frac14|\omega|^2-\frac12|S|^2,
 \qquad \mathcal H=u\cdot\omega.
 \label{eq:qhelicity}
\end{align}

\begin{figure}[t]
\centering
\includegraphics[width=\linewidth,height=0.48\textheight,keepaspectratio]{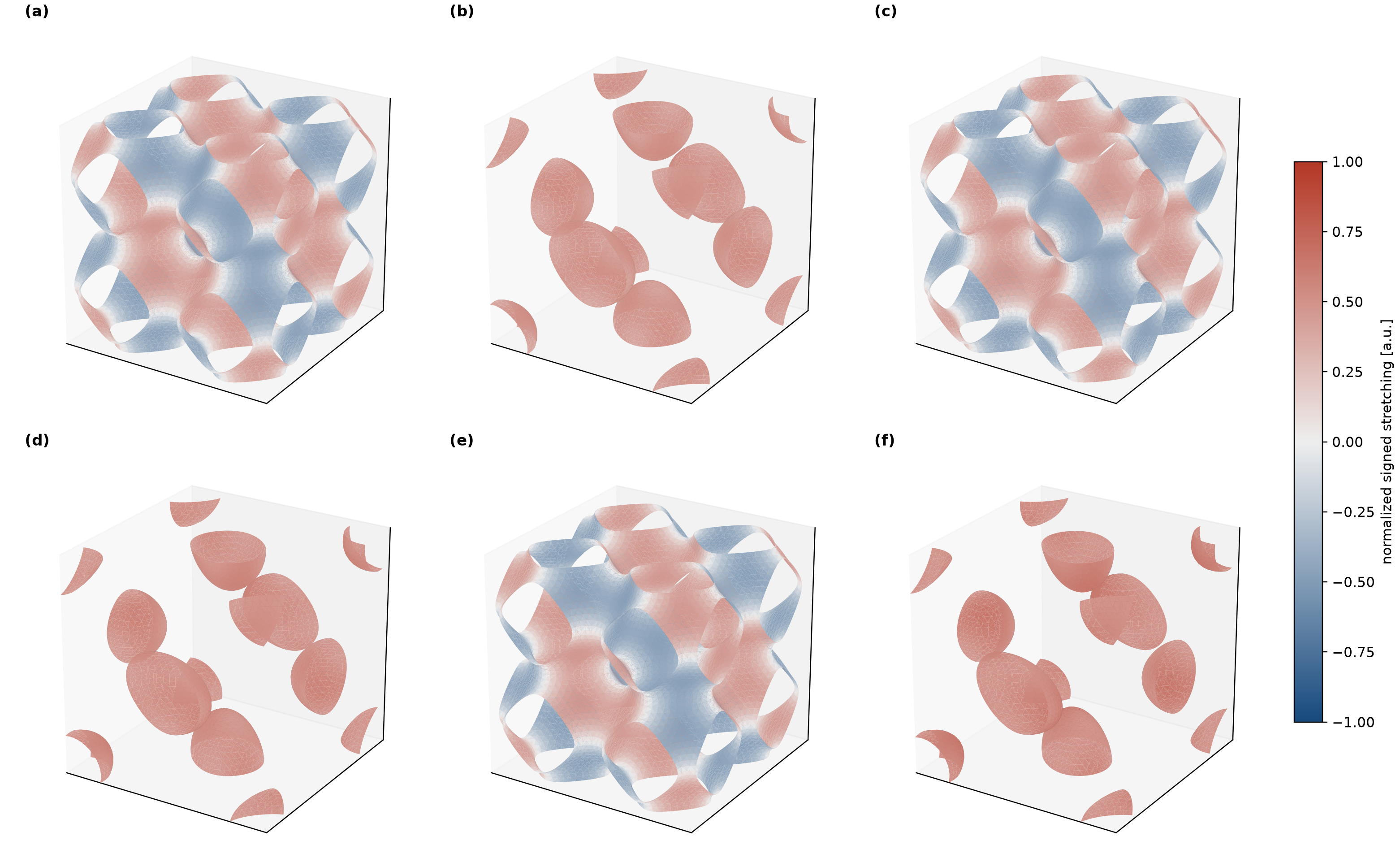}
\caption{Selected dynamic vortex surfaces. (a) Theorem-box maximum-action case
at $t=0$. (b) The same case at $t=0.5$. (c) Trusted exterior maximum-action
case at $t=0$. (d) The same case at $t=0.5$. (e) Original cutoff-four
maximum-tail case at $t=0$. (f) The same case at $t=0.5$. Geometry and colour
follow \eqref{eq:vortexsurface}. All panels use the same camera and colour
normalisation.}
\label{fig:dynamicvortices}
\end{figure}

\begin{center}\small
\begin{tabular}{@{}lrrrrrr@{}}
\toprule
Selection & $a$ & $\rho$ & $J_{0.5}$ & max tail $Z$ & $\|\omega(0)\|_\infty$ & $\|\omega(.5)\|_\infty$\\
\midrule
Theorem max action & 0.6998 & 0.00305 & 0.01107 & $5.57\times10^{-10}$ & 1.212 & 0.882\\
Exterior trusted & 1.0128 & 0.02601 & 0.02947 & $9.94\times10^{-9}$ & 1.756 & 1.360\\
Exterior max tail & 1.2990 & 0.02504 & 0.05670 & $6.90\times10^{-8}$ & 2.253 & 1.829\\
\bottomrule
\end{tabular}
\end{center}

Figure~\ref{fig:dynamicvortices}(a) shows the theorem-box maximum-action case
at $t=0$. Its intense-vorticity surface retains the interlocking single-shell
topology and contains both stretching signs. Panel (b) shows the same case at
$t=0.5$. The surface separates into lobes and becomes predominantly red on
the conditional level $|\omega|/\|\omega\|_\infty=0.78$. Panel (c) shows the
trusted exterior maximum-action case at entry, with the same interlocking
topology and two-sign stretching. Panel (d) shows its terminal geometry, where
separated positively stretched lobes again dominate the selected level set.
Panel (e) shows the original cutoff-four rejected maximum-tail case at entry.
Panel (f) shows its terminal lobes. This case has the largest action and tail,
yet its maximum vorticity still decreases. Its rejection therefore concerns
the original numerical resolution rather than observed singular growth.
The common normalisation permits topological comparison, while the table
retains the dimensional amplitudes: all three $\|\omega\|_\infty$ values
decrease over the window. None of the panels implies positive stretching
throughout the volume. Surfaces clipped at the displayed fundamental-cell
boundary reconnect periodically. Appendix~\ref{app:statistics} specifies the
fixed order-statistic selections.

The isosurfaces emphasize intense-vorticity geometry but suppress the vector
fields that generate it. Figure~\ref{fig:fieldgeometry} therefore compares
integral curves and local directions using a deterministic seeding rule. For
vorticity lines the colour is the dimensionless signed quantity
\begin{align}
 &\sigma_\omega(x,t)=
 \frac{\omega\cdot S\omega}{|\omega|^2\,\|S\|_F},
 \qquad -1\leq\sigma_\omega\leq1,
 \label{eq:normalizedstretching}
\end{align}
with values set to zero whenever $|\omega|\,\|S\|_F=0$. The alignment diagnostic is
$|\widehat\omega\cdot e_{\max}|$, where $e_{\max}$ is the unit eigenvector associated
with the largest eigenvalue of $S$.
Appendix~\ref{app:physicalidentities} proves the bound in
\eqref{eq:normalizedstretching} and evaluates its sharper extremum
$\sqrt{2/3}$ for the analytic centre.

\begin{figure}[t]
\centering
\includegraphics[width=\linewidth]{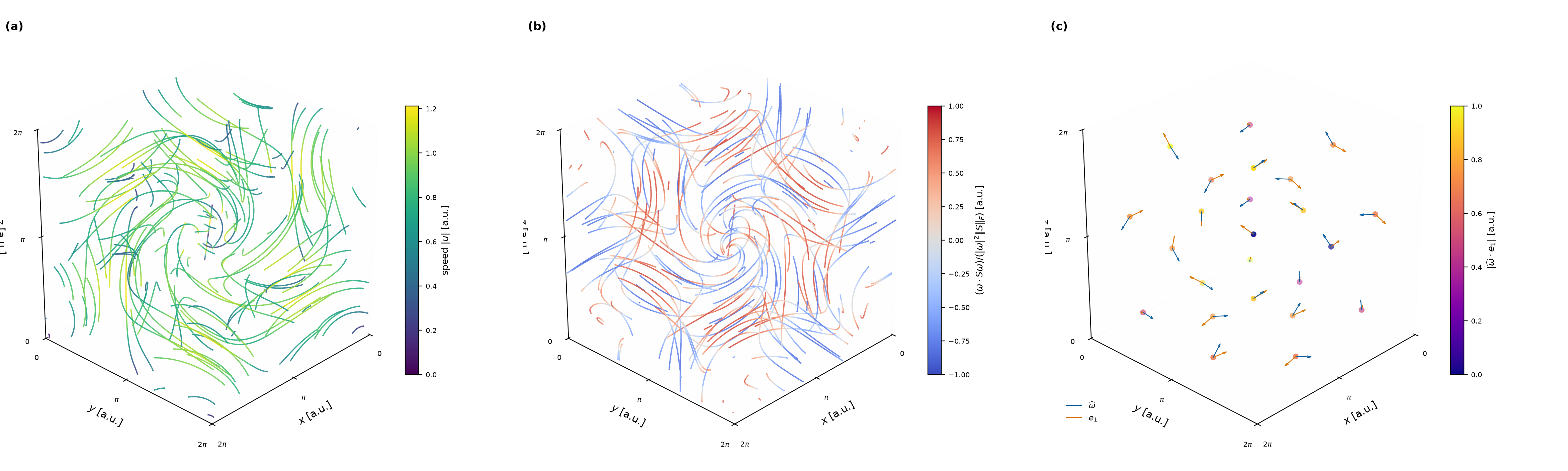}\\[-2pt]
\includegraphics[width=\linewidth]{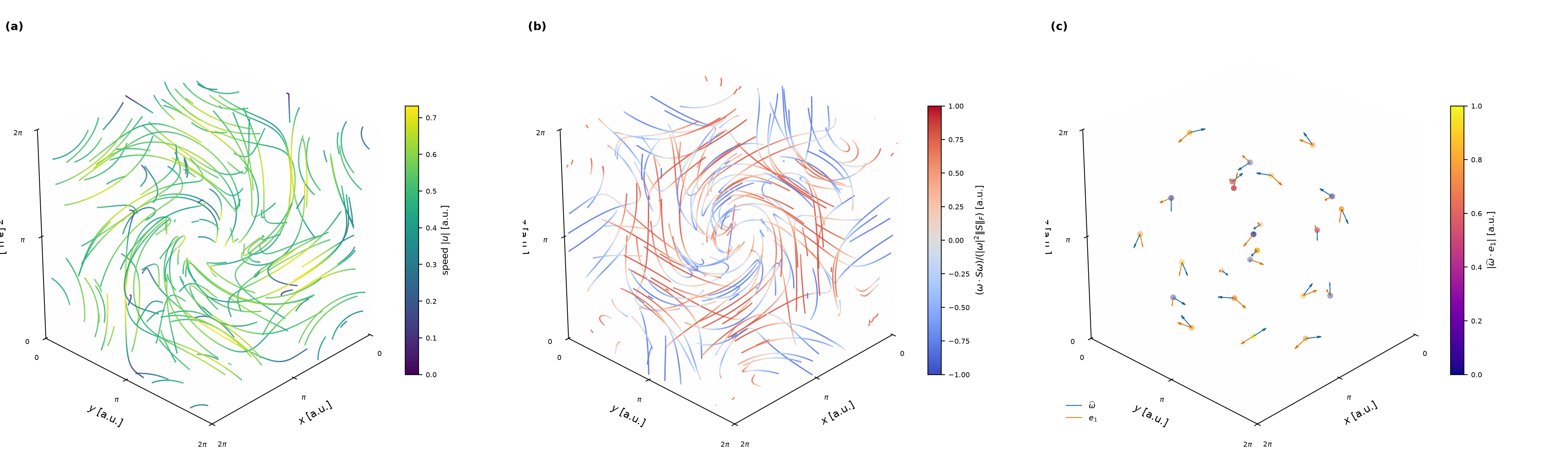}
\caption{Vector-field geometry under a common camera and deterministic seed
rule. (a) Centre velocity streamtubes coloured by speed, (b) centre vorticity
lines coloured by $\sigma_\omega$, (c) centre vorticity and principal-extensional
strain directions coloured by their absolute alignment, (d) case 1088 velocity
streamtubes at $t=0.5$, (e) case 1088 vorticity lines coloured by
$\sigma_\omega$ at $t=0.5$, (f) case 1088 paired directions coloured by absolute
alignment at $t=0.5$. Curves are split at periodic faces.}
\label{fig:fieldgeometry}
\end{figure}

Figure~\ref{fig:fieldgeometry}(a) makes the centre's three-dimensional
transport geometry explicit: none of the three velocity components is
passive, and the streamtubes turn through all coordinate directions. Panel
(b) separates the sign of instantaneous vortex stretching from vorticity
magnitude. Alternating red and blue curves show that the spatially integrated
dynamics arise from competing extensional and compressive regions rather
than a one-sign source. Panel (c) compares the normalised vorticity direction
with $e_{\max}$ at a deterministic sparse set of locations. Its median alignment
is $0.608$, with values spanning zero to one. Panel (d) applies the same camera
and integration rule to case 1088 at $t=0.5$. Its median speed decreases from
$0.857$ at the centre to $0.517$. Panel (e) shows that the median vorticity
magnitude decreases from $0.857$ to $0.503$, while normalised stretching
shifts from a zero median to $0.0517$. Its extrema remain close to the analytic
bounds $\pm\sqrt{2/3}$. Panel (f) shows the median principal-strain alignment
rising from $0.608$ to $0.658$. These changes describe the objectively
selected finite-resolution trajectory without asserting monotone alignment.
Apparent line endpoints at a bounding face
are periodic rendering cuts, not physical termination points.
Appendix~\ref{app:statistics} defines the median and range estimators.

\subsubsection{Genuine three-dimensional eigenfield transients}
\label{sec:eigentransients}

The finite-window Fourier--Galerkin calculations below accompany a
separately proved small-parameter asymptotic for exact solutions. This
asymptotic identifies the scaling tested across the varying family.

Consider the varying family $u_n(0,x)=U(nx)$ and set
\begin{align}
 &\tau=\nu n^2t,\qquad \varepsilon=(\nu n)^{-1},\qquad
 u_n(t,x)=v_\varepsilon(\tau,nx).
 \label{eq:detail096}
\end{align}
Then
\begin{align}
 &\partial_\tau v_\varepsilon+Av_\varepsilon
 =-\varepsilon B(v_\varepsilon,v_\varepsilon),
 \qquad v_\varepsilon(0)=U.
 \label{eq:detail097}
\end{align}
The first resonant Duhamel term and a uniform weighted Sobolev remainder give
\begin{align}
 &v_\varepsilon(\tau)
 =e^{-\tau}U-\varepsilon\tau e^{-2\tau}V
 +O_{H^s}(\varepsilon^2e^{-\tau}),\qquad s\geq4.
 \label{eq:detail098}
\end{align}
Since $AU=U$, $AV=2V$, $\langle U,V\rangle=0$, and
$\|V\|_2^2=3/4$, Taylor expansion of \eqref{eq:Rmoments} yields
\begin{align}
 &R(v_\varepsilon(\tau))
 =\frac34\varepsilon^2\tau^2e^{-4\tau}
 +O(\varepsilon^3e^{-2\tau}).
 \label{eq:Rasymptotic}
\end{align}
Consequently
\begin{align}
 \int_0^\infty R(u_n(t))^{2/3}\dd t
 &=C_R\nu^{-7/3}n^{-2/3}
 \left[1+O((\nu n)^{-2/3})\right],\label{eq:actionasymptotic}\\
 C_R&=\left(\frac34\right)^{2/3}
 \Gamma\!\left(\frac73\right)\left(\frac38\right)^{7/3}.
\end{align}
For the signed source $\mathcal F$ in
$\frac12R'+\nu D_R=\mathcal F$, the same expansion gives
\begin{align}
 &\int_0^\infty\mathcal F(u_n(t))\dd t
 =\frac{3n^2}{64\nu^2}\left[1+O((\nu n)^{-1})\right].
 \label{eq:fluxasymptotic}
\end{align}
Appendix~\ref{app:eigenasymptotic} derives
Equations~\eqref{eq:Rasymptotic}--\eqref{eq:fluxasymptotic}, including the
space--time rescaling, the uniform weighted-space remainder, Gamma integral,
and centred-dissipation coefficient. Appendix~\ref{app:heatexpansion}
expands the contraction radius, the second-order remainder and the
fractional-power integration step. These formulae are asymptotic as
$\nu n\to\infty$. The contraction argument supplies a positive but
non-optimised smallness threshold for $|\varepsilon|=(\nu n)^{-1}$.

For $n=4,8,16,32$, the computed trajectories illustrate the predicted
asymptotic by collapsing toward
$3\tau^2/4$ after division of $R$ by $\varepsilon^2e^{-4\tau}$, as predicted
by \eqref{eq:Rasymptotic}. The integrated signed source grows as $n^2$ while
the critical action decreases as $n^{-2/3}$, in agreement with
\eqref{eq:actionasymptotic} and \eqref{eq:fluxasymptotic}. Thus a large signed
residual-source budget can coexist with a small regularity-relevant action.
Two spatial cutoffs and half-step checks at $n=4,32$ pass the declared gates.
These four finite values illustrate the limit and do not determine the
analytic contraction threshold. Because the datum varies with $n$, the calculation tests scaling across a
family rather than high-frequency evolution of one fixed datum. It rejects
the heuristic that source magnitude alone measures proximity to loss of
regularity. Appendix~\ref{app:numerics}
defines the deterministic refinement comparisons.

\subsubsection{Three-resolution stress test and VTI time slices}
\label{sec:fixedstress}

A finite-window, double-precision projected-RK4 experiment tests a fixed
three-component triad. Ten of 11 preregistered gates pass, while the coarse
tail gate prevents a trusted classification.

The stress datum is a $z$-invariant, two-dimensional three-component (2D3C)
triad at viscosity $0.01$. The solver projected the same Fourier coefficients onto grids
$32^3,48^3,64^3$ with cubical cutoffs $7,11,15$ and integrated each system to
$t=0.2$ with maximum steps $0.01$ and $0.005$. The solver uses strict
projection, Leray projection, and classical RK4. It records
\begin{align}
 &E=\|u\|_2^2,\quad X=\|\nabla u\|_2^2,\quad Z=\|Au\|_2^2,\quad
 R=Z-X^2/E,
 \label{eq:detail102}
\end{align}
together with the centred dissipation, signed source, spectra, transfer,
divergence, nonlinear energy defect, and integrated balance residuals.

Enstrophy grows by only $0.73\%$ over the short window, while the two
spectral-spread observables grow more strongly. The $48/64$ common-mode
discrepancy at $t=0.2$ is $9.26\times10^{-7}$, compared with
$6.81\times10^{-5}$ for $32/48$. Temporal errors are approximately
$4.3\times10^{-9}$.

\begin{figure}[t]
\centering
\includegraphics[width=\linewidth,height=0.38\textheight,keepaspectratio]{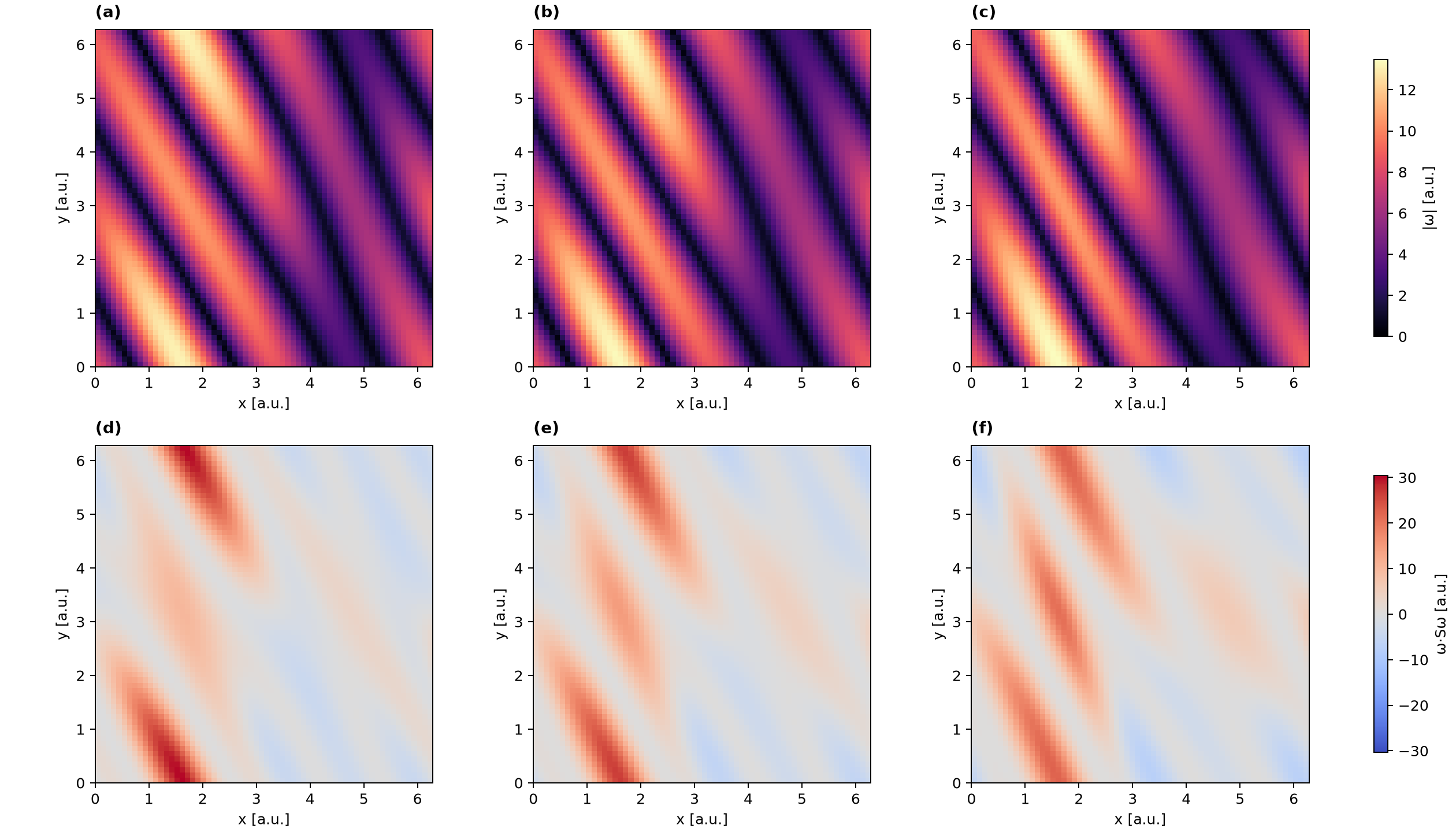}
\caption{$64^3$, cutoff-15 reference slices. (a) $|\omega|$ at $t=0$.
(b) $|\omega|$ at $t=0.1$. (c) $|\omega|$ at $t=0.2$.
(d) $\omega\cdot S\omega$ at $t=0$. (e) $\omega\cdot S\omega$ at $t=0.1$.
(f) $\omega\cdot S\omega$ at $t=0.2$. Each observable uses one common colour
range.}
\label{fig:fixedslices}
\end{figure}

Figure~\ref{fig:fixedslices}(a) shows the initial vorticity-magnitude slice.
Panel (b) shows its moderate redistribution at $t=0.1$, and panel (c) shows
the corresponding terminal state at $t=0.2$. Their common nonnegative colour
scale permits direct amplitude comparison. Panel (d) shows the initial signed
stretching field. Panel (e) resolves its intermediate rearrangement, and panel
(f) shows the terminal stretching pattern. Red and blue denote positive and
negative stretching on one common diverging scale. The coherent diagonal
pattern in every panel is imposed by the declared $z$-translation-invariant
2D3C symmetry and must not be interpreted as generic three-dimensional
turbulence. The moderate change is consistent with the recorded $0.73\%$
enstrophy increase. The recorded shell spectra
show later-time population of higher shells, while signed nonlinear transfer
contains both production and depletion. Taking absolute values would erase
the transfer direction. Values near $10^{-30}$ are the plotting floor, not
resolved physics. The maximum last-two-layer enstrophy
fractions are $1.13\times10^{-5}$, $2.21\times10^{-9}$, and
$3.69\times10^{-13}$ for cutoffs $7,11,15$. The first exceeds the
preregistered $10^{-6}$ gate. The coarse run therefore retains an unresolved
tail classification despite strong refinement, in contrast with the
theorem-box ensemble in Figure~\ref{fig:ensembleconfig}(b).

The largest integrated balance residual was $3.54\times10^{-6}$, the largest
discrete divergence norm $4.40\times10^{-17}$, and the largest nonlinear
energy defect $1.45\times10^{-15}$. These values support implementation
consistency on this finite window. The theorem instead obtains its
cutoff-uniform conclusion from the exact comparison bounds.
Appendix~\ref{app:numerics} defines
each discrepancy, tail fraction, and balance defect used in these numerical
assertions.

\subsubsection{Public-DNS tensor geometry across flow classes}
\label{sec:publicdns}

The external comparison uses seven official JHTDB acquisitions
\cite{Perlman,JHTDB} and $73{,}728$ provider-interpolated velocity-gradient
evaluations. Eight stored times and a second five-time sample that also
includes velocity come from the extended forced-isotropic $1024^3$ dataset,
whose service identifier is \texttt{isotropic1024coarse}
\cite{JHTDBIso1024}. Four $16^3$ structured queries sample the same dataset,
the forced-isotropic $4096^3$ snapshot \texttt{isotropic4096}
\cite{JHTDBIso4096}, the forced magnetohydrodynamic (MHD) $1024^3$ dataset
\texttt{mhd1024} \cite{JHTDBMHD}, and the rotating--stratified $4096^3$
dataset \texttt{rotstrat4096} \cite{JHTDBRotStrat}. A final Sobol sample uses
the turbulent-channel dataset \texttt{channel} \cite{JHTDBChannel}. These
flows do not satisfy a common
set of equations, forcing, or boundary conditions. They are therefore used
only to test the physical specificity of the geometric observables, never as
premises of the regularity theorem.

In the public-DNS diagnostics, $A$ denotes the local velocity-gradient
matrix rather than the Laplacian operator, and $R$ denotes its cubic trace
diagnostic rather than the global spectral residual. For each returned
gradient $A_{ij}=\partial_j u_i$, define
\begin{align}
 &S=\tfrac12(A+A^\top),\qquad
 \omega=(A_{32}-A_{23},A_{13}-A_{31},A_{21}-A_{12}),
 \label{eq:jhtdbstrain}
\end{align}
and evaluate
\begin{align}
 &P_\omega=\omega\cdot S\omega,\qquad
 \sigma=\frac{P_\omega}{|\omega|^2|S|_F},\qquad
 Q=-\tfrac12\operatorname{tr}(A^2),\qquad
 R=-\tfrac13\operatorname{tr}(A^3).
 \label{eq:jhtdbgeometry}
\end{align}
The convention is $\sigma=0$ when $|\omega|\,|S|_F=0$. Because the
provider-interpolated gradients have a nonzero trace defect, the displayed
$Q$ is the incompressible-form diagnostic $-\operatorname{tr}(A^2)/2$, not
the general second principal invariant
$[(\operatorname{tr}A)^2-\operatorname{tr}(A^2)]/2$.
Equations \eqref{eq:jhtdbstrain}--\eqref{eq:jhtdbgeometry} define
coordinate-invariant local observables rather than assumptions in the proof.
Their tensor contractions and physical sign interpretation are derived in
Appendix~\ref{app:physicalidentities}.
The alignment variables are $c_i=|\widehat\omega\cdot e_i|$. The calculation orders the
strain eigenvectors $e_i$ from most compressive to most extensional.
When velocity is present, the helicity density is $h=u\cdot\omega$. The queries requested
gradients directly with JHTDB's fourth-order
finite-difference/fourth-order Lagrange operator rather than reconstructing
derivatives from the sampled lattice.

\begin{figure}[t]
\centering
\includegraphics[width=\linewidth,height=0.46\textheight,keepaspectratio]{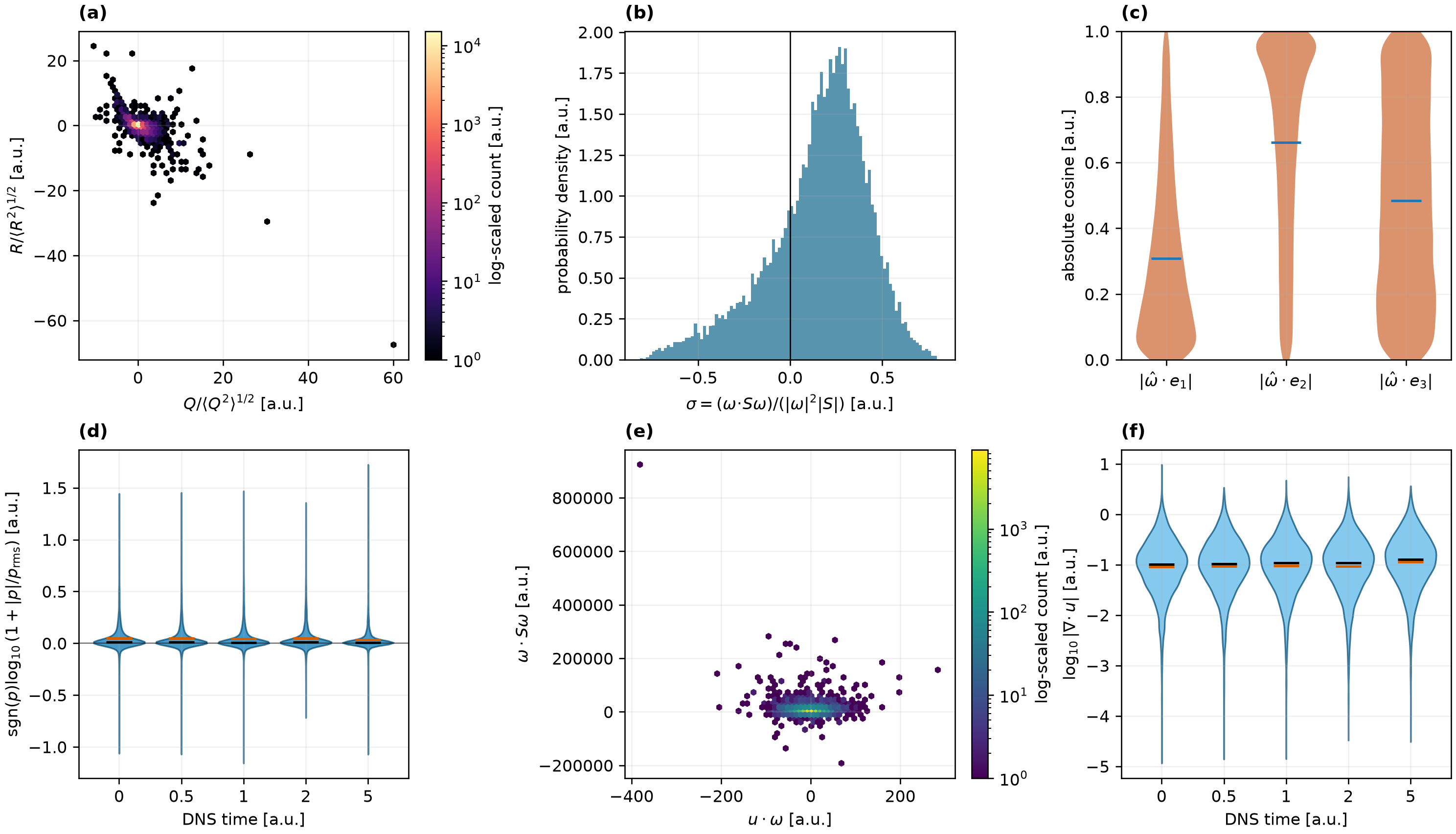}
\caption{Forced-isotropic public-DNS geometry. (a) Joint $Q$--$R$ density
from $20{,}480$ samples, normalised by pooled root-mean-square (RMS) values,
(b) probability density of normalised signed stretching $\sigma$,
(c) distributions of absolute vorticity alignment with the three ordered
strain eigenvectors, with horizontal bars denoting means, (d) full production
distributions at five stored DNS times under the monotone signed-log map
$p\mapsto\operatorname{sgn}(p)\log_{10}(1+|p|/p_{\rm rms})$,
with orange means and black medians,
(e) joint density of helicity and production, (f) provider-interpolation
divergence distributions, shown as $\log_{10}|\nabla\cdot u|$. Density colours
encode logarithmic sample counts.}
\label{fig:jhtdbgeometry}
\end{figure}

Figure~\ref{fig:jhtdbgeometry}(a) exhibits the inclined, non-Gaussian core and
rare excursions characteristic of the velocity-gradient invariant plane.
Panel (b) shows why a positive mean stretching rate is not a pointwise sign
law: the median of $\sigma$ is $0.1965$, its first and 99th percentiles are
$-0.6256$ and $0.6559$, and $75.79\%$ of the samples have $P_\omega>0$.
Panel (c) recovers preferential vorticity alignment with the intermediate
strain eigenvector, with pooled mean absolute cosine $0.6613$. Panel (d) retains all $4{,}096$ production values at each
stored time. The signed-log coordinate compresses the long tails without
changing sign or rank after normalisation by the frame's root-mean-square
production.
The separated mean and median marks expose the positive tail responsible for
the sign of the mean. The positive fraction remains between $0.7498$ and
$0.7656$ over the five frames. Panel (e) gives pooled helicity--production
correlation $-0.1178$. Local velocity--vorticity alignment is consequently a
poor proxy for signed stretching in this sample. Panel (f) retains the full
distribution of the pointwise trace defect instead of displaying five
RMS values alone. The corresponding RMS
$\operatorname{tr}A$ ranges from $0.31$
to $0.37$. The returned interpolation is therefore unsuitable for exact
solenoidal integral identities, although it remains informative for
distributional geometry.
Appendix~\ref{app:statistics} specifies the pooling, quantile, fraction,
correlation, and root-mean-square estimators used for these values.

\begin{figure}[t]
\centering
\includegraphics[width=\linewidth,height=0.43\textheight,keepaspectratio]{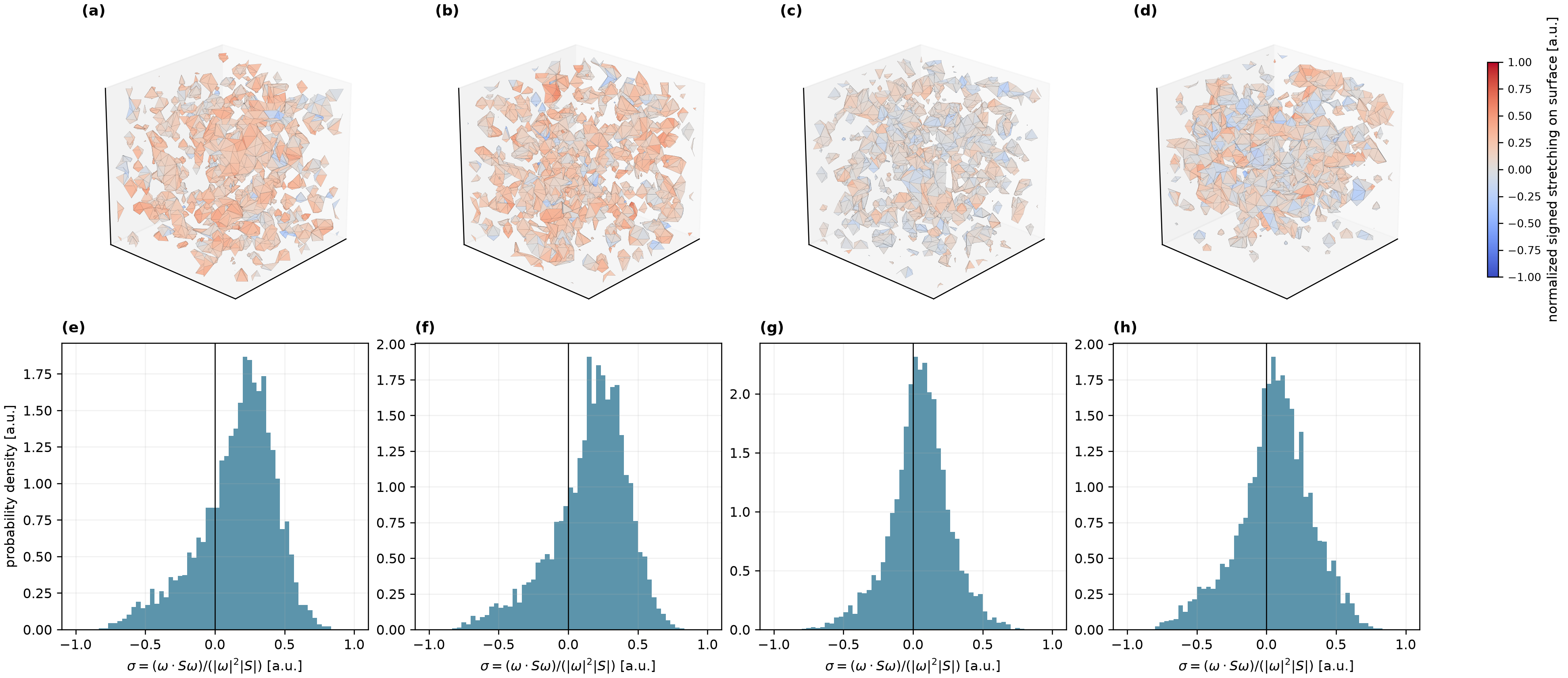}
\caption{Cross-flow structured public-DNS comparison. (a) Isotropic-$1024^3$
vorticity isosurface. (b) Isotropic-$4096^3$ vorticity isosurface. (c)
MHD-$1024^3$ vorticity isosurface. (d) Rotating--stratified-$4096^3$ vorticity
isosurface. Each surface uses the 82nd percentile of sampled $|\omega|$ and is
coloured by $\sigma\in[-1,1]$. Red denotes stretching, blue compression, and
near-white weak signed stretching. (e) Isotropic-$1024^3$ probability density
of $\sigma$. (f) Isotropic-$4096^3$ probability density of $\sigma$. (g)
MHD-$1024^3$ probability density of $\sigma$. (h) Rotating--stratified-$4096^3$
probability density of $\sigma$. Each panel derives from a $16^3$ structured
provider query.}
\label{fig:jhtdbstructured}
\end{figure}

Figure~\ref{fig:jhtdbstructured}(a) shows the sampled isotropic-$1024^3$
vorticity surface with mixed stretching signs. Panel (b) shows the
isotropic-$4096^3$ sample under the same dimensionless colour scale. Their
surface organization is visually similar despite different dimensional
vorticity amplitudes. Panel (c) shows the MHD-$1024^3$ sample, where near-zero
colours occupy more of the displayed surface. Panel (d) shows the
rotating--stratified-$4096^3$ sample with the same shift toward weak signed
stretching. Panel (e) quantifies the isotropic-$1024^3$ distribution with
$\langle\sigma\rangle=0.15665$ and positive fraction $0.7571$. Panel (f)
gives the corresponding isotropic-$4096^3$ values, $0.15661$ and $0.7627$.
Their agreement, despite an approximately $18.5$-fold difference in sampled
mean enstrophy, is descriptive rather than a resolution-convergence result
because both source fields use the same coarse query lattice. Panel (g) gives
the MHD mean $0.04809$ and positive fraction $0.6221$. Panel (h) gives the
rotating--stratified values $0.05811$ and $0.6287$. Mean intermediate-
eigenvector alignment is $0.6550$ and $0.6628$ in the two isotropic samples,
$0.8069$ in the MHD sample, and $0.7194$ in the rotating--stratified sample.
Vorticity--strain alignment
and signed production are therefore related but non-equivalent observables.
The coarse surfaces do not resolve the vortex-tube geometry of the underlying
$1024^3$ or $4096^3$ simulations. Appendix~\ref{app:statistics} defines the
cross-flow means, positive fractions, and alignment summaries.

\section{Scope and outlook}

The central advance is a family-level regularity theorem with explicit radii,
terminal times, and bounds uniform in the Galerkin cutoff. The acceptance
theorem separates the centre-independent analysis from the finite Fourier
algebra of each reference family. This separation allows new centre-specific estimates to reuse the continuum
argument whenever they meet the acceptance hypotheses.

The present applications comprise three structurally distinct centres and a
connected non-Beltrami coefficient interval. The next mathematical objective
is an algorithmic coverage theorem that converts a prescribed compact class of
finite-Fourier centres into finitely many validated parameter cells. Such a
result would replace centre-by-centre construction with an adaptive family
decomposition and would quantify the verified radius as a function of
amplitude, viscosity, spectral support, and path degree.

The numerical programme provides two complementary maps of the proved
families. The 4,096-point ensemble resolves the interior configuration space,
while the 512-case, 1,600-trajectory refinement resolves the apparent
spectral-tail boundary at higher cutoffs. Three-dimensional centre fields,
dynamic vortex surfaces, and $73{,}728$ public-DNS gradient samples connect the
analytic observables to strain, stretching, alignment, and spectral transfer.
The neural-operator study adds a targeted search layer. For one fitted seed,
physics-informed training lowers the medians of seven OOD penalties and
diagnostics and returns candidates from two centre families for resolved
solver evaluation. These layers use their natural
evidentiary roles: exact inequalities establish the theorem, Galerkin
calculations resolve finite-dimensional dynamics, neural operators prioritise
new calculations, and public DNS tests the physical interpretation across
canonical turbulence regimes.

Two extensions would broaden the impact most directly. An independent
interval-arithmetic or proof-assistant implementation would provide a
clean-room validation of the acceptance pipeline. A wider parameter campaign
would test the scaling of verified radii across multiple viscosities and
changing Fourier supports. Both extensions build on the same centre-independent
theorem and preserve the explicit separation between mathematical validation
and physical diagnostics.

\section{Conclusion}

Finite-reference-path comparison establishes global regularity on three
explicit closed balls, each of which contains an open neighbourhood, and on a
connected positive-dimensional family of initial data around genuinely
three-dimensional periodic centres. The theorem covers
continuously many amplitudes, centre coefficients, and smooth
infinite-dimensional perturbations. Parameter dependence is enclosed
analytically, perturbations are controlled in the weighted Fourier--Wiener
norm, and the Galerkin cutoff is removed through uniform estimates. Separate
arguments handle the exceptional cutoff and terminal continuation, which
makes the continuum conclusion independent of numerical extrapolation.

The computational contribution realises the centre-independent argument as a
reusable validation method. Exact fraction arithmetic, independently
implemented calculations, and targeted perturbation tests resolve the full
nonlinear Fourier output and its continuum comparison bounds. The
paired-resolution ensemble, high-resolution boundary refinement, and
three-dimensional field diagnostics locate the theorem within a broader
configuration space and show how residual production, spectral tails, and
vortex stretching evolve. For the fitted models and finite OOD set,
physics-informed operator learning lowers field, divergence, residual,
energy-balance, cutoff-sensitivity, and spectral-tail penalties while ranking
the five-checkpoint action proxy for direct solver evaluation.

The coefficient-family theorem demonstrates that one validated comparison can
cover a connected continuum of non-Beltrami centre shapes with a uniform
perturbation radius. Its specific contribution is an explicitly verified
coefficient interval, radius, and cutoff treatment within the established
a posteriori regularity paradigm. An
adaptive coverage theorem for changing supports, together with independent
formal validation, now offers the clearest route from the present result to a
general computational theory of nonlinear regularity neighbourhoods.

\section*{Data availability}

The numerical source data underlying every figure, the finite comparison
records used by the computer-assisted arguments, and a per-figure source-data
index accompany the submission package. The package includes the 384
neural-operator rollouts, held-out diagnostic arrays, 48 selected candidate
coordinates, their original 144 three-level proxy recomputations, and 192
dense-quadrature recomputations at four grid--time-step pairs. The JHTDB
observations derive from five public records: extended forced isotropic
turbulence at $1024^3$ (\texttt{isotropic1024coarse}), forced isotropic
turbulence at $4096^3$ (\texttt{isotropic4096}), forced MHD turbulence at
$1024^3$ (\texttt{mhd1024}), rotating--stratified turbulence at $4096^3$
(\texttt{rotstrat4096}), and turbulent channel flow (\texttt{channel})
\cite{JHTDBIso1024,JHTDBIso4096,JHTDBMHD,JHTDBRotStrat,JHTDBChannel}. The local
query extents, sampling rules, and dataset identifiers are included in the
accompanying data registry. No numerical simulation or public-DNS sample is
used as a premise of the regularity theorems.

\section*{Code availability}

The accompanying reproducibility package contains the exact-arithmetic
implementations, comparison records, numerical solvers, independent numerical
verification routines, trained neural-operator definitions, and machine-readable indices that link each figure to
its underlying data. A permanent repository with a digital object identifier
(DOI) will archive the software and source data before journal submission.
The archive will preserve the source version used to generate the published
figures and numerical tables.

\section*{Author contributions}

Jose Luis Lima de Jesus Silva conceived the study, developed the mathematical
analysis and computer-assisted methodology, implemented and validated the
software, designed and analysed the numerical experiments, prepared the
visualizations, and wrote the manuscript.

\section*{Competing interests}

The author declares no competing interests.

\section*{Acknowledgements}

The author acknowledges the Johns Hopkins Turbulence Databases for providing
public access to the turbulence data analysed in the physical comparison.
The broader research program from which this study emerged was initiated
within the Swedish National Infrastructure for Computing (SNIC) Small Compute
project Artificial Intelligence for Physics and Engineering, Modeling and
Simulation, Project No. SNIC 2022/22-843, conducted at Link\"oping University
under the author's principal investigatorship. This work was supported by the
Brazilian National Council for Scientific and Technological Development
(CNPq) under grant No. 445344/2024-5. The author also acknowledges financial
support provided through the Program Talentos Brasil, as Project Investigator.

\section*{Funding}

CNPq grant No. 445344/2024-5 and the Program Talentos Brasil supported this
work, as acknowledged above.
\bibliographystyle{unsrtnat}
\bibliography{references}

\clearpage
\appendix

\section{Fourier, Leray, Wiener, and energy identities}
\label{app:foundations}

This appendix supplies the first-principles calculations invoked immediately
after Equations~\eqref{eq:galerkin} and \eqref{eq:fourierB} in the main text.
For the Fourier convention stated there, differentiation of
$e^{ik\cdot x}$ gives
\begin{align}
 &\partial_j e^{ik\cdot x}=ik_j e^{ik\cdot x},
 \qquad -\Delta e^{ik\cdot x}=|k|^2e^{ik\cdot x}.
 \label{eq:detail105}
\end{align}
Consequently, the Fourier coefficient of a derivative is
$\widehat{\partial_j f}(k)=ik_j\widehat f(k)$, and the coefficient of a
product follows by multiplying the two series and collecting terms with
$p+q=k$:
\begin{align}
 &\widehat{f_j\partial_j g}(k)
 =\sum_{p+q=k}\widehat f_j(p)\,iq_j\widehat g(q).
 \label{eq:detail106}
\end{align}
Summing over $j=1,2,3$ yields
\begin{align}
 &\widehat{(f\cdot\nabla)g}(k)
 =i\sum_{p+q=k}(\widehat f(p)\cdot q)\widehat g(q).
 \label{eq:detail107}
\end{align}
For $k\ne0$, any vector $z$ decomposes uniquely into a component parallel to
$k$ and a component perpendicular to $k$:
\begin{align}
 &z=\frac{k(k\cdot z)}{|k|^2}
   +\left(I-\frac{k\otimes k}{|k|^2}\right)z.
 \label{eq:detail108}
\end{align}
The second term is divergence free because its scalar product with $k$
vanishes. This proves the Leray multiplier and, after applying it to the
preceding convolution, proves Equation~\eqref{eq:fourierB}. Since an
orthogonal projection cannot increase Euclidean length,
$|\Pp_kz|\leq|z|$.

The Wiener bilinear estimate used to obtain Equation~\eqref{eq:dini0} now
follows without an omitted convolution step. Starting from
Equation~\eqref{eq:fourierB}, applying the triangle inequality, and then
reindexing $k=p+q$ gives
\begin{align}
 \|B(f,g)\|_{\A^0}
 &\leq\sum_k\sum_{p+q=k}
       |\widehat f(p)|\,|q|\,|\widehat g(q)|\\
 &=\sum_p|\widehat f(p)|
   \sum_q|q|\,|\widehat g(q)|
 =\|f\|_{\A^0}\|g\|_{\A^1}.
 \label{eq:detail109}
\end{align}

The energy identity extends each finite Galerkin trajectory globally.
To obtain that identity, take the real $L^2$ inner product of Equation~\eqref{eq:galerkin} with $u_N$.
The projection may be removed inside this pairing because $P_N$ and $\Pp$
are self-adjoint and $u_N$ lies in their ranges. Periodicity and
$\nabla\cdot u_N=0$ imply
\begin{align}
 &\langle (u_N\cdot\nabla)u_N,u_N\rangle
 =\frac1{2(2\pi)^3}\int_{\T^3}u_N\cdot\nabla|u_N|^2\dd x
 =-\frac1{2(2\pi)^3}\int_{\T^3}(\nabla\cdot u_N)|u_N|^2\dd x=0.
 \label{eq:detail110}
\end{align}
Integration by parts also gives
$\langle Au_N,u_N\rangle=\|\nabla u_N\|_2^2$. Therefore
\begin{align}
 &\frac12\frac{d}{dt}\|u_N\|_2^2+\|\nabla u_N\|_2^2=0.
 \label{eq:detail111}
\end{align}
Only the nonlinear part of the viscous ODE preserves energy, so $\|u_N(t)\|_2\leq\|P_Nu_0\|_2$ for every time on which the
finite-dimensional solution exists. A polynomial finite-dimensional ODE can
cease to exist at finite time only if its coefficient vector becomes
unbounded. The displayed energy bound excludes that alternative and extends
$u_N$ globally.

\section{Minimization of the Laplacian residual and spectral variance}
\label{app:observable}

This section proves Equations~\eqref{eq:Rmoments} and
\eqref{eq:spectralvariance} from the main text and derives the variance bound
used in Equation~\eqref{eq:lowcutoff}. For a nonzero field $u_N$, expand the
squared residual as a quadratic polynomial in $\lambda$:
\begin{align}
 \|(A-\lambda)u_N\|_2^2
 &=\|Au_N\|_2^2-2\lambda\langle Au_N,u_N\rangle
   +\lambda^2\|u_N\|_2^2\\
 &=Z_N-2\lambda X_N+\lambda^2E_N.
 \label{eq:detail112}
\end{align}
Its derivative is $-2X_N+2\lambda E_N$, so the unique minimizer is
$\mu_N=X_N/E_N$. Substitution gives
$R_N=Z_N-X_N^2/E_N$. If $u_N=0$, then $R_N=0$ directly from its infimum
definition. The quotient formula extends continuously with value zero, but
$\mu_N$ and the probability weights below remain undefined at the zero field.

Parseval's identity gives
\begin{align}
 &E_N=\sum_k|\widehat u_N(k)|^2,
 \quad X_N=\sum_k|k|^2|\widehat u_N(k)|^2,
 \quad Z_N=\sum_k|k|^4|\widehat u_N(k)|^2.
 \label{eq:detail113}
\end{align}
With $\rho_k=|\widehat u_N(k)|^2/E_N$, the weights satisfy
$\rho_k\geq0$ and $\sum_k\rho_k=1$. Dividing the preceding moments by
$E_N$ and substituting into the residual gives
\begin{align}
 R_N
 &=E_N\left\{\sum_k\rho_k|k|^4
      -\left(\sum_k\rho_k|k|^2\right)^2\right\}\\
 &=E_N\operatorname{Var}_{\rho}(|k|^2),
 \label{eq:detail114}
\end{align}
which is Equation~\eqref{eq:spectralvariance}. A variance is zero precisely
when its random variable is constant on the support of its probability
measure. Hence $R_N=0$ exactly when all energetic modes lie on one Laplacian
sphere.

Let a scalar random variable $S$ satisfy $m\leq S\leq M$.
The pointwise inequality $(S-m)(M-S)\geq0$ gives
$\mathbb E[S^2]\leq(m+M)\mathbb E[S]-mM$. Writing
$\bar S=\mathbb E[S]$ therefore yields
\begin{align}
 \operatorname{Var}(S)
 &\leq(m+M)\bar S-mM-\bar S^2\\
 &=\frac{(M-m)^2}{4}
   -\left(\bar S-\frac{m+M}{2}\right)^2
 \leq\frac{(M-m)^2}{4}.
 \label{eq:detail115}
\end{align}
Taking $S=|k|^2$, $m=1$, and $M=3N^2$ proves the finite-spectrum estimate
used in the exceptional-cutoff argument.

The signed source and centred dissipation used in the numerical sections can
also be derived from the same three moments. To avoid confusing nonlinear
production with the Galerkin projector $P_N$, write
\begin{align}
 &W_N=\|A^{3/2}u_N\|_2^2,
 \quad \Pi_N=-\operatorname{Re}\langle B(u_N,u_N),Au_N\rangle,
 \quad \Pi_{2,N}=-\operatorname{Re}\langle B(u_N,u_N),A^2u_N\rangle.
 \label{eq:detail116}
\end{align}
For the viscosity-$\nu$ analogue of \eqref{eq:galerkin}, pairing the equation
successively with $u_N$, $Au_N$, and $A^2u_N$
produces
\begin{align}
 &E_N'=-2\nu X_N,\qquad
 X_N'=2\Pi_N-2\nu Z_N,\qquad
 Z_N'=2\Pi_{2,N}-2\nu W_N.
 \label{eq:detail117}
\end{align}
Differentiate $R_N=Z_N-X_N^2/E_N$ and substitute these three identities:
\begin{align}
 \frac12R_N'
 &=\Pi_{2,N}-\nu W_N
   -\frac{X_N}{E_N}(2\Pi_N-2\nu Z_N)
   -\nu\frac{X_N^3}{E_N^2}\\
 &=\left(\Pi_{2,N}-2\mu_N\Pi_N\right)
   -\nu\left(W_N-2\mu_NZ_N+\mu_N^2X_N\right).
 \label{eq:detail118}
\end{align}
Thus the quantities used throughout the figures are
\begin{align}
 &F_N=\Pi_{2,N}-2\mu_N\Pi_N,\qquad
 D_{R,N}=W_N-2\mu_NZ_N+\mu_N^2X_N,
 \label{eq:detail119}
\end{align}
and they satisfy
\begin{align}
 &\frac12R_N'+\nu D_{R,N}=F_N.
 \label{eq:detail120}
\end{align}
The dissipation is nonnegative because Parseval rewrites it as
\begin{align}
 &D_{R,N}=\sum_k |k|^2(|k|^2-\mu_N)^2
                  |\widehat u_N(k)|^2\geq0.
 \label{eq:detail121}
\end{align}

\section{Enstrophy, bridge action, and terminal continuation}
\label{app:continuation}

This section expands the derivation of Equation~\eqref{eq:enstrophy} and the
two action estimates used to prove Equation~\eqref{eq:action}. Pair the
Galerkin equation with $Au_N$. Since
$X_N=\langle Au_N,u_N\rangle$, differentiation and self-adjointness give
\begin{align}
 &\frac12X_N'+Z_N=-\langle B(u_N,u_N),Au_N\rangle.
 \label{eq:detail122}
\end{align}
Using the Fourier convolution before summing the final $Au_N$ factor gives
\begin{align}
 &|\langle B(u_N,u_N),Au_N\rangle|
 \leq\|u_N\|_{\A^0}\sqrt{X_NZ_N}.
 \label{eq:detail123}
\end{align}
For completeness, the Fourier series gives
$\|u_N\|_\infty\leq\sum_k|\widehat u_N(k)|=\|u_N\|_{\A^0}$.
Pointwise Cauchy--Schwarz first yields
$|(u_N\cdot\nabla)u_N|\leq|u_N|\,|\nabla u_N|_F$.
After integration and an $L^2$ Cauchy--Schwarz estimate,
\begin{align}
 |\langle B(u_N,u_N),Au_N\rangle|
 &\leq\|(u_N\cdot\nabla)u_N\|_2\|Au_N\|_2\\
 &\leq\|u_N\|_\infty\|\nabla u_N\|_2\|Au_N\|_2
 \leq\|u_N\|_{\A^0}\sqrt{X_NZ_N}.
 \label{eq:enstrophyproductdetail}
\end{align}
Multiplying this inequality by two and
using $2ab\leq a^2+b^2$ with
$a=Z_N^{1/2}$ and
$b=\|u_N\|_{\A^0}X_N^{1/2}$ gives
\begin{align}
 &X_N'+2Z_N
 \leq2\|u_N\|_{\A^0}\sqrt{X_NZ_N}
 \leq Z_N+\|u_N\|_{\A^0}^2X_N.
 \label{eq:detail125}
\end{align}
After subtracting $Z_N$, this is the first inequality in
Equation~\eqref{eq:enstrophy}. The second follows mode by mode because every
nonzero integer mode has $|k|^4\geq|k|^2$.

For the viscosity-$\nu$ setting of Theorem~\ref{thm:acceptance}, the same
calculation with $\|u_N(t)\|_{\A^0}\leq M(t)$ gives
\begin{align}
 &X_N'+\nu Z_N\leq\frac{M(t)^2}{\nu}X_N.
 \label{eq:detail126}
\end{align}
If $X_N(0)\leq X_0$ and $S=\int_0^TM(t)^2\dd t$, Gronwall's inequality yields
\begin{align}
 &\sup_{t\leq T}X_N(t)\leq X_0e^{S/\nu}.
 \label{eq:detail127}
\end{align}
Integrating the differential inequality and using this supremum gives
\begin{align}
 &\int_0^TZ_N(t)\dd t
 \leq\frac{X_0}{\nu}
 +\frac{SX_0e^{S/\nu}}{\nu^2}=:K_{\rm bridge}.
 \label{eq:detail128}
\end{align}
Consequently $R_N\leq Z_N$ and H\"older's inequality give the optional
dimensionless bridge-action bound
\begin{align}
 &\nu^{-1/3}\int_0^TR_N(t)^{2/3}\dd t
 \leq \nu^{-1/3}T^{1/3}K_{\rm bridge}^{2/3}.
 \label{eq:detail129}
\end{align}
The tail and exceptional-cutoff contributions remain instance-specific. This
is why Theorem~\ref{thm:acceptance} does not assert an action bound by itself.

On $[0,T_*]$, write $m(t)=\|u_N(t)\|_{\A^0}^2$. Dropping the nonnegative
$Z_N$ term and multiplying by
$\exp(-\int_0^tm(s)\dd s)$ gives
\begin{align}
 &X_N(t)\leq X_N(0)\exp\left(\int_0^tm(s)\dd s\right).
 \label{eq:detail130}
\end{align}
Because $X_N(0)<1$ and the main-text bounds give
$\int_0^{T_*}m<8$, one obtains $X_N(T_*)<e^8$. For the integrated
dissipation, set $g(t)=\exp(-\int_0^tm(s)\dd s)$. The full inequality gives
$(gX_N)'+gZ_N\leq0$, and therefore
\begin{align}
 &\int_0^{T_*}g(t)Z_N(t)\dd t
 \leq X_N(0)-g(T_*)X_N(T_*)<1.
 \label{eq:detail131}
\end{align}
Since $\int_0^{T_*}m<8$, one has $g(t)>e^{-8}$ throughout the bridge, and
hence $\int_0^{T_*}Z_N<e^8$. Since $R_N\leq Z_N$, Hölder's inequality with
exponents $3/2$ and $3$ gives
\begin{align}
 \int_0^{T_*}R_N^{2/3}\dd t
 &\leq\int_0^{T_*}Z_N^{2/3}\dd t\\
 &\leq\left(\int_0^{T_*}Z_N\dd t\right)^{2/3}
       \left(\int_0^{T_*}1\dd t\right)^{1/3}<418.
 \label{eq:detail132}
\end{align}

For $t\geq T_*$, Appendix~\ref{app:comparison} proves that
$\|u_N(t)\|_{\A^0}\leq c=99/100$. Hence
$X_N'+dZ_N\leq0$ with $d=1-c^2$. Put $s=t-T_*$ and
$w(s)=e^{ds/2}$. Since $X_N\leq Z_N$,
\begin{align}
 (wX_N)'&=wX_N'+\frac d2wX_N\\
 &\leq-dwZ_N+\frac d2wX_N
 \leq-\frac d2wZ_N.
 \label{eq:detail133}
\end{align}
Integration from $s=0$ to infinity and nonnegativity of $wX_N$ yield
\begin{align}
 &\int_{T_*}^{\infty}w(t-T_*)Z_N(t)\dd t
 \leq\frac{2X_N(T_*)}{d}.
 \label{eq:detail134}
\end{align}
Finally write $Z_N^{2/3}=(wZ_N)^{2/3}w^{-2/3}$ and apply Hölder with
exponents $3/2$ and $3$:
\begin{align}
 \int_{T_*}^{\infty}Z_N^{2/3}\dd t
 &\leq\left(\int_{T_*}^{\infty}wZ_N\dd t\right)^{2/3}
       \left(\int_0^\infty w^{-2}\dd s\right)^{1/3}\\
 &\leq\left(\frac{2X_N(T_*)}{d}\right)^{2/3}
       \left(\frac1d\right)^{1/3}
 =\frac{2^{2/3}X_N(T_*)^{2/3}}d.
 \label{eq:detail135}
\end{align}
This is the weighted tail estimate stated in the main text.
The numerical constants require only rational comparisons and one exponential
enclosure. The triangle inequality and $\|U\|_{\dot H^1}=\sqrt{3/2}<5/4$ give
\begin{align}
 X_N(0)^{1/2}&\leq\frac7{10}\sqrt{\frac32}+\frac1{100}
 <\frac78+\frac1{100}<1,\\
 \int_0^{T_*}m(t)\dd t&<\left(\frac{17}{8}\right)^2\frac{27}{16}<8.
 \label{eq:bridgeconstantsdetail}
\end{align}
The exponential enclosure in Appendix~\ref{app:slabalgorithm} gives
$e^8<3000$, and $3000^2<209^3$. Therefore $e^{16/3}<209$,
$T_*^{1/3}<2$, and $2^{2/3}<2$. Both action coefficients are consequently
below $2(209)=418$. Finally,
\begin{align}
 418+\frac{418}{199/10000}
 &=\frac{4263182}{199}<22000,
 \label{eq:finalactionconstant}
\end{align}
because $4263182<22000(199)=4378000$. These steps recover the constant
in Theorem~\ref{thm:main} without rounded decimal estimates.

\section{Cutoff removal, smoothness, uniqueness, and scaling}
\label{app:pdepassage}

This appendix expands the continuum passage and the two rescalings stated at
the end of the mathematical methodology. The uniform estimates provide
$u_N$ bounded in $L^\infty(0,T,H^1)$ and in $L^2(0,T,H^2)$. In three
dimensions, $H^2\hookrightarrow L^\infty$, so
\begin{align}
 &\|(u_N\cdot\nabla)u_N\|_2
 \leq\|u_N\|_\infty\|\nabla u_N\|_2
 \leq C\|u_N\|_{H^2}\|u_N\|_{H^1}.
 \label{eq:detail138}
\end{align}
The right-hand side belongs to $L^2(0,T)$ because the first Sobolev norm is
square integrable in time and the second is uniformly bounded. Equation
\eqref{eq:galerkin} therefore bounds $\partial_tu_N$ in
$L^2(0,T,L^2)$. The compact embedding $H^2\Subset H^1$ and the continuous
embedding $H^1\hookrightarrow L^2$ meet the hypotheses of the Aubin--Lions
lemma. A subsequence consequently converges strongly in $L^2(0,T,H^1)$ and
weakly in the two bounded spaces.

To pass the quadratic term, write
\begin{align}
 &B(u_N,u_N)-B(u,u)=B(u_N-u,u_N)+B(u,u_N-u).
 \label{eq:detail139}
\end{align}
Strong $L^2_tH^1_x$ convergence, the uniform $L^\infty_tH^1_x$ bound, and
the Sobolev embedding $H^1\hookrightarrow L^6$ make both terms converge to
zero in distributions (indeed in $L^1_tH^{-1}_x$). The linear terms pass by
weak convergence, and $P_Nu_0\to u_0$ in $H^1$. Thus the limit solves the
periodic equation with the stated initial datum. Explicitly, for a test field
$\phi\in H^1$, integration by parts and H\"older give
\begin{align}
 |\langle B(f,g),\phi\rangle|
 &\leq\|f\|_3\|g\|_6\|\nabla\phi\|_2
 \leq C\|f\|_{H^1}\|g\|_{H^1}\|\phi\|_{H^1}.
 \label{eq:nonlinearlimitdetail}
\end{align}
Apply this bound to the two differences above and use time Cauchy--Schwarz.
The resulting $L^1_tH^{-1}_x$ bound tends to zero. For a fixed smooth test
field, $P_N\phi\to\phi$ in every Sobolev norm, which also removes the outer
Galerkin projection in the weak formulation.

For every fixed Fourier mode, the equation and the preceding bounds make
$\widehat u_N(k,t)$ equicontinuous in time. In fact,
\begin{align}
 |\widehat u_N(k,t)-\widehat u_N(k,s)|
 &\leq |t-s|^{1/2}\|\partial_tu_N\|_{L^2(0,T;L^2)},
 \label{eq:fouriertimetrace}
\end{align}
uniformly in $N$. A diagonal Arzel\`a--Ascoli
argument gives coefficientwise convergence at $T_*$. Fatou's lemma then
implies
\begin{align}
 &\sum_k|\widehat u(k,T_*)|
 \leq\liminf_{N\to\infty}\sum_k|\widehat u_N(k,T_*)|
 \leq\frac{99}{100}.
 \label{eq:detail142}
\end{align}
Appendix~\ref{app:comparison} propagates this Wiener smallness. The uniform
$H^1$ and integrated $H^2$ bounds prevent the strong-solution continuation
criterion from failing on any finite interval. Differentiating the equation
and applying the heat semigroup successively then yields all higher Sobolev
norms for positive time. Smooth initial data supply smoothness at time zero.
If two strong solutions have the same datum, their difference $w$ obeys the
energy identity obtained by subtracting their equations. The divergence-free
transport term vanishes after integration by parts, while H\"older and Sobolev
interpolation give
\begin{align}
 &|\langle B(w,u),w\rangle|
 \leq \|\nabla u\|_2\|w\|_3\|w\|_6
 \leq C\|\nabla u\|_2\|w\|_2^{1/2}\|\nabla w\|_2^{3/2}.
 \label{eq:detail143}
\end{align}
Young's inequality spends one half of the unit viscosity and yields
\begin{align}
 &\frac12\frac{d}{dt}\|w\|_2^2+\frac12\|\nabla w\|_2^2
 \leq C\|\nabla u\|_2^4\|w\|_2^2,
 \label{eq:detail144}
\end{align}
on finite intervals. The Young exponents are $4/3$ and $4$, applied to
$\|\nabla w\|_2^{3/2}$ and
$C\|\nabla u\|_2\|w\|_2^{1/2}$ after rescaling to allocate one half of the
viscosity. Since $\|\nabla u\|_2$ is bounded in time, its fourth power is
integrable on every finite interval. Gronwall's lemma and $w(0)=0$ give $w=0$,
establishing uniqueness among strong solutions and removing subsequence
dependence. The same estimate gives weak--strong uniqueness for energy
solutions through their relative-energy inequality.

For viscosity $\nu$, set $\tau=\nu t$ and
$u(t,x)=\nu v(\tau,x)$. Then
\begin{align}
 &\partial_tu=\nu^2\partial_\tau v,
 \qquad \nu Au=\nu^2Av,
 \qquad B(u,u)=\nu^2B(v,v),
 \label{eq:detail145}
\end{align}
so the unit-viscosity equation for $v$ becomes the viscosity-$\nu$ equation
for $u$. Since $R_u(t)=\nu^2R_v(\tau)$ and $dt=d\tau/\nu$,
\begin{align}
 &\int_0^\infty R_u(t)^{2/3}\dd t
 =\nu^{4/3}\int_0^\infty R_v(\tau)^{2/3}\frac{d\tau}{\nu}
 =\nu^{1/3}\int_0^\infty R_v(\tau)^{2/3}\dd\tau.
 \label{eq:detail146}
\end{align}
On the torus of side $2\pi L$, let $y=x/L$, $\tau=\nu t/L^2$, and
$u(t,x)=(\nu/L)v(\tau,y)$. Then
$(A_x-\lambda_x)u=(\nu/L^3)(A_y-\lambda_y)v$ when
$\lambda_x=\lambda_y/L^2$. With normalised spatial measures this gives
$R_u=(\nu^2/L^6)R_v$ and $dt=(L^2/\nu)d\tau$, hence the action factor
$\nu^{1/3}/L^2$ stated in the main text.

\section{Analytic field geometry and public-DNS invariants}
\label{app:physicalidentities}

This section derives Equations~\eqref{eq:omegaU}--\eqref{eq:nonlinearU} and
explains the contractions in Equations~\eqref{eq:jhtdbstrain}--
\eqref{eq:jhtdbgeometry}. For
$u=(a\sin y,a\sin z,a\sin x)$, the curl is
\begin{align}
 \omega_1&=\partial_yu_3-\partial_zu_2=-a\cos z,\\
 \omega_2&=\partial_zu_1-\partial_xu_3=-a\cos x,\\
 \omega_3&=\partial_xu_2-\partial_yu_1=-a\cos y.
 \label{eq:detail147}
\end{align}
All diagonal velocity derivatives vanish. Symmetrizing the remaining ones
gives $S_{12}=a\cos y/2$, $S_{13}=a\cos x/2$, and
$S_{23}=a\cos z/2$. Because both off-diagonal entries contribute to the
Frobenius norm,
\begin{align}
 &|S|_F^2=2(S_{12}^2+S_{13}^2+S_{23}^2)
 =\frac{a^2}{2}(\cos^2x+\cos^2y+\cos^2z)
 =\frac12|\omega|^2.
 \label{eq:detail148}
\end{align}
The diagonal entries of $S$ vanish, so its quadratic form contains the three
off-diagonal pairs:
\begin{align}
 \omega\cdot S\omega
 &=2(S_{12}\omega_1\omega_2+S_{13}\omega_1\omega_3
       +S_{23}\omega_2\omega_3)\\
 &=3a^3\cos x\cos y\cos z.
 \label{eq:detail149}
\end{align}
Finally, direct substitution into
$(u\cdot\nabla)u_j=\sum_i u_i\partial_i u_j$ produces the three components
in Equation~\eqref{eq:nonlinearU}.

The norm comparisons in Equation~\eqref{eq:normposition} follow from the same
six Fourier modes. Each sine contributes two coefficients of magnitude
$1/2$, all on $|k|=1$, so
\begin{align}
 &\|aU\|_{\A^0}=\|aU\|_{\A^2}=6\frac a2=3a,
 \qquad
 \|aU\|_{\dot H^{1/2}}^2=6\frac{a^2}{4}=\frac32a^2.
 \label{eq:detail150}
\end{align}
The three sine factors can attain magnitude one simultaneously, giving
$\|aU\|_\infty=a\sqrt3$. For a mean-zero perturbation, $|k|\geq1$ on every
active mode. Hence
\begin{align}
 &\|h\|_{\A^0}\leq\|h\|_{\A^2},
 \quad \|h\|_\infty\leq\sum_k|\widehat h(k)|=\|h\|_{\A^0},
 \label{eq:detail151}
\end{align}
and the $\ell^2\leq\ell^1$ inequality gives
\begin{align}
 &\|h\|_{\dot H^{1/2}}
 =\left(\sum_k|k|\,|\widehat h(k)|^2\right)^{1/2}
 \leq\sum_k|k|^{1/2}|\widehat h(k)|
 \leq\|h\|_{\A^2}.
 \label{eq:detail152}
\end{align}
Combining these estimates with the triangle inequality yields every bound in
Equation~\eqref{eq:normposition}.

For a general gradient tensor $A=\nabla u$, its symmetric part
$S=(A+A^\top)/2$ is the infinitesimal strain, while its antisymmetric
part contains the curl components displayed in Equation~\eqref{eq:jhtdbstrain}.
The scalar $P_\omega=\omega\cdot S\omega$ is positive when the strain
increases vorticity magnitude and negative when it compresses it. Dividing by
$|\omega|^2|S|_F$ removes the local magnitude and leaves the signed geometric
factor $\sigma$. Indeed, Cauchy--Schwarz and the operator--Frobenius norm
inequality give
\begin{align}
 &|\omega\cdot S\omega|
 \leq |\omega|\,|S\omega|
 \leq |\omega|^2\|S\|_{\mathrm{op}}
 \leq |\omega|^2|S|_F,
 \label{eq:detail153}
\end{align}
which proves $|\sigma|\leq1$ whenever the denominator is nonzero. For the
analytic centre, writing $r_i=|\cos x_i|$ gives
\begin{align}
 &|\sigma_\omega|
 =\frac{3\sqrt2\,r_1r_2r_3}
 {(r_1^2+r_2^2+r_3^2)^{3/2}}
 \leq\sqrt{\frac23},
 \label{eq:detail154}
\end{align}
where the final inequality follows from
$r_1r_2r_3\leq[(r_1^2+r_2^2+r_3^2)/3]^{3/2}$. Equality occurs when
$r_1=r_2=r_3\ne0$, establishing the analytic extrema quoted in the Results.
Under an orthogonal change of coordinates
$A\mapsto OAO^\top$, cyclicity of the trace gives
$\operatorname{tr}[(OAO^\top)^m]=\operatorname{tr}(A^m)$. Thus the
$Q$ and $R$ in Equation~\eqref{eq:jhtdbgeometry} are coordinate invariants.
If $S=E\operatorname{diag}(\lambda_1,\lambda_2,\lambda_3)E^\top$, the columns
$E_{:i}=e_i$ are the ordered strain eigenvectors and the alignment is
\begin{align}
 &c_i=|\widehat\omega\cdot e_i|
 =\left|\sum_{j=1}^{3}\widehat\omega_jE_{ji}\right|.
 \label{eq:detail155}
\end{align}
Vorticity is an axial vector. An orthogonal coordinate change sends
$(\widehat\omega,E)$ to $(\det(O)O\widehat\omega,OE)$, so the absolute
contraction remains unchanged. At a repeated strain eigenvalue, individual
eigenvectors within its eigenspace are not unique. Only the projection onto
that entire eigenspace is basis independent. The numerical
implementation therefore contracts vorticity with eigenvector columns, not
matrix rows. Appendix~\ref{app:tensorexpansion} expands these contractions
and distinguishes tensor invariants from the global spectral residual.
They are diagnostic definitions and are not used to prove
Theorem~\ref{thm:main}.

\section{Dilated-eigenfield asymptotics}
\label{app:eigenasymptotic}

This appendix provides the calculation behind
Equations~\eqref{eq:Rasymptotic}--\eqref{eq:fluxasymptotic}. Under
$y=nx$ and $\tau=\nu n^2t$, one spatial derivative contributes a factor
$n$ and one time derivative contributes $\nu n^2$. Substitution into the
viscosity-$\nu$ equation, followed by division by $\nu n^2$, gives
\begin{align}
 &\partial_\tau v_\varepsilon+Av_\varepsilon
 =-\varepsilon B(v_\varepsilon,v_\varepsilon),
 \qquad \varepsilon=(\nu n)^{-1}.
 \label{eq:detail156}
\end{align}
The first nonlinear Duhamel term is
\begin{align}
 &-\varepsilon\int_0^\tau
 e^{-(\tau-s)A}B(e^{-sA}U,e^{-sA}U)\dd s.
 \label{eq:detail157}
\end{align}
Since $AU=U$ and $B(U,U)=V$ with $AV=2V$, the integrand equals
$e^{-2(\tau-s)}e^{-2s}V=e^{-2\tau}V$. Its integral is therefore
$-\varepsilon\tau e^{-2\tau}V$. The following estimate makes the remainder uniform
for all $\tau\geq0$, which is required before integrating the asymptotic to
infinite time. Fix an integer $s\geq4$. The spectral gap on mean-zero fields,
the $H^s$ product estimate, and one derivative of heat smoothing give
\begin{align}
 &\|e^{-rA}B(f,g)\|_{H^s}
 \leq C_s(1+r^{-1/2})e^{-r}\|f\|_{H^s}\|g\|_{H^s},
 \qquad r>0.
 \label{eq:weightedheatbilinear}
\end{align}
Introduce the weighted Banach space of continuous mean-zero solenoidal
$H^s$-valued paths
\begin{align}
 &X_s=\left\{v:\|v\|_{X_s}:=\sup_{\tau\geq0}
 e^\tau\|v(\tau)\|_{H^s}<\infty\right\}.
 \label{eq:detail159}
\end{align}
If $f,g\in X_s$, multiplication of the Duhamel estimate by $e^\tau$ yields
\begin{align}
 &e^\tau\left\|\int_0^\tau e^{-(\tau-r)A}B(f(r),g(r))\dd r\right\|_{H^s}
 \notag\\
 &\quad\leq C_s\|f\|_{X_s}\|g\|_{X_s}
 \int_0^\tau[1+(\tau-r)^{-1/2}]e^{-r}\dd r.
 \label{eq:weightedduhamel}
\end{align}
The final integral is uniformly bounded. Its nonsingular part is at most one.
For $0<\tau\leq1$, the singular part is at most $2\sqrt\tau$. For $\tau>1$,
split at $r=\tau/2$. On the first half, $(\tau-r)^{-1/2}\leq
(\tau/2)^{-1/2}$ and $\int_0^{\tau/2}e^{-r}\dd r\leq1$. On the second half,
set $q=\tau-r$ and use $e^{-r}\leq e^{-\tau/2}$ to obtain the bound
$2e^{-\tau/2}\sqrt{\tau/2}$. Consequently the mild map
\begin{align}
 &\mathcal T_\varepsilon(v)(\tau)=e^{-\tau A}U
 -\varepsilon\int_0^\tau e^{-(\tau-r)A}B(v(r),v(r))\dd r
 \label{eq:detail161}
\end{align}
maps a fixed ball of $X_s$ into itself and has Lipschitz constant
$C_s|\varepsilon|$. For $|\varepsilon|\leq\varepsilon_s$ it is therefore a
contraction, and its fixed point satisfies
\begin{align}
 &\|v_\varepsilon-e^{-\tau}U\|_{X_s}\leq C_s|\varepsilon|.
 \label{eq:firstweightedremainder}
\end{align}
Let $b(\tau)=\tau e^{-2\tau}V$. Subtracting the linear term and the exact first
Duhamel coefficient gives
\begin{align}
 v_\varepsilon-e^{-\tau}U+\varepsilon b
 =-\varepsilon\int_0^\tau e^{-(\tau-r)A}
 \{B(v_\varepsilon,v_\varepsilon)-B(e^{-r}U,e^{-r}U)\}\dd r.
 \label{eq:detail163}
\end{align}
Polarizing the difference, then applying
Equations~\eqref{eq:weightedduhamel} and
\eqref{eq:firstweightedremainder}, proves
\begin{align}
 &\sup_{\tau\geq0}e^\tau
 \|v_\varepsilon(\tau)-e^{-\tau}U
 +\varepsilon\tau e^{-2\tau}V\|_{H^s}
 \leq C_s\varepsilon^2.
 \label{eq:secondweightedremainder}
\end{align}
Thus the remaining Duhamel terms are
$O_{H^s}(\varepsilon^2e^{-\tau})$ uniformly on the half-line. Because the
first derivative of the spectral variance functional vanishes at the
monochromatic state $U$, substituting
\eqref{eq:secondweightedremainder} into the exact moment formula produces a
cubic, rather than linear, residual error.

The leading two eigenspaces are orthogonal, and
$\|U\|_2^2=3/2$, $\|V\|_2^2=3/4$. Relative to the eigenvalue-one shell, the
eigenvalue-two coefficient has displacement $2-1=1$. The variance identity
of Appendix~\ref{app:observable} consequently gives
\begin{align}
 &R(v_\varepsilon(\tau))
 =(2-1)^2\|V\|_2^2
   \varepsilon^2\tau^2e^{-4\tau}
   +O(\varepsilon^3e^{-2\tau}),
 \label{eq:detail165}
\end{align}
which is Equation~\eqref{eq:Rasymptotic}. Spatial dilation multiplies every
Laplacian eigenvalue by $n^2$, hence multiplies $R$ by $n^4$, while
$dt=d\tau/(\nu n^2)$. Therefore
\begin{align}
 \int_0^\infty R(u_n(t))^{2/3}\dd t
 &=\frac{n^{8/3}}{\nu n^2}
   \left(\frac34\right)^{2/3}\varepsilon^{4/3}
   \int_0^\infty\tau^{4/3}e^{-8\tau/3}\dd\tau
   +\text{remainder}.
 \label{eq:detail166}
\end{align}
The Gamma identity
$\int_0^\infty\tau^{p-1}e^{-q\tau}\dd\tau=\Gamma(p)q^{-p}$,
with $p=7/3$ and $q=8/3$, gives the constant $C_R$ displayed in
Equation~\eqref{eq:actionasymptotic}. The elementary inequality
$|x^{2/3}-y^{2/3}|\leq|x-y|^{2/3}$ for $x,y\geq0$ turns the cubic residual
error into a relative $O(\varepsilon^{2/3})$ error after integration. Finally,
$n^{2/3}\nu^{-1}\varepsilon^{4/3}=\nu^{-7/3}n^{-2/3}$, which establishes
the stated scaling.

For the signed source, integrate the balance derived in
Appendix~\ref{app:observable}. Both endpoint residuals vanish to the required
order, so $\int\mathcal F\dd t=\nu\int D_R\dd t$. On the two leading shells,
the centred dissipation has the expansion
\begin{align}
 &D_R(v_\varepsilon(\tau))
 =2(2-1)^2\|V\|_2^2\varepsilon^2\tau^2e^{-4\tau}
  +O(\varepsilon^3e^{-2\tau})
 =\frac32\varepsilon^2\tau^2e^{-4\tau}+O(\varepsilon^3e^{-2\tau}).
 \label{eq:detail167}
\end{align}
Spatial dilation multiplies $D_R$ by $n^6$. Using again
$dt=d\tau/(\nu n^2)$ and
$\int_0^\infty\tau^2e^{-4\tau}\dd\tau=2/4^3=1/32$ gives
\begin{align}
 \nu\int_0^\infty D_R(u_n(t))\dd t
 &=n^4\frac32\varepsilon^2\frac1{32}
   [1+O(\varepsilon)]\\
 &=\frac{3n^2}{64\nu^2}[1+O((\nu n)^{-1})],
 \label{eq:detail168}
\end{align}
which proves Equation~\eqref{eq:fluxasymptotic}.

\section{Exact Fourier algebra and radical bounds}
\label{app:algebra}

Exact heat integrals and Fourier coefficients determine the path and
majorants in Equations~\eqref{eq:path}--\eqref{eq:r}. After the cyclic-shear
calculation, Sections~\ref{app:abcalgebra}, \ref{app:tg3algebra}, and
\ref{app:boxalgebra} derive the remaining centre identities, and
Section~\ref{app:bernsteindetail} converts coefficient dependence into
whole-interval bounds. The two
heat convolutions can be evaluated without dividing by a vanishing spectral
gap. For the eigenvalue-one output,
\begin{align}
 K_1(t)
 &=e^{-t}\int_0^t s e^{-2s}\dd s\\
 &=e^{-t}\left[-\frac{(2s+1)e^{-2s}}4\right]_{s=0}^{s=t}
 =\frac{e^{-t}-(1+2t)e^{-3t}}4.
 \label{eq:detail169}
\end{align}
For the eigenvalue-five output,
\begin{align}
 K_5(t)
 &=e^{-5t}\int_0^t s e^{2s}\dd s\\
 &=e^{-5t}\left[\frac{(2s-1)e^{2s}}4\right]_{s=0}^{s=t}
 =\frac{(2t-1)e^{-3t}+e^{-5t}}4.
 \label{eq:detail170}
\end{align}
Both kernels are nonnegative because their defining integrands are
nonnegative. Differentiating the integral forms gives
$K_\ell'+\ell K_\ell=te^{-3t}$ and $K_\ell(0)=0$.

The spatial calculation starts from the six coefficients
\begin{align}
 &\widehat U(0,\pm1,0)=\left(\frac{\pm1}{2i},0,0\right),\quad
 \widehat U(0,0,\pm1)=\left(0,\frac{\pm1}{2i},0\right),
 \label{eq:detail171}
\end{align}
\begin{align}
 &\widehat U(\pm1,0,0)=\left(0,0,\frac{\pm1}{2i}\right).
 \label{eq:detail172}
\end{align}
Starting from these coefficients, the finite convolution repeatedly applies
\eqref{eq:fourierB}. For each output $k$, it
forms every ordered pair $p+q=k$, evaluates
$i(\widehat f(p)\cdot q)\widehat g(q)$, and applies
$I-k\otimes k/|k|^2$. Equal modes are summed before any norm is taken. Exact conjugate symmetry and $k\cdot\widehat f(k)=0$ verify reality and
solenoidality of the resulting fields.

The resulting exact calculation table is
\begin{center}\small
\begin{tabular}{@{}llll@{}}
\toprule
Field & Definition & spectral support used & exact $\A^0$ norm\\
\midrule
$U$ & seed & $|k|^2=1$ & $3$\\
$V$ & $B(U,U)$ & $|k|^2=2$ & $3$\\
$C$ & $B(U,V)+B(V,U)$ & $|k|^2=1,5$ & $3$\\
$H$ & $C+U/2$ & $|k|^2=5$ & $3/2$\\
$D$ & $B(V,V)$ & finite exact output & $\sqrt3$\\
$E$ & $B(U,H)+B(H,U)$ & finite exact output & $(3+\sqrt{21})/2$\\
$F$ & $B(V,H)+B(H,V)$ & finite exact output & $9\sqrt{22}/44+9\sqrt5/20+\sqrt{29}/4$\\
$G$ & $B(H,H)$ & finite exact output & $2\sqrt{21}/7$\\
\bottomrule
\end{tabular}
\end{center}

Rational upper bounds for the radicals preserve the exact inequalities in
\eqref{eq:r}:
\begin{align}
 &\sqrt3<\frac74,\qquad \sqrt{21}<\frac{37}{8},\qquad
 \sqrt{22}<\frac{19}{4},\qquad \sqrt5<\frac94,\qquad
 \sqrt{29}<\frac{11}{2}.
 \label{eq:detail173}
\end{align}
Squaring proves each inequality. In particular,
\begin{align}
 &\frac9{44}\frac{19}{4}+\frac9{20}\frac94+\frac14\frac{11}{2}
 <\frac72,\qquad
 \frac27\frac{37}{8}<\frac43.
 \label{eq:detail174}
\end{align}
Because $H$ lies on $|k|^2=5$,
$\|H\|_{\A^1}=\sqrt5\|H\|_{\A^0}<27/8$. Because $V$ lies on
$|k|^2=2$, $\|V\|_{\A^1}=3\sqrt2<9/2$. These are precisely the
coefficients of \eqref{eq:A1}. This algorithmic description and the displayed
seed coefficients determine the full finite calculation uniquely. The
independent implementation reconstructs it rather than reading stored field
values.
Appendix~\ref{app:pathexpansion} lists the coefficient multiplicities behind
every radical in this table and expands all nine bilinear terms before
collecting the residual. Thus the norm bounds can be checked without
inferring cancellations from the final expression.

\subsection{Exact equal-ABC identities}
\label{app:abcalgebra}

This subsection supplies the algebra used in Theorem~\ref{thm:abc}. Direct
differentiation gives
\begin{align}
 &\nabla\times U_{\rm ABC}
 =(\partial_yU_3-\partial_zU_2,
   \partial_zU_1-\partial_xU_3,
   \partial_xU_2-\partial_yU_1)
 =U_{\rm ABC}.
 \label{eq:detail175}
\end{align}
Each sine and cosine has Laplacian eigenvalue one, so
$AU_{\rm ABC}=U_{\rm ABC}$. The vector identity
\begin{align}
 &(u\cdot\nabla)u
 =\nabla\frac{|u|^2}{2}-u\times(\nabla\times u)
 \label{eq:detail176}
\end{align}
then gives
$(U_{\rm ABC}\cdot\nabla)U_{\rm ABC}=\nabla(|U_{\rm ABC}|^2/2)$.
The Leray projection annihilates this gradient, proving
$B(U_{\rm ABC},U_{\rm ABC})=0$.

The Fourier support consists of the six modes
$\{\pm e_1,\pm e_2,\pm e_3\}$. At each mode, two orthogonal component
coefficients have magnitude $1/2$, so the Euclidean magnitude of the vector
coefficient is $1/\sqrt2$. Since every active mode has $|k|=1$,
\begin{align}
 &\|U_{\rm ABC}\|_{\A^0}
 =\|U_{\rm ABC}\|_{\A^1}
 =6\frac1{\sqrt2}=3\sqrt2.
 \label{eq:detail177}
\end{align}
Thus $v(t)=ae^{-t}U_{\rm ABC}$ satisfies
$v_t+Av+B(v,v)=0$ exactly, and every cubical cutoff $N\geq1$ contains its
full support. These identities establish the path and norm claims used in
Theorem~\ref{thm:abc}.

\subsection{Exact TG3 identities}
\label{app:tg3algebra}

For signs $s_x,s_y,s_z\in\{-1,1\}$, the eight coefficients of the TG3
centre are
\begin{align}
 &\widehat U_{\rm TG3}(s_x,s_y,s_z)
 =\frac{(s_x,s_y,-2s_z)}{8i}.
 \label{eq:detail178}
\end{align}
Their scalar product with $(s_x,s_y,s_z)$ is zero, proving solenoidality.
Reversal of all signs gives complex conjugation and proves reality. Every mode
has $|k|^2=3$, so $AU_{\rm TG3}=3U_{\rm TG3}$. Each coefficient has
Euclidean magnitude $\sqrt6/8$, and therefore
\begin{align}
 &\|U_{\rm TG3}\|_{\A^0}=8\frac{\sqrt6}{8}=\sqrt6,
 \qquad
 \|U_{\rm TG3}\|_{\A^1}=\sqrt3\sqrt6=3\sqrt2.
 \label{eq:detail179}
\end{align}

Applying \eqref{eq:fourierB} to all ordered pairs gives eight nonzero output
modes $(\pm2,0,\pm2)$ and $(0,\pm2,\pm2)$. At each output, the two nonzero
components have magnitude $3/32$ with signs fixed by the wave vector. For
example,
\begin{align}
 &\widehat V_{\rm TG3}(-2,0,-2)
 =\left(-\frac{3i}{32},0,\frac{3i}{32}\right).
 \label{eq:detail180}
\end{align}
Every output coefficient therefore has magnitude $3\sqrt2/32$, whence
\begin{align}
 &V_{\rm TG3}\ne0,
 \qquad \|V_{\rm TG3}\|_{\A^0}
 =8\frac{3\sqrt2}{32}=\frac{3\sqrt2}{4}.
 \label{eq:detail181}
\end{align}
The ordered-pair calculation below establishes this coefficient as the
$\theta=0$ case of the full family and supplies the non-Beltrami path
algebra used in Theorem~\ref{thm:tg3}.

\subsection{Exact TG3 coefficient-family identities}
\label{app:boxalgebra}

Write $U_\theta=U+\theta W$, with $U=U_{\rm TG3}$ and
\begin{align}
 &W=(\sin x\cos y\cos z,-\sin y\cos z\cos x,0).
 \label{eq:detail182}
\end{align}
At a corner mode $k=(s_x,s_y,s_z)$, $s_j\in\{-1,1\}$, the Fourier
coefficients are
\begin{align}
 &\widehat U(k)=\frac{(s_x,s_y,-2s_z)}{8i},\qquad
 \widehat W(k)=\frac{(s_x,-s_y,0)}{8i}.
 \label{eq:detail183}
\end{align}
Their sum has squared Euclidean magnitude
\begin{align}
 &|\widehat U_\theta(k)|^2
 =\frac{(1+\theta)^2+(1-\theta)^2+4}{64}
 =\frac{6+2\theta^2}{64}.
 \label{eq:detail184}
\end{align}
All eight modes have $|k|^2=3$. Summing their magnitudes and then multiplying
by the common $\A^1$ weight $\sqrt3$ gives
\begin{align}
 &\|U_\theta\|_{\A^0}^2=6+2\theta^2,
 \qquad \|U_\theta\|_{\A^1}^2=18+6\theta^2.
 \label{eq:detail185}
\end{align}
On $[-\delta,\delta]$, $\delta=1/10$, the degree-two Bernstein
coefficients of $6+2\theta^2$ are
\begin{align}
 &6+2\delta^2,\quad 6-2\delta^2,\quad 6+2\delta^2
 =\frac{301}{50},\frac{299}{50},\frac{301}{50}.
 \label{eq:detail186}
\end{align}
Multiplication by three gives
$903/50,897/50,903/50$ for the $\A^1$ norm squared. The Bernstein
convex-hull property, followed by the strict rational square comparisons,
therefore yields
\begin{align}
 &\|U_\theta\|_{\A^0}<\frac{123}{50},\qquad
 \|U_\theta\|_{\A^1}<\frac{17}{4}.
 \label{eq:detail187}
\end{align}

Bilinearity gives the complete, rather than projected or sampled, expansion
\begin{align}
 &B(U_\theta,U_\theta)=V_0+\theta V_1+\theta^2V_2,
 \label{eq:detail188}
\end{align}
where $V_0=B(U,U)$, $V_1=B(U,W)+B(W,U)$, and $V_2=B(W,W)$.
For completeness, write $(c_1,c_2,c_3)=(1+\theta,1-\theta,-2)$, so that
$\widehat U_\theta(p)_j=c_jp_j/(8i)$ at a corner $p$ and $\sum_jc_j=0$.
For any ordered corner pair,
\begin{align}
 i(\widehat U_\theta(p)\cdot q)\widehat U_\theta(q)_j
 &=-\frac i{64}\left(\sum_\ell c_\ell p_\ell q_\ell\right)c_jq_j.
 \label{eq:expandtgpair}
\end{align}
At $k=(2\sigma,0,2\tau)$ there are two such pairs, with
$p_1=q_1=\sigma$, $p_3=q_3=\tau$, $p_2=-q_2=\pm1$.
The scalar sum equals $c_1-c_2+c_3=-2c_2$. Adding both raw vectors and
then projecting gives
\begin{align}
 b_k&=\frac{ic_2}{16}(c_1\sigma,0,c_3\tau),
 &k\cdot b_k&=-\frac{ic_2^2}{8},\\
 \mathbb P_kb_k
 &=b_k-k\frac{k\cdot b_k}{8}
 =\frac{ic_2(c_1-c_3)}{32}(\sigma,0,-\tau).
 \label{eq:expandtgprojection}
\end{align}
Permuting the coordinate calculation yields all nonzero family outputs:
\begin{align}
 \widehat V_\theta(2\sigma,0,2\tau)
 &=\frac{i(3-2\theta-\theta^2)}{32}(\sigma,0,-\tau),\\
 \widehat V_\theta(0,2\sigma,2\tau)
 &=\frac{i(3+2\theta-\theta^2)}{32}(0,\sigma,-\tau),\\
 \widehat V_\theta(2\sigma,2\tau,0)
 &=\frac{-4i\theta}{32}(\sigma,-\tau,0),
 \qquad \sigma,\tau\in\{-1,1\}.
 \label{eq:expandtgoutputs}
\end{align}
An output with three nonzero coordinates has $p=q$ and scalar factor
$\sum_jc_j=0$. With only one nonzero coordinate, summing pairs leaves a
vector parallel to $k$, which the projection removes. The zero output
vanishes by solenoidality. These cases exhaust the sums of two corners.
Collecting powers of $\theta$ gives the coefficients of $V_0,V_1,V_2$.
Their norms are respectively $8(3\sqrt2/32)$,
$[4\cdot2+4\cdot2+4\cdot4]\sqrt2/32$, and $8\sqrt2/32$.
Thus the complete mode calculation gives
\begin{center}\small
\begin{tabular}{@{}lccc@{}}
\toprule
Field & Nonzero modes & Squared shell & $\A^0$ norm\\
\midrule
$U$ & 8 & 3 & $\sqrt6$\\
$W$ & 8 & 3 & $\sqrt2$\\
$V_0$ & 8 & 8 & $3\sqrt2/4$\\
$V_1$ & 12 & 8 & $\sqrt2$\\
$V_2$ & 8 & 8 & $\sqrt2/4$\\
\bottomrule
\end{tabular}
\end{center}
The union of the three residual supports contains 12 modes. For
$q=|\theta|\leq1/10$, the triangle inequality after exact grouping gives
\begin{align}
 &\|B(U_\theta,U_\theta)\|_{\A^0}
 \leq\sqrt2\left(\frac34+q+\frac{q^2}{4}\right).
 \label{eq:detail189}
\end{align}
On $q\in[0,1/10]$, the parenthesis has Bernstein coefficients
\begin{align}
 &\frac34,\qquad\frac45,\qquad\frac{341}{400}.
 \label{eq:detail190}
\end{align}
Its maximum is therefore at most $341/400$. Using
$\sqrt2<99/70$ proves the residual bound used in
Theorem~\ref{thm:tg3box}. These bounds establish the parameter-uniform algebraic estimates used in
that theorem.

\subsection{Bernstein conversion and whole-interval bounds}
\label{app:bernsteindetail}

The interval bounds in Theorem~\ref{thm:tg3box} follow from a polynomial
identity rather than a sampling assumption. Map an interval $[l,r]$ to
$x=(\theta-l)/(r-l)\in[0,1]$ and expand a degree-$m$ polynomial as
$p(l+(r-l)x)=\sum_{j=0}^{m}c_jx^j$. The Bernstein basis consists of the
nonnegative polynomials
\begin{align}
 b_{i,m}(x)&=\binom mi x^i(1-x)^{m-i},
 &\sum_{i=0}^{m}b_{i,m}(x)&=(x+1-x)^m=1.
 \label{eq:bernsteinpartition}
\end{align}
The binomial identity
$\binom mi\binom ij=\binom mj\binom{m-j}{i-j}$ gives
\begin{align}
 \sum_{i=j}^{m}\frac{\binom ij}{\binom mj}b_{i,m}(x)
 &=x^j\sum_{i=j}^{m}\binom{m-j}{i-j}x^{i-j}(1-x)^{m-i}
 =x^j.
 \label{eq:bernsteinmonomial}
\end{align}
Substitute this identity into the monomial expansion and interchange the
finite sums. The Bernstein coefficients and their enclosure are
\begin{align}
 p(l+(r-l)x)&=\sum_{i=0}^{m}\beta_i b_{i,m}(x),
 &\beta_i&=\sum_{j=0}^{i}c_j\frac{\binom ij}{\binom mj},\\
 \min_i\beta_i&\leq p(\theta)\leq\max_i\beta_i.
 \label{eq:bernsteinconversion}
\end{align}
For $p(\theta)=6+2\theta^2$ and $\theta=-\delta+2\delta x$,
the monomial coefficients are $6+2\delta^2,-8\delta^2,8\delta^2$.
At degree two the conversion gives $\beta_0=c_0$,
$\beta_1=c_0+c_1/2$, and $\beta_2=c_0+c_1+c_2$, recovering
$6+2\delta^2,6-2\delta^2,6+2\delta^2$.
For $3/4+q+q^2/4$ with $q=x/10$, it gives
$3/4,4/5,341/400$. The rational square comparisons
$(123/50)^2>301/50$, $(17/4)^2>903/50$, and $(99/70)^2>2$
complete the bounds used in the Methods.

\section{Exact comparison induction and terminal barrier}
\label{app:comparison}

The scalar comparison preserves the error bound until terminal Wiener
smallness supplies all-time decay. The derivation below proves comparison
and the terminal barrier, Section~\ref{app:slabalgorithm} specifies the slab
construction, and Sections~\ref{app:abccomparison}, \ref{app:tg3comparison},
and \ref{app:boxcomparison} apply it to the three heat-path families.
For the viscosity-$\nu$ error equation in Theorem~\ref{thm:acceptance}, set
$Y_j=\|w_N\|_{\A^j}$ and $Y=Y_0$. The convolution estimate gives
\begin{align}
 &D^+Y+\nu Y_2\leq d+A_0Y_1+A_1Y+YY_1.
 \label{eq:detail194}
\end{align}
For $Y>0$, write $z=Y_2/Y\geq1$, $w=\sqrt z$, and
$b=(A_0+Y)/\nu$. Since $Y_1\leq\sqrt{YY_2}=Yw$, the terms containing
$Y_1$ and $Y_2$ satisfy
\begin{align}
 &(A_0+Y)Y_1-\nu Y_2\leq\nu Y(bw-w^2)
 \leq\frac{\nu Y}{2}(b^2-1).
 \label{eq:detail195}
\end{align}
Expansion of the right-hand side gives
\begin{align}
 &D^+Y\leq d+
 \left(A_1-\frac\nu2+\frac{A_0^2}{2\nu}\right)Y
 +\frac{A_0}{\nu}Y^2+\frac1{2\nu}Y^3,
 \label{eq:detail196}
\end{align}
which is the scalar inequality in Theorem~\ref{thm:acceptance}. The case
$Y=0$ follows by the upper-Dini limit.

Let
\begin{align}
 &F_i(q)=r_i+\left(A_{1,i}-\frac12+\frac{A_{0,i}^2}{2}\right)q
 +A_{0,i}q^2+\frac12q^3.
 \label{eq:detail197}
\end{align}
For $q\geq0$, $F_i''(q)=2A_{0,i}+3q\geq0$. Therefore the maximum of
$F_i$ along an affine segment is attained at an endpoint. If the exact slack
\eqref{eq:slack} is nonnegative, then $y'\geq F_i(y)$ throughout the slab.
To justify comparison even at tangential contact, fix a bound $M$ containing
the ranges of $Y$ and $y$ on the slab and let $L$ be a Lipschitz constant for
$F_i$ on $[0,M]$. A finite Galerkin trajectory has continuously differentiable
Fourier coefficients, so its Wiener norm $Y$ is locally Lipschitz. Thus
$z=(Y-y)_+$ is absolutely continuous. Where $Y>y$, the differential
inequality gives $z'\leq F_i(Y)-F_i(y)\leq Lz$ almost everywhere.
Where $Y<y$, $z'=0$, and the derivative of an absolutely continuous function
vanishes almost everywhere on a level set, giving the same inequality where
$Y=y$. Therefore
\begin{align}
 (e^{-L(t-t_i)}z(t))'&\leq0\quad\text{almost everywhere},
 &z(t_i)&=0,
 \label{eq:comparisonpositivepart}
\end{align}
so $z=0$ and $Y\leq y$. Induction applies because the stored endpoints agree
exactly. This argument supplies the scalar comparison used in
Section~\ref{sec:mathmethods} without assuming that non-strict contact is a
contradiction.

For the terminal barrier, let $W(t)=\|u_N(t)\|_{\A^0}$. Coefficientwise
evaluation of the Galerkin equation, followed by
$\|u_N\|_{\A^1}\leq\|u_N\|_{\A^2}$ on nonzero integer modes, gives
\begin{align}
 &D^+W+\|u_N\|_{\A^2}
 \leq W\|u_N\|_{\A^1}
 \leq W\|u_N\|_{\A^2}.
 \label{eq:detail199}
\end{align}
Thus
\begin{align}
 &D^+W\leq-(1-W)\|u_N\|_{\A^2}.
 \label{eq:detail200}
\end{align}
At any first attempted crossing of $W=c<1$, the right side is negative.
Therefore $W\leq c$. Since $\|u_N\|_{\A^2}\geq W$ on mean-zero modes,
\begin{align}
 &W(t)\leq c\exp(-(1-c)(t-T_*)),\qquad t\geq T_*.
 \label{eq:detail201}
\end{align}
This gives global decay independently of the residual-action criterion.
For viscosity $\nu>0$, the same calculation retains $\nu$ in the dissipative
term. If $W(T_*)\leq\eta<\nu$, it gives
\begin{align}
 D^+W&\leq-(\nu-W)\|u_N\|_{\A^2},
 &W(t)&\leq\eta e^{-(\nu-\eta)(t-T_*)}.
 \label{eq:viscousbarrierdetail}
\end{align}
The bound $W\leq\eta$ follows first by the same barrier argument, after which
$\|u_N\|_{\A^2}\geq W$ yields the exponential estimate by integration.
This establishes the terminal continuation step for the general-viscosity
comparison principle in Section~\ref{sec:mathmethods}.

\subsection{Reconstruction of the rational comparison trajectory}
\label{app:slabalgorithm}

The following construction recovers the finite records used in
Section~\ref{sec:exactmethods} from the formulas in
Section~\ref{sec:mathmethods}. All endpoints and coefficients are fractions.
For a chosen integer precision $b$, outward rounding means
\begin{align}
 \operatorname{up}_b(x)&=2^{-b}\lceil2^bx\rceil,
 &\operatorname{down}_b(x)&=-\operatorname{up}_b(-x).
 \label{eq:outwardrounding}
\end{align}
These operations preserve upper and lower bounds even when repeated.
For $x\geq0$, let $S_m(x)=\sum_{j=0}^{m}x^j/j!$.
The first omitted term is $x^{m+1}/(m+1)!$, and every subsequent term
has ratio at most $x/(m+2)$. If this ratio is less than one, a geometric sum
therefore bounds the positive Taylor tail:
\begin{align}
 S_m(x)&\leq e^x\leq S_m(x)+T_m(x),
 &T_m(x)&=\frac{x^{m+1}}{(m+1)!}\frac1{1-x/(m+2)},\\
 \operatorname{down}_b\!\left(\frac1{S_m+T_m}\right)
 &\leq e^{-x}\leq\operatorname{up}_b\!\left(\frac1{S_m}\right).
 \label{eq:exponentialenclosure}
\end{align}
The independent calculation uses the degree-129 and degree-128 Taylor
polynomials of $e^{-x}$ as lower and upper bounds. Taylor's integral
remainder has sign $(-1)^{m+1}$ because every derivative of $e^{-x}$ has
that sign, which proves the enclosure without requiring the early terms of
the alternating series to decrease.

Whole-slab coefficients follow from these exponential intervals. For
$t^pe^{-qt}$, differentiation gives
$t^{p-1}e^{-qt}(p-qt)$, so its maximum on $[l,r]$ occurs at an endpoint or
at $t=p/q$ if that point lies in the slab. Its minimum occurs at an endpoint.
For $K_1$ and $K_5$, the positive summands in their signed formulas use upper
interval endpoints and the negative summands use lower endpoints when forming
an upper bound. Substitution into Equations~\eqref{eq:A0}--\eqref{eq:r},
with outward rounding after each declared coefficient construction, gives the
cyclic-shear slab coefficients. The heat-path bounds in the other three
applications decrease with time, so their slab maxima occur at $l$.

Given $t_i,y_i$, the algorithm first chooses a trial length $h$ and computes
the enclosing scalar polynomial $F_i$ defined above. Write $v_i=F_i(y_i)$.
The trial endpoint and its tested slack are
\begin{align}
 y_{i+1}&=\operatorname{up}_b\!\left(\max\{0,y_i+h\kappa_i v_i\}\right),\\
 \sigma_i&=\min_{q\in\{y_i,y_{i+1}\}}
 \left\{\frac{y_{i+1}-y_i}{h}-F_i(q)\right\}.
 \label{eq:slabreconstruction}
\end{align}
If $\sigma_i<0$, the algorithm halves $h$, recomputes the coefficients on
the smaller slab, and retries. If $\sigma_i\geq0$, it accepts the slab and
sets $t_{i+1}=t_i+h$. At each accepted endpoint it tests the pointwise bound
$A_0(t_{i+1})+y_{i+1}<99/100$ and stops at its first success.
The proposal factors create slack, but only the exact endpoint inequality
determines acceptance. Failure to find a slab supplies no regularity claim.

The four applications use the following deterministic settings. The Taylor
degree refers to the positive Taylor sum in
Equation~\eqref{eq:exponentialenclosure}. Every application allows 16 trial
lengths per step and uses its theorem's perturbation radius as $y_0$.
\begin{center}\small
\begin{tabular}{@{}lrrrrr@{}}
\toprule
Family & $b$ & Taylor degree & Initial $h$ & $\kappa_i$ if $v_i\geq0$ & $\kappa_i$ if $v_i<0$\\
\midrule
Cyclic shear & 72 & 96 & $1/512$ & $65/64$ & $63/64$\\
Equal ABC & 80 & 96 & $1/1024$ & $129/128$ & $127/128$\\
TG3 & 80 & 128 & $1/1024$ & $129/128$ & $127/128$\\
TG3 coefficient interval & 80 & 128 & $1/1024$ & $129/128$ & $127/128$\\
\bottomrule
\end{tabular}
\end{center}
The maximum search times are $2$, $4$, $4/3$, and $4/3$, respectively.
All four constructions stop earlier, at the terminal times in their theorems.
The stored fractions are independently checked against fresh whole-slab
coefficient bounds, endpoint continuity, nonnegativity, and terminal
smallness. Thus a reader can generate the trajectory or verify the supplied
trajectory using the same analytic inequalities.

\subsection{Equal-ABC comparison}
\label{app:abccomparison}

For the exact ABC heat path, both path norms entering
\eqref{eq:dini0} are bounded by
\begin{align}
 &A(t)=\frac{297}{70}ae^{-t},
 \label{eq:detail205}
\end{align}
because $\sqrt2<99/70$, and the residual majorant is zero. Substitution of
$A_0=A_1=A$ and $r=0$ into \eqref{eq:dini} gives the scalar inequality in
the proof of Theorem~\ref{thm:abc}. On a slab with a constant exact upper
bound $A_i$ and affine comparison function $y$, define
\begin{align}
 &F_i^{\rm ABC}(q)=
 \left(A_i-\frac12+\frac{A_i^2}{2}\right)q
 +A_iq^2+\frac12q^3.
 \label{eq:detail206}
\end{align}
For $q\geq0$,
$({F_i^{\rm ABC}})''(q)=2A_i+3q\geq0$. Therefore checking the affine slope
against $F_i^{\rm ABC}$ at both endpoint values of $y$ controls the entire
slab. Applying the first-contact induction above to 2390 adjacent exact slabs
from $0$ to $1195/512$, starting at $y(0)=1/100$, yields
\begin{align}
 &\|u_N(1195/512)\|_{\A^0}
 \leq
 \frac{598396355886649288745723}{604462909807314587353088}
 <\frac{99}{100}
 \label{eq:detail207}
\end{align}
and
\begin{align}
 &\sup_{0\leq t\leq1195/512}\|u_N(t)\|_{\A^0}
 \leq
 \frac{437511399315879405468719}{151115727451828646838272}.
 \label{eq:detail208}
\end{align}
The inequalities hold for all $0\leq a\leq17/25$, because $A(t)$ is
nondecreasing in $a$, and for every $N\geq1$, because the path has no omitted
cutoff modes. The first bound enters the Wiener barrier. The second supplies
the finite-time bound used in the Galerkin compactness argument. Together, these estimates supply the comparison bounds used in
Theorem~\ref{thm:abc}.

\subsection{TG3 comparison}
\label{app:tg3comparison}

For $0\leq a\leq41/40$, the TG3 heat path and its full residual satisfy the
rational bounds
\begin{align}
 &A_0(t)\leq\frac{49}{20}\frac{41}{40}e^{-3t},\qquad
 A_1(t)\leq3\frac{99}{70}\frac{41}{40}e^{-3t},
 \label{eq:detail209}
\end{align}
\begin{align}
 &d(t)\leq\frac34\frac{99}{70}
 \left(\frac{41}{40}\right)^2e^{-6t},
 \label{eq:detail210}
\end{align}
using $\sqrt6<49/20$ and $\sqrt2<99/70$. Substitution into the
unit-viscosity scalar inequality of Theorem~\ref{thm:acceptance} defines the
convex slab polynomial. The endpoint induction above, applied to 873 adjacent
exact slabs from $0$ to $873/1024$ with $y(0)=1/40$, yields
\begin{align}
 &\left\|u_N\!\left(\frac{873}{1024}\right)\right\|_{\A^0}
 \leq
 \frac{1196766436682595646613963}{1208925819614629174706176}
 <\frac{99}{100}
 \label{eq:detail211}
\end{align}
and the bridge bound
\begin{align}
 &\sup_{0\leq t\leq873/1024}\|u_N(t)\|_{\A^0}
 \leq
 \frac{1533837106283467124791861}{604462909807314587353088}.
 \label{eq:detail212}
\end{align}
All path modes lie in cube one, while the full nonlinear output is retained
in $d(t)$ before projection. Hence the comparison applies to every $N\geq1$
without an exceptional cutoff. These estimates supply the comparison bounds used in
Theorem~\ref{thm:tg3}.

\subsection{Uniform coefficient-family comparison}
\label{app:boxcomparison}

For $|\theta|\leq1/10$ and $0\leq a\leq1$, the bounds proved in the preceding
algebra appendix give
\begin{align}
 &A_0(t)\leq\frac{123}{50}e^{-3t},\qquad
 A_1(t)\leq\frac{17}{4}e^{-3t},\qquad
 d(t)\leq\frac{99}{70}\frac{341}{400}e^{-6t}.
 \label{eq:detail213}
\end{align}
The first two expressions bound $\|ae^{-3t}U_\theta\|_{\A^0}$ and
$\|ae^{-3t}U_\theta\|_{\A^1}$ because $a\leq1$. The third bounds the full
residual $a^2e^{-6t}B(U_\theta,U_\theta)$ because $a^2\leq1$.
Substitution into the scalar polynomial derived at the start of this appendix
produces one comparison problem valid for the entire coefficient interval.
The convex endpoint induction, applied to 992 adjacent exact slabs from zero
to $31/32$ with $y(0)=1/40$, yields
\begin{align}
 &\sup_{\substack{|\theta|\leq1/10,\ 0\leq a\leq1\\
                   \|h\|_{\A^2}\leq1/40}}
 \sup_{N\geq1}\|u_N(31/32)\|_{\A^0}
 \leq
 \frac{1196626932805349354526769}{1208925819614629174706176}
 <\frac{99}{100}
 \label{eq:detail214}
\end{align}
and
\begin{align}
 &\sup_{\substack{|\theta|\leq1/10,\ 0\leq a\leq1\\
                   \|h\|_{\A^2}\leq1/40}}
 \sup_{N\geq1}\sup_{0\leq t\leq31/32}\|u_N(t)\|_{\A^0}
 \leq
 \frac{3005818583529760358215883}{1208925819614629174706176}.
 \label{eq:detail215}
\end{align}
Every path mode lies in cube one. Although the full residual occupies cube
two, for $N=1$ its contribution is exactly projected by $P_N$, and Wiener
projection is contractive. Hence the same residual majorant and comparison
cover every $N\geq1$. The first inequality activates the terminal barrier.
The second supplies the finite-time bridge used in the compactness argument.
Together, these bounds establish the comparison step in
Theorem~\ref{thm:tg3box}.

\section{Floating-point Galerkin update and recorded metrics}
\label{app:numerics}

The numerical algorithms define the finite-resolution diagnostics in
Equations~\eqref{eq:ensembleh}--\eqref{eq:trustgates}, separately from the
proof of Theorem~\ref{thm:main}. After the perturbation normalisation below,
Section~\ref{app:parameters} gives the parameter maps,
Section~\ref{app:timestepping} specifies time integration, and
Section~\ref{app:statistics} defines the statistical summaries. The perturbation construction is real
because the negative-mode coefficient is the complex conjugate of the
positive-mode coefficient. It is solenoidal because each polarization vector
is perpendicular to its wave vector. Since the two polarizations are
orthonormal, $|\widehat h(k_j)|=c_j$. The positive and negative modes make
equal contributions, and therefore
\begin{align}
 &\|h\|_{\A^2}
 =\sum_j\left(|k_j|^2|\widehat h(k_j)|
              +|-k_j|^2|\widehat h(-k_j)|\right)
 =\sum_j2|k_j|^2c_j.
 \label{eq:detail216}
\end{align}
Choosing
\begin{align}
 &C_\rho=\frac{\rho}{2\sum_j|k_j|^{\beta+2}}
 \label{eq:detail217}
\end{align}
in $c_j=C_\rho|k_j|^\beta$ proves
Equation~\eqref{eq:ensemblenorm} exactly.

\subsection{Complete parameter maps and initial fields}
\label{app:parameters}

This construction supplies the input fields used in
Section~\ref{sec:computationalmethods} and the ensemble of
Section~\ref{sec:ensemble}. The ordered positive-mode list is
\begin{align}
 (k_1,k_2,\ldots,k_{16})={}&((1,1,0),(1,-1,0),(1,0,1),(1,0,-1),\notag\\
 &(0,1,1),(0,1,-1),(1,1,1),(1,1,-1),\notag\\
 &(2,1,0),(2,0,1),(1,2,0),(0,2,1),\notag\\
 &(1,0,2),(1,2,1),(2,1,1),(1,1,2)).
 \label{eq:perturbationmodelist}
\end{align}
For each $k_j$, choose $r_j=(1,0,0)$ unless $|k_j\cdot r_j|/|k_j|>0.85$,
in which case choose $r_j=(0,1,0)$. The two polarisations are
\begin{align}
 e_{j,1}&=\frac{k_j\times r_j}{|k_j\times r_j|},
 &e_{j,2}&=\frac{k_j\times e_{j,1}}{|k_j\times e_{j,1}|}.
 \label{eq:polarisationbasis}
\end{align}
The branch prevents a zero denominator, and cross-product orthogonality gives
$e_{j,1}\cdot e_{j,2}=k_j\cdot e_{j,1}=k_j\cdot e_{j,2}=0$.

Let $(z_1,z_2,\ldots,z_8)\in[0,1]^8$ be a Sobol point, let
$\{x\}=x-\lfloor x\rfloor$, and put $g=(\sqrt5-1)/2$.
The theorem stratum uses $(a,\rho)=(0.7z_1,0.01z_2)$, and the stress
stratum uses $(a,\rho)=(0.5+0.8z_1,0.005+0.045z_2)$.
The remaining coordinates give
\begin{align}
 \beta&=-3+6z_3, &c&=z_4,\\
 d_j&=\{z_5+jz_6+j^2z_7\},
 &\phi_j&=2\pi[cz_5+(1-c)d_j],\\
 \psi_j&=2\pi\{z_8+jg\}, &j&=1,2,\ldots,16.
 \label{eq:ensemblephasemap}
\end{align}
Substitution into Equation~\eqref{eq:ensembleh}, followed by its exact
normalisation, completes the initial field. The two strata use 2,048
scrambled Sobol points each, with seeds 20260909 and 20260910.

The neural experiment uses the same mode list with ten coordinates.
The first 384 points of a scrambled ten-dimensional Sobol sequence with
seed 20260910 are stored in single precision. The sample with zero-based
index $q$ uses family indicator $f=q\bmod2$, where $f=0$ selects cyclic shear
and $f=1$ selects $U_\theta$. Its parameter map is
\begin{align}
 a&=1.3z_1, &\theta&=-0.2+0.4z_2, &\rho&=0.05z_3,\\
 \beta&=-3+6z_4, &c&=z_5,\\
 d_j&=\{z_6+jz_7+j^2z_8\}, &\phi_j&=2\pi[cz_6+(1-c)d_j],\\
 \psi_j&=2\pi\{z_9+jg+z_{10}\}, &j&=1,2,\ldots,16.
 \label{eq:pinoparametermap}
\end{align}
The $\theta$ coordinate affects only the TG3 centre but participates in the
fixed train/OOD split for both families. Targets use a $12^3$ grid, cutoff
two, unit viscosity, and RK4 step $0.005$. The resulting physical velocity
arrays are stored in single precision. These details reproduce the parameter
split and target generation of Section~\ref{sec:pinomethod}.

\subsection{Time integration and diagnostics}
\label{app:timestepping}

This subsection supplies the update and diagnostics used in
Section~\ref{sec:computationalmethods} and the convergence comparisons in
Section~\ref{sec:ensemble}.

The Fourier update below defines the trajectories used by the numerical
Methods. The conservative condition $4K<n$ prevents wrap-around aliasing in
spatial means of products of up to four cutoff-$K$ factors, since a nonzero
sum of at most four supported wave vectors cannot equal a grid frequency
multiple. The neural loss uses a different native-grid construction, specified
in Equation~\eqref{eq:pinoimplementedrhs}.

The non-proof simulations use the projected Fourier ODE
\begin{align}
 &\dot u=\mathcal L(u):=-\nu Au-P_KB(u,u).
 \label{eq:detail222}
\end{align}
With step $\Delta t$, classical RK4 is
\begin{align}
 k_1&=\mathcal L(u^n),&
 k_2&=\mathcal L(u^n+\tfrac12\Delta tk_1),\\
 k_3&=\mathcal L(u^n+\tfrac12\Delta tk_2),&
 k_4&=\mathcal L(u^n+\Delta tk_3),\\
 u^{n+1}&=P_K\mathbb P\left[u^n+\frac{\Delta t}{6}
 (k_1+2k_2+2k_3+k_4)\right].
 \label{eq:detail223}
\end{align}
Here $k_1$ is the slope at the beginning of the step, $k_2$ and $k_3$ are
two midpoint slopes, and $k_4$ is the endpoint slope predicted from $k_3$.
The weights $1,2,2,1$ form the fourth-order Runge--Kutta quadrature. The final
projection $P_K\mathbb P$ removes roundoff components outside the divergence-free truncated
space. It is not an analytic regularization. Appendix~\ref{app:discreteexpansion}
derives the discrete transform normalisation, aliasing condition, and
fourth-order expansion of this update.
The fixed-datum solver additionally restricts the time step through a declared
Courant--Friedrichs--Lewy (CFL) safety factor. The ensemble, boundary-refinement,
and neural candidate calculations instead use the fixed RK4 steps stated in
their corresponding methods.
The nonlinear energy defect is
$|\operatorname{Re}\langle P_KB(u,u),u\rangle|$. The divergence metric is
$\|k\cdot\widehat u(k)\|_{\ell^2_k}$. Shell enstrophy and transfer are
\begin{align}
 &X_m=\sum_{m\leq|k|<m+1}|k|^2|\widehat u(k)|^2,
 \qquad
 T_m=-2\operatorname{Re}\sum_{m\leq|k|<m+1}
 |k|^2\overline{\widehat u(k)}\cdot\widehat{B(u,u)}(k).
 \label{eq:detail224}
\end{align}
For the fixed-datum campaign, the tail fraction is the enstrophy in
the final two cubical cutoff layers divided by total enstrophy. Equal-time spatial errors restrict both
solutions to the common lower cutoff before taking the relative $H^1$ norm.
Temporal errors compare base and half-step solutions at the same cutoff.
Integrated balance defects compare trapezoidal time integrals with the exact
endpoint change for each recorded balance. These definitions determine every
fixed-datum metric evaluated in the Results.

For the ensemble, $Z=R+X^2/E$ and the tail statistic is
\begin{align}
 &\operatorname{Tail}_Z(t)=
 \frac{\sum_{\max_i|k_i|=K}|k|^4|\widehat u(k,t)|^2}{Z(t)}.
 \label{eq:detail225}
\end{align}
The paired action error and residual-trajectory error are the first two
quantities in \eqref{eq:trustgates}. The normalised residual-balance defect is
\begin{align}
 &\operatorname{Bal}_R=
 \frac{|\frac12(R(T)-R(0))+\int_0^T D_R\dd t-\int_0^T F\dd t|}
 {\max\{\frac12|R(T)-R(0)|,|\int_0^T D_R\dd t|,
 |\int_0^T F\dd t|\}}.
 \label{eq:detail226}
\end{align}
Here $\nu=1$. The denominator is positive for every retained sample. Heatmap
cells are two-dimensional bins in $(a,\rho)$ after marginalizing the other six
Sobol coordinates. Bilinear interpolation is used only to display continuous
bin statistics. The trust fraction uses nearest-bin rendering, and empty bins
are gray. In Figure~\ref{fig:ensembletransitions}(c), the raw signed-$F$
histogram is convolved with a Gaussian kernel of standard deviations $1.35$
$F$-bins and $0.9$ time bins. Every time column is then renormalized so that
its discrete integral in $F$ equals one. This operation controls raster
aliasing only. All quantiles, gates, correlations, and theorem statements use
the unsmoothed trajectories.

Figure~\ref{fig:highrestransitions} uses a different normalization that must
not be confused with a probability density. At each time, the refinement
calculation bins $\log_{10}\max(R,10^{-30})$ and $F$ with 96 bin edges.
The displayed residual interval spans the pooled 0.2--99.8 percentiles, and
the signed-source interval is symmetric about zero with half-width equal
to the 99.5 percentile of $|F|$. A Gaussian filter with standard deviations
$1.2$ value bins and $0.8$ time bins smooths the counts. Division by the
sum in each time column gives probability per bin conditional on the
displayed interval, without division by bin width. The colour scale shows
its base-ten logarithm. A display floor at the first percentile of positive
smoothed bin probabilities affects only
colour contrast, not the recorded statistics. These definitions explain
the colour bars and normalization used in the Results.

The vector-field curves use periodic trilinear interpolation of saved
uniform-grid arrays, at $48^3$ in Figure~\ref{fig:fieldgeometry} and $64^3$
in Figure~\ref{fig:multicentergeometry}. Classical
RK4 is applied to the normalised local vector direction with arclength step
$0.045$ for 105 steps in each direction. Candidate seeds lie on a stride-three
subgrid at or above its own 70th percentile of vector magnitude. A
deterministic periodic farthest-point rule weighted by magnitude retains 42
velocity seeds and 48 vorticity seeds. Tubes have radii $0.018$ and $0.015$,
respectively. Panels (c) and (f) use paired unit glyphs for $\widehat\omega$
and $e_{\max}$. The field arrays, seeds, camera, streamline parameters and summary
statistics are retained with the processed data.
For Figure~\ref{fig:multicentergeometry}, the same procedure is applied to
analytic $64^3$ evaluations of the three centre formulas at $t=0$. These
arrays are descriptive render inputs and are not read by any theorem checker.

For deterministic refinement comparisons, let $q^{(c)}_m$ and $q^{(f)}_m$
denote coarse and fine values sampled at the same times. The displayed curve
error and scalar error are, respectively,
\begin{align}
 &\mathcal E_q=
 \frac{\left(M^{-1}\sum_{m=1}^{M}|q^{(c)}_m-q^{(f)}_m|^2\right)^{1/2}}
      {\max\{\max_m|q^{(f)}_m|,10^{-14}\}},
 \qquad
 \mathcal E_s=\frac{|s^{(c)}-s^{(f)}|}{\max\{|s^{(f)}|,10^{-14}\}}.
 \label{eq:detail227}
\end{align}
The fine trajectory is subsampled at the $M$ coarse output times before the
root-mean-square comparison. These formulae reproduce the normalisation and
denominator floors used by the ensemble and refinement implementations.
Spatial comparisons first restrict both Fourier arrays to the common lower
cutoff. Temporal comparisons use the same cutoff and base versus half-step
trajectories. These scalar definitions apply to the ensemble and
cutoff-refinement diagnostics in Figures~\ref{fig:highresphase} and
\ref{fig:highrestransitions}. The fixed-datum field comparisons instead use
the relative common-cutoff $H^1$ norm defined above, rather than a
scalar-trajectory error.

\subsection{Statistical estimators and reconstruction of empirical claims}
\label{app:statistics}

The estimators below define the ensemble, field-geometry and public-DNS
summaries in
Figures~\ref{fig:ensembleconfig}, \ref{fig:ensembletransitions},
\ref{fig:highresphase}, \ref{fig:highrestransitions},
\ref{fig:dynamicvortices}, \ref{fig:fieldgeometry},
\ref{fig:jhtdbgeometry}, and \ref{fig:jhtdbstructured}. All summaries are
descriptive functions of deterministic arrays. No $p$-value or
independent-replicate interpretation is assigned to Sobol points or spatially
correlated DNS samples.

For finite values $x_1,x_2,\ldots,x_m$, define the mean, positive fraction, and
root-mean-square value by
\begin{align}
 &\bar x=\frac1m\sum_{i=1}^m x_i,
 \qquad f_+(x)=\frac1m\sum_{i=1}^m\mathbb I_{\{x_i>0\}},
 \qquad x_{\rm rms}=\left(\frac1m\sum_{i=1}^m x_i^2\right)^{1/2}.
 \label{eq:detail228}
\end{align}
For paired arrays $x_i,y_i$, the Pearson correlation is
\begin{align}
 &r_{xy}=\frac{\sum_i(x_i-\bar x)(y_i-\bar y)}
 {\left[\sum_i(x_i-\bar x)^2\sum_i(y_i-\bar y)^2\right]^{1/2}}.
 \label{eq:detail229}
\end{align}
It is used only as a descriptive measure of linear association. For a
probability level $p\in[0,1]$, sort the data as
$x_{(0)}\leq x_{(1)}\leq\ldots\leq x_{(m-1)}$, set
$h=(m-1)p$, $j=\lfloor h\rfloor$, and $\theta=h-j$. The empirical quantile is
\begin{align}
 &Q_p(x)=(1-\theta)x_{(j)}+\theta x_{(\min\{j+1,m-1\})}.
 \label{eq:detail230}
\end{align}
This linear order-statistic interpolation defines every median, percentile,
and $5\%/50\%/95\%$ band quoted in the Results.

For the configuration ensemble, theorem membership is the Boolean predicate
$0\leq a\leq0.7$ and $0\leq\rho\leq0.01$. Exterior membership is its
complement within the sampled union of strata. Numerical trust is the
conjunction of the six inequalities in \eqref{eq:trustgates}. Thus each
cohort or gate count equals a sum of indicator functions, for example
\begin{align}
 &n_{\rm trusted,theorem}
 =\sum_{i=1}^{4096}
  \mathbb I_{\{i\ \mathrm{satisfies\ the\ theorem\ predicate}\}}
  \mathbb I_{\{i\ \mathrm{satisfies\ all\ six\ gates}\}}.
 \label{eq:detail231}
\end{align}
The action quantiles apply $Q_p$ to the cohort-specific $J_{0.5}$ values, and
the correlations apply $r_{xy}$ to $x=\log_{10}J_{0.5}$ and the stated Sobol
coordinate. Two-dimensional heatmap bins condition on $(a,\rho)$ and
marginalize the other six coordinates. Empty bins remain undefined.

The vortex representatives are order statistics with deterministic tie
breaking by sample index: maximise $J_{0.5}$ within trusted theorem members,
maximise $J_{0.5}$ within trusted exterior members, and maximise
$\operatorname{Tail}_Z$ within rejected exterior members. The table beneath
Figure~\ref{fig:dynamicvortices} evaluates the listed scalar functions on
these three selected rows. The medians and extrema associated with
Figure~\ref{fig:fieldgeometry} apply $Q_{1/2}$, minimum, and maximum to the
saved grid or glyph arrays specified in Appendix~\ref{app:numerics}.

The $73{,}728$ public-DNS gradient evaluations decompose as
\begin{align}
 &8(4096)+5(4096)+4(16^3)+4096=73{,}728,
 \label{eq:detail232}
\end{align}
corresponding to the eight-time forced-isotropic acquisition, the five-time
velocity-inclusive acquisition, four structured volumes, and the channel-flow
sample. Pooled statistics concatenate the indicated finite arrays before
applying the definitions above. Framewise statistics apply them separately.
The helicity--production correlation uses only rows containing both velocity
and gradient data. The structured cross-flow means and positive fractions
are computed separately for each $16^3$ query and therefore do not constitute
resolution-convergence estimates.

\section{Neural-operator construction and diagnostics}
\label{app:pino}

The discrete construction below specifies the neural predictions and
candidate comparisons in Section~\ref{sec:pinomethod} and
Figure~\ref{fig:pinomain}. The split and layer map below define the prediction
problem, Section~\ref{app:networkdetail} specifies its operations, and
Section~\ref{app:neuralobjective} derives the losses and evaluation metrics.
The experiment uses 384 deterministic Sobol
configurations split into 127 training, 45 validation, and 212 OOD fields. If
$\theta\in[-0.2,0.2]$ denotes the TG3 coefficient, $\beta\in[-3,3]$ the
perturbation spectral tilt, and $c\in[0,1]$ its phase-coherence coordinate,
the OOD predicate is
\begin{align}
 &|\theta|\geq0.14\quad\text{or}\quad |\beta|\geq2.3
 \quad\text{or}\quad c\geq0.85.
 \label{eq:pinoood}
\end{align}
The remaining configurations whose indices are multiples of five form the
validation set, and all other configurations form the training set. The input
normalisation uses only the training indices.

For a latent field $v_\ell$ with eight channels, let $\operatorname{GN}$
denote group normalisation and $\operatorname{GELU}$ the Gaussian error linear
unit. One Fourier layer has the
form
\begin{align}
 &v_{\ell+1}=\operatorname{GELU}\!\left[
  \operatorname{GN}\!\left(W_\ell v_\ell+
  \mathcal F^{-1}(\mathcal R_\ell\mathcal Fv_\ell)\right)\right].
 \label{eq:fnolayer}
\end{align}
Here $W_\ell$ is a pointwise channel map, $\mathcal F$ is the three-dimensional
discrete Fourier transform, $\mathcal R_\ell$ is the learned complex multiplier
on the retained $3\times3\times3$ Fourier block, $\operatorname{GN}$ denotes
group normalisation, and $\operatorname{GELU}$ denotes the Gaussian error
linear unit. A pointwise lift maps the three normalised velocity components and
two family indicators to $v_0$. Three layers of the form
\eqref{eq:fnolayer} precede a pointwise projection to the fifteen output
channels formed by three velocity components at five checkpoints.

\subsection{Network operations and optimisation}
\label{app:networkdetail}

This subsection makes the architecture in Section~\ref{sec:pinomethod}
implementable without inferring Fourier conventions from a schematic layer.
The spectral branch uses a real-to-complex transform with unnormalised
forward transform and reciprocal-size inverse transform. Its learned block
occupies indices $k_x,k_y,k_z\in\{0,1,2\}$ in the real-transform storage
array. All other entries of that branch are zero. This is the implemented
one-block parameterisation, rather than a full signed low-frequency cube.
The pointwise branch acts on every grid point and can carry components absent
from the spectral branch. The output layer imposes neither a Leray projection
nor a hard zero-mean constraint.

Group normalisation partitions the eight channels into two groups of four.
For one sample and group, write $\mu_g$ and $v_g$ for the mean and biased
variance over its $4n^3$ values. Its affine output and the activation are
\begin{align}
 \operatorname{GN}(z)_c&=\gamma_c
 \frac{z_c-\mu_g}{\sqrt{v_g+10^{-5}}}+\beta_c,\\
 \operatorname{GELU}(z)&=\frac z2
 \left[1+\operatorname{erf}\!\left(\frac z{\sqrt2}\right)\right].
 \label{eq:networkoperations}
\end{align}
The trainable coefficients $\gamma_c,\beta_c$ act per channel. The input
normalisation instead uses the training-set mean and sample standard
deviation of each physical velocity component, with standard deviations
floored at $10^{-6}$. The family channels are $(f,1-f)$. The network outputs
physical velocity directly, so the losses below compare unnormalised fields.

Both models start from copies of one initial parameter state under seed
20260910. Adam uses learning rate $0.0015$, moment coefficients $(0.9,0.999)$,
denominator offset $10^{-8}$, and no weight decay. Before each update the
gradient vector is rescaled if its Euclidean norm exceeds one. Batches contain
four samples. Training runs for at most 18 epochs, with a patience of five
epochs, and restores the parameters at the smallest validation objective.
Each model uses its own training objective for model selection, so the two
validation-loss ordinates are not directly comparable physical errors.

\subsection{Discrete objective and measured quantities}
\label{app:neuralobjective}

The following formulas expand Equation~\eqref{eq:pinoloss} term by term.
The field and initial losses compare velocity values, the midpoint penalties
measure discrete physical consistency, and the Fourier penalties act on the
specified grid. Their averaging factors are part of the objective.

Write $\widetilde u_j=\mathcal G_\vartheta(u_0,f)_j$ and let $u_j$ denote the
Galerkin target. The implementation uses a periodic $n^3$ grid with $n=12$
and the forward-normalised discrete Fourier transform. For a three-component
grid field $v$, define the component-averaged mean square
\begin{align}
 &\mathfrak M_3(v)=\frac1{3n^3}\sum_{c=1}^{3}\sum_{x\in G_n}|v_c(x)|^2,
 \qquad
 \mathfrak M_1(q)=\frac1{n^3}\sum_{x\in G_n}|q(x)|^2
 \label{eq:pinodiscretemeans}
\end{align}
for vector and scalar fields, respectively. The field and initial losses are
\begin{align}
 &\mathcal L_{\rm field}=\frac15
 \sum_{j=0}^{4}\mathfrak M_3(\widetilde u_j-u_j),
 \qquad
 \mathcal L_{\rm initial}=\mathfrak M_3(\widetilde u_0-u_0).
 \label{eq:pinofieldloss}
\end{align}
For $\delta t_j=t_{j+1}-t_j$, set
$\widetilde u_{j+1/2}=(\widetilde u_{j+1}+\widetilde u_j)/2$.
Let $P_K$ multiply the Fourier coefficient at $k$ by
$\mathbb I_{\{\|k\|_\infty\leq K\}}$, with $K=2$, and let $\mathbb P_k$ be
the discrete Leray multiplier with zero mean retained. The implemented
native-grid generator is
\begin{align}
 &\mathcal A_K(v)=-\mathcal F^{-1}
 \left[\mathbb P_k\mathbb I_{\{\|k\|_\infty\leq K\}}
 \widehat{(v\cdot\nabla)v}(k)\right]+\nu\Delta v.
 \label{eq:pinoimplementedrhs}
\end{align}
The pointwise product is formed on the $12^3$ grid without zero padding. The
midpoint residual is therefore
\begin{align}
 &r_j=\frac{\widetilde u_{j+1}-\widetilde u_j}{\delta t_j}
 -\mathcal A_K(\widetilde u_{j+1/2}).
 \label{eq:pinomidresidual}
\end{align}
The Navier--Stokes and divergence losses are the component- and
scalar-averaged mean squares
\begin{align}
 &\mathcal L_{\rm NS}=\frac14\sum_{j=0}^{3}\mathfrak M_3(r_j),
 \qquad
 \mathcal L_{\rm div}=\frac15\sum_{j=0}^{4}
 \mathfrak M_1(\nabla\cdot\widetilde u_j).
 \label{eq:pinopdeloss}
\end{align}
With
$E_{\rm kin}(v)=(2n^3)^{-1}\sum_{c,x}|v_c(x)|^2$ and
$X(v)=\sum_{c,k}|k|^2|\widehat v_c(k)|^2$, the energy penalty is
\begin{align}
 &\mathcal L_{\rm energy}=\frac14\sum_{j=0}^{3}
 \left[\frac{E_{\rm kin}(\widetilde u_{j+1})-E_{\rm kin}(\widetilde u_j)}{\delta t_j}
 +\nu X(\widetilde u_{j+1/2})\right]^2.
 \label{eq:pinoenergyloss}
\end{align}
Here $E_{\rm kin}$ is one half of the squared $L^2$ norm. The moments
$E_N$ and the residual-action calculation instead use the full squared norm,
which explains the different factor in their energy identities.
To define the remaining losses, let $\mathcal A_\infty$ denote the same
native-grid generator with the nonlinear cutoff mask omitted. The difference
$d_j=\mathcal A_\infty(\widetilde u_j)-\mathcal A_K(\widetilde u_j)$ is an
on-grid cutoff-sensitivity diagnostic. Because the product is evaluated on
the native grid, it does not enclose the complete continuous quadratic output.
The de-aliasing and spectral losses used in training are
\begin{align}
 \mathcal L_{\rm dealias}
 &=\frac15\sum_{j=0}^{4}\mathfrak M_3(d_j),
 \label{eq:pinodealiasloss}\\
 \mathcal L_{\rm spectral}
 &=\frac1{15n^3}\sum_{j=0}^{4}\sum_{c=1}^{3}\sum_k |k|^2
 |\widehat{\widetilde u_j}(k)-\widehat u_j(k)|^2.
 \label{eq:pinospectralloss}
\end{align}
Finally, define the predicted outer-band fraction
\begin{align}
 &\mathcal T_K(v)=
 \frac{\displaystyle\sum_{c=1}^{3}
 \sum_{\lceil2K/3\rceil\leq\|k\|_\infty\leq K}
 |k|^4|\widehat v_c(k)|^2}
 {\displaystyle\max\!\left\{\sum_{c=1}^{3}\sum_k
 |k|^4|\widehat v_c(k)|^2,10^{-20}\right\}},
 \qquad
 \mathcal L_{\rm tail}=\frac15\sum_{j=0}^{4}\mathcal T_K(\widetilde u_j).
 \label{eq:pinotailloss}
\end{align}
The denominator in $\mathcal T_K$ includes every native-grid Fourier mode,
whereas the numerator includes only the outer portion of the retained cube.
Consequently $\mathcal T_K$ is a training penalty, not an error against a
known true tail. The factors in Equations~\eqref{eq:pinodiscretemeans}--
\eqref{eq:pinotailloss} reproduce the component and grid averages in the
trained implementation. Substitution of
Equations~\eqref{eq:pinofieldloss}, \eqref{eq:pinopdeloss},
\eqref{eq:pinoenergyloss}, \eqref{eq:pinodealiasloss},
\eqref{eq:pinospectralloss}, and \eqref{eq:pinotailloss} into
Equation~\eqref{eq:pinoloss} yields the complete training objective used in
the main text.

At the five network output times $t_j$, the action observable used for model
evaluation and ranking is the trapezoidal proxy
\begin{align}
 &J_{0.2}^{(5)}(v)=\sum_{j=0}^{3}
 \frac{t_{j+1}-t_j}{2}\{R(v_j)^{2/3}+R(v_{j+1})^{2/3}\}.
 \label{eq:pinofivepointproxy}
\end{align}
This quantity is not asserted to approximate the continuous action with the
same accuracy as the RK4 state trajectory. For ranking, the predicted
composite score is
\begin{align}
 &s_{\rm pred}=\log\!\left(J_{0.2}^{\rm pred}+10^{-12}\right)
 +0.15\log\!\left(q_{\max}^{\rm pred}+10^{-12}\right),
 \label{eq:pinosearchscore}
\end{align}
where $J_{0.2}^{\rm pred}=J_{0.2}^{(5)}(\widetilde u)$ is the predicted
five-checkpoint residual-action proxy and
$q_{\max}^{\rm pred}$ is the maximum predicted retained-band tail. Sixteen
gradient starts take 31 optimisation steps and one final scoring pass, which
matches 512 forward evaluations per family. A repulsion penalty and the
minimum pair distance enforce candidate diversity. After gradient
optimisation, the 16 endpoint fields are ranked by predicted action alone.
The 512-member random and Sobol pools are likewise ranked by the fitted PINO
action prediction before six members are sent to the solver. These two
comparators therefore measure alternative proposal mechanisms under common
PINO screening, not unguided random or Sobol solver search. All methods
use the same number of candidate evaluations and solver rechecks. Gradient
search also requires backward evaluations, so the comparison fixes the
candidate budget rather than wall-clock cost.

The additional threshold-swept experiment in Figure~\ref{fig:pinomain}(c)
uses the action proxy alone. For $n=212$ OOD fields, selection fraction $p$, and
$k=\lceil pn\rceil$, let $T_k$ contain the $k$ largest Galerkin proxies and
$P_k$ contain the $k$ largest operator-predicted proxies. The retrieval recall
is
\begin{align}
 &\operatorname{Recall}(p)=\frac{|T_k\cap P_k|}{k}.
 \label{eq:pinorecall}
\end{align}
The sweep evaluates Equation~\eqref{eq:pinorecall} from $p=0.05$ to $0.30$
without refitting either model. The independently specified composite score in
Equation~\eqref{eq:pinosearchscore} gives top-decile recalls $0.955$ for FNO
and $0.864$ for PINO, with a paired 95\% resampling interval
$[-0.227,0.045]$ for the gain. This composite endpoint was specified
independently of the later action-only sweep and does not show an improvement
for PINO. The sweep therefore provides an exploratory diagnostic rather
than a replacement success criterion.

The paired uncertainty calculation resamples the 212 aligned OOD indices with
replacement 1,000 times. For each resample it recomputes the median of the
casewise FNO-minus-PINO metric difference. The displayed 95\% interval uses
the 2.5th and 97.5th percentiles of those 1,000 medians. A fixed seed makes the
calculation deterministic. It measures sensitivity to the fixed OOD cases for
one fitted training seed, not variability across independently trained
networks. All seven lower endpoints are positive. The same
paired resampling applied to the composite top-decile recall gives the interval
$[-0.227,0.045]$ stated above.

The dense action used in Section~\ref{sec:pinoresults} evaluates the same
integrand at every RK4 state. For $t_q=q\Delta t$, $q=0,1,\ldots,m$,
and $m\Delta t=0.2$, its quadrature is
\begin{align}
 J_{0.2}^{\rm dense}
 &=\Delta t\left[\frac12R(u^0)^{2/3}
 +\sum_{q=1}^{m-1}R(u^q)^{2/3}+\frac12R(u^m)^{2/3}\right].
 \label{eq:densequadraturedetail}
\end{align}
All 48 selected inputs use $(n,K,\Delta t)=(16,3,0.005)$,
$(20,4,0.0025)$, $(24,5,0.00125)$, and $(24,5,0.000625)$.
Each relative difference divides the absolute quadrature difference
by the finest dense value. In contrast, the original five-checkpoint comparison
refines the velocity trajectories while holding its quadrature nodes fixed.
This distinction explains the percent-scale proxy discrepancy alongside much
smaller dense-refinement differences.

\clearpage
\begin{figure}[!ht]
\centering
\includegraphics[width=\linewidth,height=0.40\textheight,keepaspectratio]{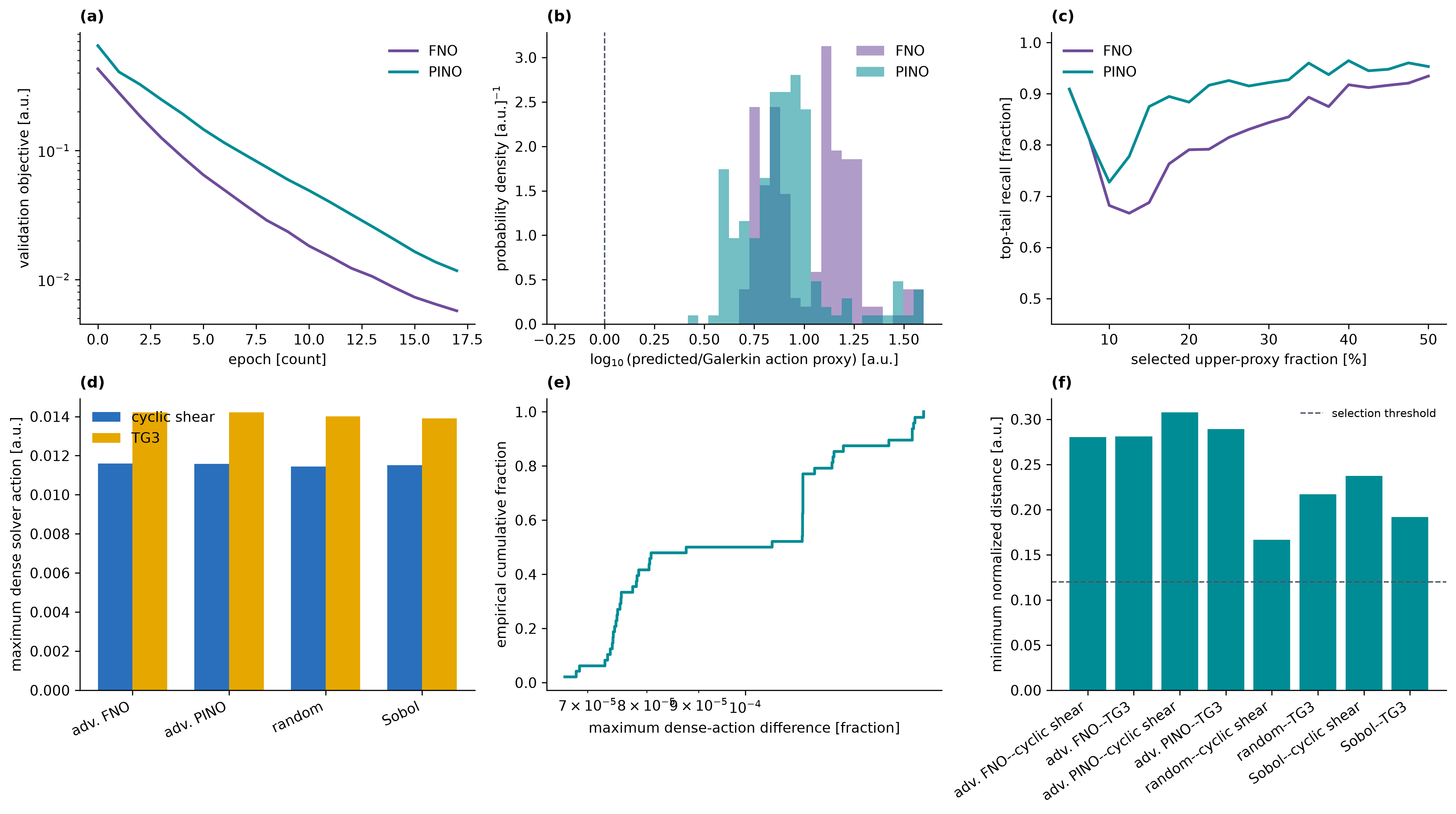}
\caption{Extended neural-operator diagnostics. (a) Validation objective across
18 training epochs for FNO and PINO. (b) OOD distributions of the logarithmic
ratio between predicted and Galerkin five-checkpoint action proxy. (c) Action-proxy upper-tail recall
from 5\% to 50\% selection. (d) Maximum $24^3$ Galerkin action among six
selected candidates for each acquisition method and initial-data family,
using dense quadrature at $\Delta t=0.000625$.
(e) Empirical cumulative distribution of the maximum relative dense-action
difference across three grid--time-step comparisons for 48 candidates. (f) Minimum
normalised pair distance within each six-candidate cohort, with the dashed line
marking the predeclared diversity threshold. Nondimensional objectives,
actions, time steps, ratios, and distances use arbitrary units (a.u.).}
\label{fig:pinoappendix}
\end{figure}

Figure~\ref{fig:pinoappendix}(a) records stable optimisation for both models.
Panel (b) shows that PINO shifts the action-proxy ratio distribution towards
ideal calibration while retaining an upward bias. Panel (c) extends the main-text
recall sweep to half of the OOD set. Panel (d) shows the matched candidate-count
comparison across both families after dense quadrature. Panel (e) demonstrates
that every selected dense action remains tightly clustered across the three
solver comparisons. Panel (f)
shows that all eight cohorts exceed the $0.12$ diversity threshold. Together,
these diagnostics document an operator-guided search stage whose outputs
remain subject to direct Galerkin and exact mathematical
validation.

\section{Expanded algebraic, analytic and discrete transitions}
\label{app:expandedtransitions}

This appendix supplies intermediate calculations for the mathematical
methodology, field diagnostics and neural-operator methodology.
Sections~\ref{app:problemexpansion} and \ref{app:diniexpansion} derive
projection and error comparison. Sections~\ref{app:pathexpansion} and
\ref{app:momentexpansion} expand the signed path and spectral balance, and
Section~\ref{app:smoothnessexpansion} proves the all-order continuum passage.
Sections~\ref{app:tensorexpansion} and \ref{app:heatexpansion} connect tensor
geometry to the heat asymptotic. Sections~\ref{app:discreteexpansion},
\ref{app:neuralexpansion}, and \ref{app:statisticsexpansion} give the discrete,
network and statistical operations. Spatial integrals use
the normalised measure of Section~\ref{sec:mathmethods}. Repeated component
indices in this appendix run from one to three and denote a sum. A Fourier
wave vector $k$ is an integer triple, whereas $|k|$ and $|k|_\infty$ denote
its Euclidean and maximum-component norms, respectively.

\subsection{Pressure, projection and the initial-value problem}
\label{app:problemexpansion}

Pressure elimination converts \eqref{eq:physicalproblem} into
Equations~\eqref{eq:galerkin}--\eqref{eq:fourierB} as follows. Write $\partial_j=\partial/\partial x_j$.
For a solenoidal velocity,
\begin{align}
 u_j\partial_ju_i
 &=\partial_j(u_ju_i)-u_i\partial_ju_j
 =\partial_j(u_ju_i),\\
 \partial_i(u_j\partial_ju_i)
 &=(\partial_i u_j)(\partial_ju_i)
   +u_j\partial_j(\partial_i u_i)
 =(\partial_i u_j)(\partial_ju_i).
 \label{eq:expandpressurediv}
\end{align}
Taking the divergence of the momentum equation gives
$\Delta p=-\partial_i\partial_j(u_i u_j)$. Its Fourier coefficients are
\begin{align}
 -|k|^2\widehat p(k)&=k_i k_j\widehat{u_i u_j}(k),
 &\widehat p(k)&=-\frac{k_i k_j}{|k|^2}\widehat{u_i u_j}(k)quad(k\ne0),
 &\widehat p(0)&=0.
 \label{eq:expandpressurefourier}
\end{align}
The last choice fixes the gauge, rather than imposing another condition on
velocity. Integrating the conservative momentum equation over the torus
gives $\partial_t\int u_i=0$, since every periodic derivative has zero
integral. Thus zero initial mean persists.

For $k\ne0$, put $Q_k=kk^\top/|k|^2$. Direct multiplication gives
$Q_k^\top=Q_k$ and $Q_k^2=k(k^\top k)k^\top/|k|^4=Q_k$. Hence
\begin{align}
 \mathbb P_k^\top&=\mathbb P_k,
 &\mathbb P_k^2&=(I-Q_k)^2=\mathbb P_k,
 &k^\top\mathbb P_k&=0,
 &\mathbb P_k(ik\widehat p)&=0.
 \label{eq:expandprojectionmatrix}
\end{align}
These identities hold for complex vectors because $k$ is real.
The complex pairing $\langle f,g\rangle=\int f\cdot\overline g$ is linear
in its first argument, and energy identities use its real part. Orthogonality gives
$|z|^2=|\mathbb P_kz|^2+|Q_kz|^2$, so $\mathbb P_k$ is contractive.
Its symbol commutes with $A$ and $P_N$.
Applying $P_N\mathbb P$ to the momentum equation gives
\begin{align}
 \partial_t\widehat u_N(k)+\nu|k|^2\widehat u_N(k)
 &=-i\mathbb P_k
 \sum_{\substack{p+q=k\\0<|p|_\infty,|q|_\infty\leq N}}
       (\widehat u_N(p)\cdot q)\widehat u_N(q),
 \quad 0<|k|_\infty\leq N.
 \label{eq:expandfiniteode}
\end{align}
Both maximum-component norms under the sum are positive and at most $N$.
At $k=0$, every summand has
$\widehat u_N(p)\cdot(-p)=0$, so no division by $|k|^2$ is needed.
Complex conjugation and $(p,q)\mapsto(-p,-q)$ preserve reality.
This is a finite real polynomial ODE on the conjugate-symmetric solenoidal
subspace. A polynomial vector field is locally Lipschitz and therefore has
a unique local solution. Appendix~\ref{app:foundations} bounds its energy
and hence every coordinate. On the resulting bounded set the vector field
is bounded and Lipschitz, so the solution extends past every finite
endpoint. This continuation proves fixed-$N$ global existence, whereas the continuum
argument additionally requires cutoff-uniform bounds.

For norm bookkeeping, $|k|\geq1$ implies
$|k|^{2s}\leq(1+|k|^2)^s\leq2^s|k|^{2s}$ for $s\geq0$.
The symbols $A_0(t),A_1(t)$ are scalar path bounds, not powers of $A$.
The moment $E=\|u\|_2^2$ is twice kinetic energy. The field
$E=B(U,H)+B(H,U)$ occurs only in the path calculation.
The global spectral residual $R$ differs from the local third
velocity-gradient invariant also denoted $R$ in the public-DNS figures.
The notation $\mathcal R_J$ below distinguishes the latter. Likewise $\rho_k$ is a
spectral probability, whereas $\rho=\|h\|_{\mathcal A^2}$ is a radius.

\subsection{Complex norms, dissipation and the scalar comparison}
\label{app:diniexpansion}

Differentiating complex coefficient norms and retaining dissipation converts
\eqref{eq:detail017} into \eqref{eq:detail015}. For a continuous real function $q$, its upper right
Dini derivative is $D^+q(t)=\limsup_{h\downarrow0}[q(t+h)-q(t)]/h$.
For a differentiable complex vector $z$ satisfying
$z'+\nu\lambda z=f$, $\lambda\geq0$, and $z\ne0$,
\begin{align}
 \frac{\dd}{\dd t}|z|
 &=\frac{\operatorname{Re}(z'\cdot\overline z)}{|z|}
 =-\nu\lambda|z|+
   \frac{\operatorname{Re}(f\cdot\overline z)}{|z|}
 \leq-\nu\lambda|z|+|f|.
 \label{eq:expandcomplexnorm}
\end{align}
At $z(t)=0$, $z(t+h)=hf(t)+o(h)$ gives $D^+|z|(t)=|f(t)|$,
which is the same inequality. Thus zero Fourier coefficients cause no
division problem. For $Y_j=\sum_k|k|^j|\widehat w_N(k)|$ and $Y=Y_0$,
\begin{align}
 \|B(v,w_N)\|_{\mathcal A^0}&\leq A_0Y_1,
 &\|B(w_N,v)\|_{\mathcal A^0}&\leq A_1Y,
 &\|B(w_N,w_N)\|_{\mathcal A^0}&\leq YY_1,\\
 D^+Y+\nu Y_2&\leq d+A_1Y+(A_0+Y)Y_1.
 \label{eq:expanderrornorm}
\end{align}
The derivative falls on the second argument of $B$, as the factor $q$ in
\eqref{eq:fourierB} shows. Cauchy--Schwarz applied to
$|\widehat w(k)|^{1/2}$ and $|k||\widehat w(k)|^{1/2}$ gives
$Y_1^2\leq YY_2$. Also $Y_2\geq Y$ because $|k|^2\geq1$.
For $b=A_0+Y\geq0$, completing a square gives
\begin{align}
 b\sqrt{YY_2}-\frac\nu2Y_2
 &=\frac{b^2}{2\nu}Y-
   \frac12\left(\sqrt{\nu Y_2}-\frac b{\sqrt\nu}\sqrt Y\right)^2
 \leq\frac{b^2}{2\nu}Y,\\
 -\nu Y_2+bY_1
 &\leq-\frac\nu2Y+\frac{(A_0+Y)^2}{2\nu}Y,\\
 (A_0+Y)^2Y&=A_0^2Y+2A_0Y^2+Y^3.
 \label{eq:expandsquarecomparison}
\end{align}
Substitution proves every coefficient of \eqref{eq:detail015}, including
$-\nu/2$. The coefficient-norm argument above also proves the inequalities at $Y=0$.

Let $F(t,Y)$ denote the resulting cubic right-hand side. On a bounded
interval of nonnegative $Y$, its bounded coefficients make $F$ uniformly
Lipschitz in $Y$, with constant $L$. For an absolutely continuous
supersolution $y$, the positive part $z=(Y-y)_+$ satisfies $z'\leq Lz$
almost everywhere. Then $(e^{-Lt}z)'\leq0$ and $z(0)=0$ imply $z(t)=0$.
This justifies the comparison after the exact slab inequalities in
Appendix~\ref{app:slabalgorithm}. The linear coefficient of the cubic
need not be positive.

\subsection{All ordered products in the signed path}
\label{app:pathexpansion}

Expanding every ordered product in \eqref{eq:path} recovers
\eqref{eq:residual} and the norm table in Appendix~\ref{app:algebra}. Put $c=\alpha-\gamma/2$, so
$v=cU-\beta V+\delta H$. Bilinearity gives nine terms before grouping:
\begin{align}
 B(v,v)={}&c^2B(U,U)-c\beta B(U,V)+c\delta B(U,H)
 -\beta cB(V,U)+\beta^2B(V,V)\notag\\
 &-\beta\delta B(V,H)+\delta cB(H,U)
 -\delta\beta B(H,V)+\delta^2B(H,H)\notag\\
 ={}&c^2V-c\beta C+c\delta E+\beta^2D-\beta\delta F+\delta^2G.
 \label{eq:expandnineproducts}
\end{align}
The scalar functions satisfy $\alpha'+\alpha=0$,
$\beta'+2\beta=\alpha^2$, $\gamma'+\gamma=\alpha\beta$, and
$\delta'+5\delta=\alpha\beta$. Since $AU=U$, $AV=2V$, $AH=5H$ and
$C=H-U/2$, their linear contribution is
\begin{align}
 v'+Av&=-\tfrac12\alpha\beta U-\alpha^2V+\alpha\beta H
       =-\alpha^2V+\alpha\beta C,\\
 c^2-\alpha^2&=-\alpha\gamma+\gamma^2/4,
 &\alpha\beta-c\beta&=\beta\gamma/2,
 &c\delta&=\alpha\delta-\gamma\delta/2.
 \label{eq:expandpathlinear}
\end{align}
Adding the linear and bilinear expansions gives \eqref{eq:residual} while
retaining every higher-degree term.

For $k=(1,0,1)$ in $V=B(U,U)$, the pair $p=(1,0,0)$, $q=(0,0,1)$ has
$\widehat U(p)=(0,0,-i/2)$ and $\widehat U(q)=(0,-i/2,0)$. Thus
\begin{align}
 i(\widehat U(p)\cdot q)\widehat U(q)&=(0,-i/4,0),
 &\widehat U(q)\cdot p&=0,\\
 \mathbb P_{(1,0,1)}(0,-i/4,0)&=(0,-i/4,0).
 \label{eq:expandmodeexample}
\end{align}
The reversed pair contributes zero and this coefficient has squared norm
$1/16$. The four exponential terms in each of the three components of $V$
give twelve coefficients of magnitude $1/4$. Differentiation also gives
\begin{align}
 (V\cdot\nabla)U
 &=(\sin x\cos y\cos z,\cos x\sin y\cos z,
                      \cos x\cos y\sin z)\notag\\
 &=\nabla(-\cos x\cos y\cos z),
 \label{eq:expandvanishingtree}
\end{align}
so \eqref{eq:expandprojectionmatrix} proves $B(V,U)=0$.

The table gives all coefficient-norm multiplicities after signed vectors at
equal output modes have been added. A row lists $\lambda=|k|^2$, squared
coefficient norm $q$, and multiplicity $m$. Unlisted coefficients vanish.
The seed coefficients in \eqref{eq:detail171}--\eqref{eq:detail172} and
the ordered rule \eqref{eq:fourierB} determine their vectors.
\begin{center}\small
\begin{tabular}{@{}rrrr|rrrr@{}}
\toprule
Field&$\lambda$&$q$&$m$&Field&$\lambda$&$q$&$m$\\
\midrule
$U$&1&$1/4$&6&$E$&2&$1/256$&12\\
$V$&2&$1/16$&12&$E$&6&$7/768$&24\\
$C$&1&$1/16$&6&$E$&10&$1/256$&12\\
$C$&5&$1/64$&12&$F$&5&$9/1280$&12\\
$H$&5&$1/64$&12&$F$&9&$29/9216$&24\\
$D$&6&$1/192$&24&$F$&11&$9/5632$&24\\
 &&&&$G$&6&$7/12288$&24\\
 &&&&$G$&14&$27/28672$&24\\
\bottomrule
\end{tabular}
\end{center}
Each field has $\|f\|_{\mathcal A^j}=\sum m\lambda^{j/2}\sqrt q$.
In particular,
\begin{align}
 \|D\|_{\mathcal A^0}&=24/\sqrt{192}=\sqrt3,\\
 \|E\|_{\mathcal A^0}&=24/16+24\sqrt{7/768}=(3+\sqrt{21})/2,\\
 \|F\|_{\mathcal A^0}&=12\sqrt{9/1280}+24\sqrt{29/9216}
                      +24\sqrt{9/5632}\notag\\
                    &=9\sqrt5/20+\sqrt{29}/4+9\sqrt{22}/44,\\
 \|G\|_{\mathcal A^0}&=24\sqrt{7/12288}+24\sqrt{27/28672}
                      =2\sqrt{21}/7.
 \label{eq:expandradicals}
\end{align}
Taking absolute values before adding equal-mode vectors would give a
different bound. Applying the triangle inequality only to the six grouped
fields in \eqref{eq:residual}, with $\alpha,\beta,\gamma,\delta\geq0$,
gives \eqref{eq:r}. Its coefficient $3/2$ multiplying $\beta\gamma$ is
$\|C\|_{\mathcal A^0}/2$.

\subsection{Differentiating the centred spectral residual}
\label{app:momentexpansion}

Differentiating the three moments in \eqref{eq:Rmoments} gives the source
balance used in the physical diagnostics. For $E=\|u\|_2^2>0$, let $X=\langle Au,u\rangle$,
$Z=\|Au\|_2^2$, $W=\|A^{3/2}u\|_2^2$, $\mu=X/E$, and
$b=-P_NB(u,u)$. Define
$\Pi_1=\operatorname{Re}\langle Au,b\rangle$ and
$\Pi_2=\operatorname{Re}\langle A^2u,b\rangle$.
From $u'=-\nu Au+b$ and self-adjointness,
\begin{align}
 E'&=-2\nu X,\qquad X'=-2\nu Z+2\Pi_1,\qquad Z'=-2\nu W+2\Pi_2,\\
 \mu'&=X'/E-XE'/E^2,\qquad
 (X^2/E)'=2XX'/E-X^2E'/E^2,\\
 \tfrac12R'
 &=-\nu W+\Pi_2-\mu(-2\nu Z+2\Pi_1)-\nu\mu^2X\notag\\
 &=-\nu(W-2\mu Z+\mu^2X)+(\Pi_2-2\mu\Pi_1).
 \label{eq:expandmomentderivative}
\end{align}
Consequently
\begin{align}
 D_R&=\|A^{1/2}(A-\mu)u\|_2^2
     =\sum_k|k|^2(|k|^2-\mu)^2|\widehat u(k)|^2\notag\\
    &=W-2\mu Z+\mu^2X,\\
 \mathcal F&=\Pi_2-2\mu\Pi_1,\qquad
 \tfrac12R'+\nu D_R=\mathcal F.
 \label{eq:expandresidualbalance}
\end{align}
The derivative of the minimizer introduces no extra uncancelled term,
because its stationarity condition is
$\langle(A-\mu)u,u\rangle=X-\mu E=0$.
The centred sum avoids subtracting nearly equal $Z$ and $X^2/E$ in
floating-point arithmetic. At zero set $R=D_R=0$. The bound $0\leq R\leq Z$
proves continuity there in $H^2$, not in an unspecified weaker topology.

\subsection{All-order regularity, compactness and uniqueness}
\label{app:smoothnessexpansion}

Weighted convolution estimates propagate every Sobolev order and justify
the compactness step in \eqref{eq:detail058}--\eqref{eq:detail059} and
Appendix~\ref{app:pdepassage}. Comparison supplies a cutoff-independent
function $M(t)$ with $\|u_N(t)\|_{\mathcal A^0}\leq M(t)$.
Before acceptance use $M=A_0+y$, and afterwards use the Wiener barrier
$M\leq\eta<\nu$. Thus $\int_0^T M^2<\infty$ for every finite $T$.

The triangle inequality in $\ell^2$ proves the convolution inequality:
\begin{align}
 \left(\sum_k\left|\sum_p a_p b_{k-p}\right|^2\right)^{1/2}
 &\leq\sum_p|a_p|\left(\sum_k|b_{k-p}|^2\right)^{1/2}
 =\|a\|_{\ell^1}\|b\|_{\ell^2}.
 \label{eq:expandyoungsequence}
\end{align}
For integer $m\geq1$, convexity gives
$|p+q|^m\leq2^{m-1}(|p|^m+|q|^m)$.
Also $|z\otimes w|_F^2=\sum_{i,j}|z_i|^2|w_j|^2=|z|^2|w|^2$.
Applying \eqref{eq:expandyoungsequence} to the two weighted convolutions gives
\begin{align}
 \|u\otimes u\|_{\dot H^m}
 &\leq2^{m-1}\bigl(\|u\|_{\dot H^m}\|u\|_{\mathcal A^0}
                 +\|u\|_{\mathcal A^0}\|u\|_{\dot H^m}\bigr)
 =2^m\|u\|_{\mathcal A^0}\|u\|_{\dot H^m}.
 \label{eq:expandweightedproduct}
\end{align}
Set $H_m=\|u_N\|_{\dot H^m}^2$. Moving the Fourier derivative in
$\nabla\cdot(u_N\otimes u_N)$ to the energy-test factor gives
\begin{align}
 \tfrac12H_m'+\nu H_{m+1}
 &\leq H_{m+1}^{1/2}\|u_N\otimes u_N\|_{\dot H^m}
 \leq2^m M H_m^{1/2}H_{m+1}^{1/2},\\
 2^m M H_m^{1/2}H_{m+1}^{1/2}
 &\leq\tfrac\nu2H_{m+1}+\frac{4^m}{2\nu}M^2H_m,\\
 H_m'+\nu H_{m+1}&\leq\frac{4^m}{\nu}M^2H_m,\\
 H_m(t)&\leq H_m(0)
 \exp\!\left(\frac{4^m}{\nu}\int_0^tM(s)^2\dd s\right).
 \label{eq:expandallorders}
\end{align}
For the last step differentiate
$H_m(t)\exp[-(4^m/\nu)\int_0^tM^2]$ and integrate its nonpositive
derivative. These bounds are uniform in $N$ for each fixed smooth datum.
They do not bound every high Sobolev norm uniformly over the whole
$\mathcal A^2$ perturbation ball. Smoothness makes each $H_m(0)$ finite,
which is the quantifier needed by the theorem.

The Fourier Cauchy--Schwarz inequality gives the embedding explicitly:
\begin{align}
 \|f\|_\infty
 &\leq\sum_k|\widehat f(k)|
 \leq\left[\sum_k(1+|k|^2)^{-2}\right]^{1/2}\|f\|_{H^2}.
 \label{eq:expandembedding}
\end{align}
The cubical ring $|k|_\infty=r$ contains
$(2r+1)^3-(2r-1)^3=24r^2+2$ points, each contributing at most
$(1+r^2)^{-2}$. Comparison with $\sum_{r\geq1}r^{-2}$ proves finiteness.
The main-text $H^1$ and $L^2_tH^2$ bounds now imply
\begin{align}
 \|\partial_tu_N\|_{L^2_tL^2_x}
 &\leq\nu\|Au_N\|_{L^2_tL^2_x}
 +C\|u_N\|_{L^2_tH^2_x}\|u_N\|_{L^\infty_tH^1_x}<C_T,\\
 |\widehat u_N(k,t)-\widehat u_N(k,s)|
 &\leq|t-s|^{1/2}\|\partial_tu_N\|_{L^2_tL^2_x}.
 \label{eq:expandtimecompactness}
\end{align}
For finitely many $k$, first choose subsequences on successive finite
dyadic time grids, using boundedness of their finitely many real and
imaginary coordinates. Take the diagonal subsequence. For a given
$\epsilon>0$, choose a grid spacing for which the common modulus is below
$\epsilon/3$. Convergence at the finitely many grid points then makes
two subsequence elements differ by less than $\epsilon/3$ there. Comparing
an arbitrary time to its nearest grid point bounds their difference by
three such terms, hence by $\epsilon$. The subsequence is uniformly Cauchy
on $[0,T]$ and therefore uniformly convergent. Select nested
subsequences for larger cubes and take their diagonal. The tail satisfies
\begin{align}
 \int_0^T\|(I-P_K)u_N\|_{H^1}^2\dd t
 &\leq(1+K^2)^{-1}\int_0^T\|u_N\|_{H^2}^2\dd t.
 \label{eq:expandcompacttail}
\end{align}
Finite-mode convergence followed by $K\to\infty$ proves strong
$L^2_tH^1_x$ convergence. The quadratic term passes to the limit because
\begin{align}
 u_N\otimes u_N-u\otimes u
 &=(u_N-u)\otimes u_N+u\otimes(u_N-u),\\
 \|u_N\otimes u_N-u\otimes u\|_{L^1_tL^1_x}
 &\leq\|u_N-u\|_{L^2_tL^2_x}
      (\|u_N\|_{L^2_tL^2_x}+\|u\|_{L^2_tL^2_x})\longrightarrow0.
 \label{eq:expandquadraticlimit}
\end{align}
Testing its divergence against a smooth field moves the derivative onto
that field. Uniform coefficient convergence includes $0$ and $T_*$.
At $T_*$, first pass to the limit in
$\sum_{|k|_\infty\leq K}|\widehat u_N(k,T_*)|$, then increase $K$.
Each finite sum is at most $\eta$, so its nonnegative increasing limit
is at most $\eta$. This proves the endpoint Wiener bound.

The same argument using \eqref{eq:expandallorders} at order $m+1$ gives
convergence in $C([0,T],H^m)$ at every integer $m$. Apply
\eqref{eq:expandembedding} to spatial derivatives to obtain their continuity.
The momentum equation and Fourier pressure formula give time derivatives
inductively. This supplies the asserted global smooth solution.

For uniqueness let $w=u-v$. The nonlinear difference is
$(u\cdot\nabla)w+(w\cdot\nabla)v$. The first term integrates to zero
against $w$. Integrating the second term by parts gives
\begin{align}
 \tfrac12\frac{\dd}{\dd t}\|w\|_2^2+\nu\|\nabla w\|_2^2
 &=-\int w_iw_j\partial_jv_i=\int v_iw_j\partial_jw_i\notag\\
 &\leq\|v\|_\infty\|w\|_2\|\nabla w\|_2\notag\\
 &\leq\tfrac\nu2\|\nabla w\|_2^2
       +\frac{\|v\|_\infty^2}{2\nu}\|w\|_2^2.
 \label{eq:expanduniqueness}
\end{align}
The integrating factor and $w(0)=0$ give $w=0$. This calculation displays
the viscosity spent by Young's inequality. Uniqueness among the smooth
solutions constructed here already removes subsequence dependence.

\subsection{Velocity gradients, curl and physical production}
\label{app:tensorexpansion}

The curl equation and matrix contractions below give the local production
and invariant identities in Equations~\eqref{eq:omegaU}--\eqref{eq:nonlinearU},
\eqref{eq:jhtdbgeometry} and \eqref{eq:detail176}. Here
$J_{ij}=\partial_j u_i$, $S=(J+J^\top)/2$ and
$\Omega=(J-J^\top)/2$. The alternating symbol $\epsilon_{ijk}$ is zero
when indices repeat, $+1$ for even permutations of $(1,2,3)$ and $-1$
for odd permutations. Contracting
$\epsilon_{ijk}\epsilon_{k\ell m}
=\delta_{i\ell}\delta_{jm}-\delta_{im}\delta_{j\ell}$ gives
\begin{align}
 (u\times\omega)_i
 &=\epsilon_{ijk}u_j\epsilon_{k\ell m}\partial_\ell u_m
 =u_j\partial_i u_j-u_j\partial_j u_i\notag\\
 &=\partial_i(|u|^2/2)-(u\cdot\nabla)u_i.
 \label{eq:expandvectoridentity}
\end{align}
This proves the vector identity used for the Beltrami centre, including its
sign. A second contraction gives
\begin{align}
 [\nabla\times(u\times\omega)]_i
 &=\partial_j(u_i\omega_j-u_j\omega_i)
 =\omega_j\partial_ju_i-u_j\partial_j\omega_i,
 \label{eq:expandcurltransport}
\end{align}
since $\partial_j u_j=0$ and
$\partial_j\omega_j=\epsilon_{j\ell m}\partial_j\partial_\ell u_m=0$.
Curling \eqref{eq:physicalproblem} gives
$\partial_t\omega+(u\cdot\nabla)\omega=(\omega\cdot\nabla)u+\nu\Delta\omega$.
Antisymmetry implies
$\omega^\top\Omega\omega=-\omega^\top\Omega\omega=0$.
Dotting the vorticity equation with $\omega$ therefore gives
\begin{align}
 (\partial_t+u\cdot\nabla)\frac{|\omega|^2}{2}
 &=\omega\cdot S\omega+\nu\Delta\frac{|\omega|^2}{2}
   -\nu\sum_{i,j}(\partial_j\omega_i)^2,\\
 \frac12\frac{\dd}{\dd t}\int|\omega|^2
 &=\int\omega\cdot S\omega-\nu\int|\nabla\omega|^2.
 \label{eq:expandvorticitybalance}
\end{align}
The diffusion term follows by expanding
$\omega_i\Delta\omega_i
=\partial_j(\omega_i\partial_j\omega_i)-(\partial_j\omega_i)^2$.
For solenoidal Fourier coefficients,
$|ik\times\widehat u|^2=|k|^2|\widehat u|^2-|k\cdot\widehat u|^2
=|k|^2|\widehat u|^2$, so $\int|\omega|^2=X$.
Positive stretching is local enstrophy production, not positive total
production after viscosity, and not positivity of the spectral source
$\mathcal F$.

The matrix $\Omega$ acts by $\Omega z=\omega\times z/2$.
The vector triple product gives
\begin{align}
 \Omega^2&=(\omega\omega^\top-|\omega|^2I)/4,
 &\operatorname{tr}\Omega^2&=-|\omega|^2/2,\\
 \operatorname{tr}J^2&=\operatorname{tr}S^2+\operatorname{tr}\Omega^2
                    =|S|_F^2-|\omega|^2/2,\\
 \operatorname{tr}J^3
 &=\operatorname{tr}S^3+3\operatorname{tr}(S\Omega^2)\notag\\
 &=\operatorname{tr}S^3+\tfrac34\omega\cdot S\omega
                 -\tfrac34|\omega|^2\operatorname{tr}S.
 \label{eq:expandgradienttraces}
\end{align}
Cyclicity makes the three cubic terms with two $\Omega$ factors equal.
The terms with one $\Omega$ have trace zero because $S^2$ is symmetric
and $\Omega$ is skew. Also $\operatorname{tr}\Omega^3=0$.
Incompressibility makes $\operatorname{tr}S=0$.
For trace-zero eigenvalues, expanding
$(\lambda_1+\lambda_2+\lambda_3)^2=0$ gives
$\sum_{i<j}\lambda_i\lambda_j=-\frac12\sum_i\lambda_i^2$.
The factorisation
$a^3+b^3+c^3-3abc=(a+b+c)(a^2+b^2+c^2-ab-bc-ca)$ gives
$\sum_i\lambda_i^3=3\lambda_1\lambda_2\lambda_3$.
Thus the characteristic polynomial is $\lambda^3+Q_J\lambda+\mathcal R_J$,
where $Q_J=-\operatorname{tr}J^2/2$ and
$\mathcal R_J=-\operatorname{tr}J^3/3=-\det J$.
For a measured gradient with nonzero trace $d_J$, its actual second and
third characteristic coefficients instead are
$(d_J^2-\operatorname{tr}J^2)/2$ and
$-[d_J^3-3d_J\operatorname{tr}J^2+2\operatorname{tr}J^3]/6$.
This distinction matters when interpreting discrete divergence defects.

For the cyclic centre the three stretching contributions are
\begin{align}
 2S_{12}\omega_1\omega_2&=a\cos y\,(-a\cos z)(-a\cos x)
                        =a^3\cos x\cos y\cos z,\\
 2S_{13}\omega_1\omega_3&=a\cos x\,(-a\cos z)(-a\cos y)
                        =a^3\cos x\cos y\cos z,\\
 2S_{23}\omega_2\omega_3&=a\cos z\,(-a\cos x)(-a\cos y)
                        =a^3\cos x\cos y\cos z.
 \label{eq:expandstretchpairs}
\end{align}
Their sum gives \eqref{eq:detail149}. Rowwise Cauchy--Schwarz gives
$|Sz|^2\leq\sum_i(\sum_jS_{ij}^2)(\sum_j|z_j|^2)=|S|_F^2|z|^2$,
which proves $\|S\|_{\rm op}\leq|S|_F$ in \eqref{eq:detail153}.
Applying the arithmetic--geometric mean inequality to the nonnegative
$r_i^2$ proves the bound in \eqref{eq:detail154}.

If $S=E\operatorname{diag}(\lambda_i)E^\top$, vorticity coordinates in
that basis are $E^\top\omega$, not $E\omega$. Component $i$ equals
$\sum_jE_{ji}\omega_j$. For $\omega\ne0$,
$\sum_i|\widehat\omega\cdot e_i|^2=|E^\top\widehat\omega|^2=1$.
For a repeated eigenvalue with index set $I$, the basis-independent
quantity is $\sum_{i\in I}|\widehat\omega\cdot e_i|^2$.
This explains the column contraction and its degeneracy qualification in
Appendix~\ref{app:physicalidentities}.

\subsection{Uniform heat expansion and nonlinear remainder}
\label{app:heatexpansion}

Heat smoothing controls the nonlinear remainder in
Equations~\eqref{eq:weightedheatbilinear}--\eqref{eq:secondweightedremainder}
uniformly in time, which justifies integration of the asymptotic action.
Integer dilation $n\geq1$ preserves periodicity. Setting
$u_n(t,x)=v(\tau,y)$ with $y=nx$, $\tau=\nu n^2t$ gives
$\partial_tu_n=\nu n^2\partial_\tau v$, $\Delta_xu_n=n^2\Delta_yv$,
and $(u_n\cdot\nabla_x)u_n=n(v\cdot\nabla_y)v$.
Division by $\nu n^2$ gives $\varepsilon=(\nu n)^{-1}$.
Normalised integrals are unchanged under integer periodic dilation.

Using
$(1+|p+q|^2)^{s/2}\leq C_s[(1+|p|^2)^{s/2}+(1+|q|^2)^{s/2}]$
in the proof of \eqref{eq:expandweightedproduct}, and using the finite sum
$\sum_k(1+|k|^2)^{-s}$ for $s>3/2$, gives
$\|f\otimes g\|_{H^s}\leq C_s\|f\|_{H^s}\|g\|_{H^s}$.
The remaining divergence derivative costs $|k|e^{-r|k|^2}$.
Write $|k|^2=1+z$, use $\sqrt{1+z}\leq1+\sqrt z$, and maximise
$\sqrt z e^{-rz}$ at $z=1/(2r)$. This proves
\begin{align}
 |k|e^{-r|k|^2}&\leq e^{-r}[1+(2er)^{-1/2}],\\
 \|e^{-rA}B(f,g)\|_{H^s}
 &\leq C_s e^{-r}(1+r^{-1/2})\|f\|_{H^s}\|g\|_{H^s}.
 \label{eq:expandheatsmoothing}
\end{align}
The output has zero mean because the nonlinearity is a divergence.
The time integral in \eqref{eq:weightedduhamel} gives a finite constant
$B_s$ such that the bilinear Duhamel operator $\mathcal K$ satisfies
$\|\mathcal K(f,g)\|_{X_s}\leq B_s\|f\|_{X_s}\|g\|_{X_s}$.
Put $a_s=\|U\|_{H^s}$. For $\|v\|_{X_s},\|w\|_{X_s}\leq2a_s$,
\begin{align}
 \|v_0-\varepsilon\mathcal K(v,v)\|_{X_s}
 &\leq a_s+4|\varepsilon|B_sa_s^2,\\
 \|\mathcal K(v,v)-\mathcal K(w,w)\|_{X_s}
 &\leq B_s(\|v\|_{X_s}+\|w\|_{X_s})\|v-w\|_{X_s}.
 \label{eq:expandcontractionradius}
\end{align}
Here $v_0=e^{-\tau}U$. Choosing
$|\varepsilon|\leq(8B_sa_s)^{-1}$ keeps the image inside the ball and
makes the Lipschitz constant at most $1/2$. Successive iterates have
geometrically decreasing differences, so completeness gives a fixed point.
With $d=v-v_0$ and $v_1=-\mathcal K(v_0,v_0)$,
\begin{align}
 \|d\|_{X_s}&\leq4B_sa_s^2|\varepsilon|,\\
 v-v_0-\varepsilon v_1
 &=-\varepsilon\{\mathcal K(d,v)+\mathcal K(v_0,d)\},\\
 \|v-v_0-\varepsilon v_1\|_{X_s}
 &\leq12B_s^2a_s^3\varepsilon^2.
 \label{eq:expandsecondremainder}
\end{align}
This proves a uniform remainder, not merely a finite-time formal series.

Set $z=e^\tau v=U+\varepsilon w_1+\varepsilon^2w_2$, where
$w_1=-\tau e^{-\tau}V$ and $w_2$ is uniformly bounded in $H^s$.
Writing $q=(A-1)z$, exact minimisation gives
\begin{align}
 R(z)&=\|q\|_2^2-\frac{\langle q,z\rangle^2}{\|z\|_2^2},
 &q&=\varepsilon w_1+\varepsilon^2(A-1)w_2,\\
 \langle q,z\rangle
 &=\varepsilon\langle w_1,U\rangle+O(\varepsilon^2)=O(\varepsilon^2),\\
 \|q\|_2^2
 &=\varepsilon^2\|w_1\|_2^2+
 2\varepsilon^3\langle w_1,(A-1)w_2\rangle
 +\varepsilon^4\|(A-1)w_2\|_2^2,\\
 R(v)&=e^{-2\tau}R(z)
 =\tfrac34\varepsilon^2\tau^2e^{-4\tau}
      +O(|\varepsilon|^3e^{-2\tau}).
 \label{eq:expandvarianceasymptotic}
\end{align}
Here $AU=U$, $AV=2V$, $\langle U,V\rangle=0$ and
$\|V\|_2^2=12/16=3/4$. For sufficiently small $|\varepsilon|$,
$\|z\|_2\geq\|U\|_2/2$ justifies the denominator uniformly.
Also $\mu(z)=1+O(\varepsilon^2)$, so
$A^{1/2}(A-\mu)z=\varepsilon A^{1/2}w_1+O_{L^2}(\varepsilon^2)$.
Because $\|A^{1/2}w_1\|_2^2=2\|w_1\|_2^2$, the leading coefficient
of $D_R(v)$ is $3\varepsilon^2\tau^2e^{-4\tau}/2$.

The fractional-power remainder cannot be obtained by dividing by a leading
term that vanishes at $\tau=0$. For $0<p<1$, $x\geq0$, and $y>0$,
the derivative of $(x+y)^p-y^p$ is
$p[(x+y)^{p-1}-y^{p-1}]\leq0$. Its limit as $y\downarrow0$ is $x^p$,
so $(x+y)^p\leq x^p+y^p$. Applying this to the difference of two ordered
nonnegative numbers proves $|a^p-b^p|\leq|a-b|^p$.
At $p=2/3$, the integrated error is bounded by
$C\varepsilon^2\int_0^\infty e^{-4\tau/3}\dd\tau$. The leading integral is
\begin{align}
 \left(\tfrac34\right)^{2/3}|\varepsilon|^{4/3}
 \int_0^\infty\tau^{4/3}e^{-8\tau/3}\dd\tau
 &=\left(\tfrac34\right)^{2/3}|\varepsilon|^{4/3}
   \left(\tfrac38\right)^{7/3}\Gamma(7/3).
 \label{eq:expandgammachange}
\end{align}
Here $\Gamma(s)=\int_0^\infty r^{s-1}e^{-r}\dd r$ for $s>0$.
The substitution is $r=8\tau/3$, $\dd\tau=3\dd r/8$, and the relative
error is $O(|\varepsilon|^{2/3})$. The relations $R(u_n)=n^4R(v)$ and
$\dd t=\dd\tau/(\nu n^2)$ give $\nu^{-7/3}n^{-2/3}$ in
\eqref{eq:actionasymptotic}.
For the signed source, $R(v(0))=R(U)=0$ and $R(v(\tau))\to0$.
Integrate \eqref{eq:expandresidualbalance} and use
$\int_0^\infty\tau^2e^{-4\tau}\dd\tau=2/4^3=1/32$ to obtain
$3\varepsilon^2/64$ in scaled variables. Since $D_R(u_n)=n^6D_R(v)$,
multiplication by $\nu\dd t$ gives $n^4$, hence the coefficient
$3n^2/(64\nu^2)$ in \eqref{eq:fluxasymptotic}.

\subsection{Discrete transforms and time integration}
\label{app:discreteexpansion}

Discrete orthogonality and the RK4 expansion determine the transform
normalisation and time update in the numerical Methods and
Equations~\eqref{eq:detail222}--\eqref{eq:detail223}. Let
$G_n=\{2\pi j/n:j\in\{0,\ldots,n-1\}^3\}$ and let $\Lambda_n$
contain one integer representative per frequency residue class. For even
$n$ use $-n/2,\ldots,n/2-1$ on each axis. Then
\begin{align}
 \widehat f_k&=n^{-3}\sum_j f_j e^{-2\pi i k\cdot j/n},
 &f_j&=\sum_{k\in\Lambda_n}\widehat f_k e^{2\pi i k\cdot j/n},\\
 n^{-3}\sum_j e^{2\pi i(k-\ell)\cdot j/n}
 &=\prod_{d=1}^3\left[n^{-1}\sum_{j_d=0}^{n-1}
                  e^{2\pi i(k_d-\ell_d)j_d/n}\right]
 =\mathbb I_{\{k=\ell\}},\\
 n^{-3}\sum_j|f_j|^2&=\sum_{k\in\Lambda_n}|\widehat f_k|^2.
 \label{eq:expanddiscreteparseval}
\end{align}
Each one-dimensional sum is a geometric progression with ratio different
from one unless its frequency is zero modulo $n$. Expanding a pointwise
product gives
\begin{align}
 \widehat{fg}_k
 &=\sum_{p+q\equiv k\ ({\rm mod}\ n)}\widehat f_p\widehat g_q.
 \label{eq:expandalias}
\end{align}
If every factor has $|k|_\infty\leq K$, four factors have total frequency
norm at most $4K$. For $4K<n$, no nonzero such frequency equals a multiple
of $n$ in every coordinate. Its discrete mean equals its continuous mean.
This argument applies to cutoff trajectories, not unrestricted network
outputs.

For the projected ODE put $f=\mathcal L(u)$, $J_f=D\mathcal L(u)$ and
$H_f=D^2\mathcal L(u)$. These are derivatives of a finite-dimensional
vector field, not the physical gradient tensor. Their actions are
\begin{align}
 J_f z&=-\nu Az-P_K\{B(u,z)+B(z,u)\},\\
 H_f(z,w)&=-P_K\{B(z,w)+B(w,z)\},
 &D^3\mathcal L&=0.
 \label{eq:expandrkderivatives}
\end{align}
Taylor expansion is exact to second order in the argument of $\mathcal L$.
Writing $h=\Delta t$, the stages expand as
\begin{align}
 k_1={}&f,\\
 k_2={}&f+\tfrac h2J_ff+\tfrac{h^2}{8}H_f(f,f),\\
 k_3={}&f+\tfrac h2J_ff
 +h^2[\tfrac14J_f^2f+\tfrac18H_f(f,f)]\notag\\
 &+h^3[\tfrac1{16}J_fH_f(f,f)+\tfrac18H_f(f,J_ff)]+O(h^4),\\
 k_4={}&f+hJ_ff+h^2[\tfrac12J_f^2f+\tfrac12H_f(f,f)]\notag\\
 &+h^3[\tfrac14J_f^3f+\tfrac18J_fH_f(f,f)
                   +\tfrac12H_f(f,J_ff)]+O(h^4).
 \label{eq:expandrkstages}
\end{align}
For example $k_3=f+\frac h2J_fk_2+\frac{h^2}{8}H_f(k_2,k_2)$
produces each displayed coefficient. Exact time derivatives are
$u'=f$, $u''=J_ff$, $u'''=J_f^2f+H_f(f,f)$ and
$u''''=J_f^3f+J_fH_f(f,f)+3H_f(f,J_ff)$.
Substitution into the weights $1,2,2,1$ gives
\begin{align}
 u^{n+1}={}&u+hf+\tfrac{h^2}{2}J_ff
 +\tfrac{h^3}{6}[J_f^2f+H_f(f,f)]\notag\\
 &+\tfrac{h^4}{24}[J_f^3f+J_fH_f(f,f)+3H_f(f,J_ff)]+O(h^5).
 \label{eq:expandrkorder}
\end{align}
Thus the local error has order five for a fixed resolved ODE. This does not
give fourth-order convergence of a separately sampled action quadrature,
or cutoff-independent error constants.

The fixed-datum convergence example in Section~\ref{sec:physics} has the
explicit non-Beltrami datum
\begin{align}
 u_0(x,y,z)=\tfrac32\bigl(&-\cos(x+y)-\sin(2x+y),\notag\\
                         &\cos x+\cos(x+y)+2\sin(2x+y),\notag\\
                         &\cos(x+y)+\sin(2x+y)\bigr).
 \label{eq:expandfixedtriad}
\end{align}
A coefficient $a+ib$ at positive $k$ and its conjugate at $-k$ contribute
$2a\cos(k\cdot x)-2b\sin(k\cdot x)$. Applying this to
$(1,0,0),(1,1,0),(2,1,0)$ gives the displayed field.
Its first two component derivatives cancel and its third component has
zero $z$ derivative. It is solenoidal and two-dimensional in spatial
dependence with three velocity components, not a second genuinely
three-dimensional centre theorem.
For the vector-field visualisations, let $L=2\pi$, reduce a point $x$
modulo $L$, and put $b_j=\lfloor nx_j/L\rfloor$ and
$\xi_j=nx_j/L-b_j$. The periodic trilinear interpolant of grid values $v$
is
\begin{align}
 v_{\rm int}(x)
 &=\sum_{\delta\in\{0,1\}^3}
 \prod_{j=1}^3\bigl[(1-\xi_j)^{1-\delta_j}\xi_j^{\delta_j}\bigr]
 v_{(b+\delta)\bmod n}.
 \label{eq:expandtrilinear}
\end{align}
The weights are nonnegative and sum to
$\prod_j[(1-\xi_j)+\xi_j]=1$, so interpolation is a convex combination
of the eight cell corners. The curve equation is
$\dd x/\dd s=\pm v_{\rm int}(x)/|v_{\rm int}(x)|$, with zero direction
if the denominator is below $10^{-12}$. The parameter $s$ is arclength,
not physical evolution time. The renderer uses the RK4 stages above with
step $0.045$ for at most 105 steps in each direction.

Seed candidates lie on a stride-three subgrid. Its 70th-percentile magnitude
threshold defines the admissible set. After choosing the largest-magnitude
candidate, each next candidate maximises
$d_{\min}^2(0.35+0.65m/m_{\max})$, where $m$ is its magnitude and
$d_{\min}$ is its distance to the nearest chosen candidate in the periodic
metric. Each coordinate difference $d$ is replaced by $\min(|d|,L-|d|)$
before squaring and summing. The selected seeds are shifted by half a grid
cell in each coordinate. This specifies the deterministic placement rather
than inferring it from the picture. The centre comparison uses $n=64$,
whereas the selected dynamic-field comparison uses $n=48$.

\subsection{Network indices, loss factors and acquisition}
\label{app:neuralexpansion}

Indexed contractions and averaging factors make the network and objective
in Section~\ref{sec:pinomethod} and Appendix~\ref{app:neuralobjective}
explicit. For layer input $z_{i,x}$, the learned
branch computes $\widehat b_{o,k}=\sum_i W_{io,k}\widehat z_{i,k}$ on its
specified block and zero elsewhere, followed by the real inverse transform.
The pointwise branch is affine, $p_{o,x}=\sum_iP_{oi}z_{i,x}+b_o$.
The lift and output maps also have trainable biases. The layer is
$\operatorname{GELU}(\operatorname{GN}(b+p))$. For a four-channel group
$I_g$, the mean and variance in \eqref{eq:networkoperations} are
\begin{align}
 \mu_g&=(4n^3)^{-1}\sum_{c\in I_g,x}z_{c,x},
 &v_g&=(4n^3)^{-1}\sum_{c\in I_g,x}(z_{c,x}-\mu_g)^2.
 \label{eq:expandgroupnorm}
\end{align}
The denominator is the number of group entries, not that number minus one.
It differs from the training-set sample standard deviation at input.

The spectral branch uses an unnormalised forward transform, whereas
physical losses use \eqref{eq:expanddiscreteparseval}. For a field difference
$e$, Parseval gives
\begin{align}
 \mathfrak M_3(e)&=\tfrac13\sum_{c,k}|\widehat e_{c,k}|^2,\\
 \frac15\sum_{j=0}^4\mathfrak M_3(\nabla_{\!F} e_j)
 &=\frac1{15}\sum_{j,c,k}|k|^2|\widehat e_{j,c,k}|^2,\\
 \mathcal L_{\rm spectral}
 &=n^{-3}\left[\frac15\sum_{j=0}^4\mathfrak M_3(\nabla_{\!F} e_j)\right].
 \label{eq:expandlossnormalisation}
\end{align}
Here $\nabla_{\!F}e$ has components $\mathcal F_n^{-1}(ik_j\widehat e)$
without taking the real part. Its norm sums complex squared magnitudes
over derivative directions while dividing by three velocity components.
It differs from the real-part differentiator below at Nyquist modes.
The extra $n^{-3}$ comes from averaging an
already normalised Fourier array and must remain when reproducing the
trained objective. A unit check is $e_1=\sin x$, $e_2=e_3=0$ at all five
checkpoints. Two coefficients of magnitude $1/2$ give
$\mathcal L_{\rm field}=1/6$ and $\mathcal L_{\rm spectral}=1/(6n^3)$.

The native-grid differentiators are
$D_jv=\operatorname{Re}\mathcal F_n^{-1}(ik_j\widehat v)$ and
$\Delta_nv=\operatorname{Re}\mathcal F_n^{-1}(-|k|^2\widehat v)$.
Taking the real part fixes the even-grid Nyquist convention used by the
implementation. The nonlinear product is $\sum_jv_jD_jv$ before masking
and Leray projection. At $k=0$, the neural generator uses $\mathbb P_0=I$,
whereas the analytic mean-zero Galerkin problem removes that mode.
Neural outputs need not be solenoidal or mean zero.

Expanding and cancelling cross terms gives the midpoint identity
\begin{align}
 \frac{\|b\|_2^2-\|a\|_2^2}{2\delta t}
 &=\left\langle\frac{b-a}{\delta t},\frac{a+b}{2}\right\rangle.
 \label{eq:expandmidpointenergy}
\end{align}
It explains the energy penalty and its factor one half, but does not make
that penalty an exact balance for arbitrary aliased, non-solenoidal outputs.

The predicted action floors its denominator:
$\widetilde\mu=X/\max(E,10^{-20})$ and
$\widetilde R=\sum_{c,k}(|k|^2-\widetilde\mu)^2|\widehat v_{c,k}|^2$.
For $E\geq10^{-20}$ this is the exact centred grid residual. Below that
threshold it is the implemented regularised observable, not the exact
minimisation quotient. The proxy integrates
$\max(\widetilde R,0)^{2/3}$ at the five checkpoints.
For a tail unit check take $v_1=\sin(2x)+\sin(3x)$, $v_2=v_3=0$, $K=2$.
The numerator is $2^4/2=8$ and the denominator is
$(2^4+3^4)/2=97/2$, so $\mathcal T_2=16/97$, not one.

For acquisition let $r_i\in[0,1]^{10}$ be the 16 parameter vectors and
$d_{ij}=\|r_i-r_j\|_2/\sqrt{10}$. The minimised objective is
\begin{align}
 L_{\rm acquisition}
 &=-\frac1{16}\sum_{i=1}^{16}s_{\rm pred}(r_i)
 +\frac{0.08}{16\cdot15}\sum_{i\ne j}
       \exp[-(d_{ij}/0.12)^2].
 \label{eq:expandacquisition}
\end{align}
Adam takes 31 steps of size $0.03$, clamping each parameter to $[0,1]$
after each step. It does not differentiate through Galerkin rechecks or
exact comparison. Its componentwise moment updates are
$m_t=\beta_1m_{t-1}+(1-\beta_1)g_t$ and
$v_t=\beta_2v_{t-1}+(1-\beta_2)g_t^2$, with $m_0=v_0=0$.
Bias correction gives $\widehat m_t=m_t/(1-\beta_1^t)$,
$\widehat v_t=v_t/(1-\beta_2^t)$, followed by parameter increment
$-\eta\widehat m_t/(\sqrt{\widehat v_t}+10^{-8})$.
Both training and acquisition use $(\beta_1,\beta_2)=(0.9,0.999)$.
Training also clips the gradient as specified in
Appendix~\ref{app:networkdetail}. Fractional-part parameter maps are
differentiable away from wrap points. Their piecewise derivatives do not
establish global optimality of the search.

\subsection{Quadrature, quantiles and numerical claims}
\label{app:statisticsexpansion}

Interpolation and resampling yield the estimators in
Appendix~\ref{app:statistics} and the quadratures in
Equations~\eqref{eq:pinofivepointproxy}--\eqref{eq:densequadraturedetail}.
The linear interpolant between integrand values $f_j,f_{j+1}$ on an
interval of width $h_j$ is $f_j+(f_{j+1}-f_j)s/h_j$. Its integral from
$0$ to $h_j$ is $h_j(f_j+f_{j+1})/2$. Summing proves the trapezoidal
formulas. Refining the ODE states at fixed five output times does not refine
this quadrature. Also $R^{2/3}$ need not have two bounded derivatives at a
zero of $R$, so the smooth-integrand order estimate cannot be assumed there.

For sorted observations $x_{(0)}\leq\cdots\leq x_{(n-1)}$, a linearly
interpolated quantile at probability $p$ uses $r=(n-1)p$,
$j=\lfloor r\rfloor$, $a=r-j$, and returns
$(1-a)x_{(j)}+ax_{(j+1)}$, with endpoints handled directly.
For histogram counts $n_b$ and widths $h_b$, probability per bin is
$n_b/n$, whereas density is $n_b/(nh_b)$. Only the latter integrates to
one after multiplication by bin widths. A logarithmic display does not
change those normalisations. Empty bins do not become observations through
display interpolation.

For paired errors $(x_i,y_i)$, each bootstrap draw samples indices
$i_1,\ldots,i_n$ with replacement to compute
$\operatorname{median}_j(x_{i_j}-y_{i_j})$. Percentiles over repeated draws
give intervals for the median casewise difference, not for independently
resampled medians or a percentage reduction. For recall, rerank both target
and predicted scores within each resample before evaluating
\eqref{eq:pinorecall}, because retaining the original rankings changes the
statistic.

\end{document}